\numberwithin{equation}{section}
\theoremstyle{plain}
\newtheorem{theorem}{Theorem}[section]
\newtheorem{lemma}[theorem]{Lemma}
\newtheorem{corollary}[theorem]{Corollary}
\newcommand\numberthis{\addtocounter{equation}{1}\tag{\theequation}}
\DeclareMathOperator*{\argmin}{arg\,min}
\newcommand{\pfa}{\Pro_{\rm FA}}
\newcommand{\pmd}{\Pro_{\rm MD}}
\newcommand{\LR}{\mathsf{L}}                   
\newcommand{\Exp}{\mathsf{E}}                  
\newcommand{\Pro}{\mathsf{P}}                  
\newcommand{\Hyp}{\mathsf{H}}                  
\newcommand{\f}{\mathsf{f}}
\newcommand{\LLR}{{\rm LLR}}
\newcommand{\ind}[1]{\mathbbm{1}_{\{#1\}}}     
\newcommand{\george}[1]{{}}
\begin{document}

\begin{frontmatter}
\title{Detecting Sparse Mixtures: Rate of Decay of Error Probability \thanksref{T1}}
\runtitle{Detecting Sparse Mixtures}

\begin{aug}
\author{\fnms{Jonathan G.} \snm{Ligo}\thanksref{m1}\ead[label=e1]{ligo2@illinois.edu}},
\author{\fnms{George V.} \snm{Moustakides}\thanksref{m2}\ead[label=e2]{moustaki@upatras.gr}}
\and
\author{\fnms{Venugopal V.} \snm{Veeravalli}\thanksref{m1}
\ead[label=e3]{vvv@illinois.edu}
}

\thankstext{T1}{Supported by the US National Science Foundation under grants CIF\,1514245 and CIF\,1513373.  This paper was partially presented at the 41st IEEE International Conference on Acoustics, Speech and Signal Processing (ICASSP) 2016 as \cite{icassp}.}
\runauthor{J. G. Ligo et al.}

\affiliation{University of Illinois at Urbana-Champaign\thanksmark{m1}, University of Patras\thanksmark{m2} and Rutgers University\thanksmark{m2}}

\address{Coordinated Science Laboratory\\ and\\
Department of Electrical and\\ 
Computer Engineering\\
University of Illinois at\\
Urbana-Champaign\\
Urbana, IL 61801, USA\\
\printead{e1}\\
\phantom{E-mail:\ }\printead*{e3}}

\address{Department of Electrical and\\
Computer Engineering\\
University of Patras \\
26500 Rio, Greece\\
and\\
Department of Computer Science\\
Rutgers University \\
New Brunswick, NJ 08854, USA\\
\printead{e2}
}
\end{aug}

\begin{abstract}
We study the rate of decay of the probability of error for distinguishing between a sparse signal with noise, modeled as a sparse mixture, from pure noise. This problem has many applications in signal processing, evolutionary biology, bioinformatics, astrophysics and feature selection for machine learning. 
We let the mixture probability tend to zero as the number of observations tends to infinity and derive oracle rates at which the error probability can be driven to zero for a general class of signal and noise distributions via the likelihood ratio test. In contrast to the problem of detection of non-sparse signals, we see the log-probability of error decays sublinearly rather than linearly and is characterized through the $\chi^2$-divergence rather than the Kullback-Leibler divergence for ``weak'' signals and can be independent of divergence for ``strong'' signals. Our contribution is the first characterization of the rate of decay of the error probability for this problem for both the false alarm and miss probabilities.
\end{abstract}

\begin{keyword}[class=MSC]
\kwd[Primary ]{62G10}
\kwd[; secondary ]{62G20}
\end{keyword}

\begin{keyword}
\kwd{likelihood ratio test, sparse mixture, error exponents}
\end{keyword}

\end{frontmatter}

\section{Introduction}
We consider the problem of detecting a sparse signal in noise, modeled as a mixture, where the unknown sparsity level decreases as the number of samples collected increases. Of particular interest is the case where the unknown signal strength relative to the noise power is very small. This problem has many natural applications. In signal processing, applications include detecting a signal in a multi-channel system \cite{dobrushin,ingster} and detecting covert communications \cite{donoho}. In evolutionary biology, the problem manifests in the reconstruction of phylogenetic trees in the multi-species coalescent model \cite{mosselroch}. In bioinformatics, the problem arises in the context of determining gene expression from gene ontology datasets \cite{goeman}. In astrophysics, detection of sparse mixtures is used to compare models of the cosmic microwave background to observed data \cite{cayon}. Also, statistics developed from the study of this problem have been applied in machine learning to anomaly detection on graphs \cite{saligrama} and high-dimensional feature selection when useful features are rare and weak \cite{feature}. 

Prior work on detecting a sparse signal in noise has been primarily focused on Gaussian signal and noise models, with the goal of determining the trade-off in signal strength with sparsity required for detection with vanishing probability of error. In contrast, this work considers a fairly general class of signal and noise models. Moreover, in this general class of sparse signal and noise models, we provide the first analysis of the rate at which the false alarm {(Type-I)} and miss detection {(Type-II)} error probabilities vanish with sample size. We also provide simple to verify conditions for detectability, { which are derived using simpler tools than previously used}. In the problem of testing between $n$ i.i.d. samples from two known distributions, it is well known that the rate at which the error probability decays is $e^{-c n}$ for some constant $c>0$ bounded by the Kullback-Leibler divergence between the two distributions \cite{cover,dz}. In this work, we show for the problem of detecting a sparse signal in noise that the error probability for an oracle detector decays at a slower rate determined by the sparsity level and the $\chi^2$-divergence between the signal and noise distributions, with different behaviors possible depending on the signal strength. In addition to determining the optimal trade-off between signal strength and sparsity for consistent detection, an important contribution in prior work has been the construction of adaptive (and, to some extent, distribution-free) tests that achieve the optimal trade-off without knowing the model parameters \cite{castro,caijengjin,caiwu,donoho,ingster,jager,walther}. We discuss prior work in more detail in Sec.~\ref{sec:rw}. However, the adaptive tests that have been proposed in these papers are not amenable to an analysis of the {\em rate} at which the error probability goes to zero. We show that in a Gaussian signal and noise model that an adaptive test based on the sample maximum has miss detection probability that vanishes at the optimal rate when the sparse signal is sufficiently strong.
\section{Problem Setup}
Let $\{\f_{0,n}(x)\}, \{\f_{1,n}(x)\}$ be sequences of probability density functions (PDFs) for real valued random-variables.

We consider the following sequence of composite hypothesis testing problems with sample size $n$, called the (sparse) \emph{mixture detection problem}:
\begin{align}
\Hyp_{0,n}:&~~ X_1, \ldots, X_n \sim \f_{0,n}(x) \text{ i.i.d. (null)}\\
\Hyp_{1,n}:&~~ X_1, \ldots, X_n \sim (1-\epsilon_n) \f_{0,n}(x) + \epsilon_n \f_{1,n}(x) \text{ i.i.d. (alternative)} \label{eq:altdef}
\end{align}
where $\{\f_{0,n}(x)\}$ is known, $\{\f_{1,n}(x)\}$ is from some known family $\mathcal{F}$ of sequences of PDFs, and $\{\epsilon_n\}$ is a sequence of positive numbers such that $\epsilon_n \to 0$. We will also assume $n \epsilon_n \to \infty$ so that a typical realization of the alternative is distinguishable from the null. 

Let $\Pro_{0,n},\Pro_{1,n}$ denote the probability measure under $\Hyp_{0,n},\Hyp_{1,n}$ respectively, and let $\Exp_{0,n},\Exp_{1,n}$ be the corresponding expectations, with respect to the particular $\{\f_{0,n}(x)\}$, $\{\f_{1,n}(x)\}$ and $\{\epsilon_n\}$. When convenient, we will drop the subscript $n$. Let $\LR_n \triangleq \frac{\f_{1,n}(x)}{\f_{0,n}(x)}$. When $\f_{0,n}(x) = \f_0(x)$ and $\f_{1,n}(x) = \f_0(x-\mu_n)$, we say that the model is a \emph{location model}. {For the purposes of presentation, we will assume that $\{\mu_n\}$ is a positive and monotone sequence.} When $\f_{0}(x)$ is a standard normal PDF, we call the location model a \emph{Gaussian location model}. The distributions of the alternative in a location model are described by the set of sequences $\{ (\epsilon_n,\mu_n)\}$. 

The location model can be considered as one where the null corresponds to pure noise, while the alternative corresponds to a sparse signal (controlled by $\epsilon_n$), with signal strength $\mu_n$ contaminated by additive noise. The relationship between $\epsilon_n$ and $\mu_n$ determines the signal-to-noise ratio (SNR), and characterizes when the hypotheses can be distinguished with vanishing probability of error. In the general case, $\f_{0,n}(x)$ can be thought of as the noise and $\f_{1,n}(x)$ as the signal distribution. 

We define the \emph{probability of false alarm} for a hypothesis test $\delta_n$ between $\Hyp_{0,n}$ and $\Hyp_{1,n}$ as 
\begin{equation}
\Pro_{\rm FA}(n) \triangleq \Pro_{0,n} [ \delta_n = 1 ]
\end{equation}
and the \emph{probability of missed detection} as
\begin{equation}
\Pro_{\rm MD}(n) \triangleq \Pro_{1,n} [ \delta_n = 0 ].
\end{equation}

A sequence of hypothesis tests $\{\delta_n\}$ is \emph{consistent} if $\Pro_{\rm FA}(n),\Pro_{\rm MD}(n) \to 0$ as $n \to \infty$. We say we have a \emph{rate characterization} for a sequence of consistent hypothesis tests $\{\delta_n\}$ if we can write
\begin{equation}
\lim_{n \to \infty} \frac{\log \pfa(n)}{g_0(n)} = -c,~~~ \lim_{n \to \infty} \frac{\log \pmd(n)}{g_1(n)} = -d,
\end{equation}
where $g_0(n),g_1(n) \to \infty$ as $n \to \infty$ and $0 < c,d<\infty$. The rate characterization describes decay of the error probabilities for large sample sizes. All logarithms are natural. For the problem of testing between i.i.d. samples from two fixed distributions, $g_0(n)=g_1(n)=n$, and $c,d$ are called the \emph{error exponents} \cite{cover}. In the mixture detection problem, $g_0(n)$ and $g_1(n)$ will be sublinear functions of $n$. 

The log-likelihood ratio between the corresponding probability measures of $\Hyp_{1,n}$ and $\Hyp_{0,n}$ is
\begin{equation}
\text{LLR}(n)=\sum_{i=1}^n \log\big(1-\epsilon_n + \epsilon_n \LR_n(X_i)\big). \label{eq:llr}
\end{equation}
In order to perform an \emph{oracle rate} characterization for the mixture detection problem, we consider the sequence of oracle likelihood ratio tests (LRTs) between $\Hyp_{0,n}$ and $\Hyp_{1,n}$ (i.e. with $\epsilon_n, \f_{0,n}, \f_{1,n}$ known):
\begin{equation}
\delta_n (X_1,\ldots,X_n) \triangleq \begin{cases} 1 & \text{LLR}(n) \geq 0 \\0 & \text{otherwise} \end{cases}. \label{eq:lrt}
\end{equation}
It is well known that \eqref{eq:lrt} is optimal for testing between $\Hyp_{0,n}$ and $\Hyp_{1,n}$ in the sense of minimizing $\frac{\pfa(n)+\pmd(n)}{2}$, which is the average probability of error when the null and alternative are assumed to be equally likely \cite{lehmann,poor}. It is valuable to analyze $\pfa(n)$ and $\pmd(n)$ separately since many applications incur different costs associated with false alarms and missed detections. 

{\bf Location Model:} The detectable region for a location model is the set of sequences $\{(\epsilon_n, \mu_n)\}$ such that a sequence of consistent oracle tests $\{\delta_n\}$ exist. 
For convenience of analysis, we introduce the parameterization 
\begin{equation}
\epsilon_n = n^{-\beta}
\end{equation}
where $\beta \in (0,1)$ as necessary.  Following the terminology of \cite{castro}, when $\beta \in (0,\frac{1}{2})$, the mixture is said to be a ``dense mixture''. If $\beta \in (\frac{1}{2},1)$, the mixture is said to be a ``sparse mixture''.

\subsection{Related Work}\label{sec:rw} Prior work on mixture detection has been focused primarily on the Gaussian location model. The main goals in these works have been to determine the detectable region and construct \emph{optimally adaptive} tests (i.e. those which are consistent independent of knowledge of $\{(\epsilon_n$, $\mu_n)\}$, whenever possible). The study of detection of mixtures where the mixture probability tends to zero was initiated by Ingster for the Gaussian location model \cite{ingster}. Ingster characterized the detectable region, and showed that outside the detectable region the sum of the probabilities of false alarm and missed detection is bounded away from zero for any test. Since the generalized likelihood statistic tends to infinity under the null, Ingster developed an increasing sequence of simple hypothesis tests that are optimally adaptive. 

Donoho and Jin introduced the Higher Criticism test, which is optimally adaptive and is computationally efficient relative to Ingster's sequence of hypothesis tests, and also discussed some extensions to Subbotin distributions and $\chi^2$-distributions \cite{donoho}. Cai et al. extended these results to the case where $\f_{0,n}(x)$ is standard normal and $\f_{1,n}(x)$ is a normal distribution with positive variance, derived limiting expressions for the distribution of $\text{LLR}(n)$ under both hypotheses, and showed that the Higher Criticism test is optimally adaptive in this case \cite{caijengjin}. Jager and Wellner proposed a family of tests based on $\phi$-divergences and showed that they attain the full detectable region in the Gaussian location model \cite{jager}. Arias-Castro and Wang studied a location model where $\f_{0,n}(x)$ is some fixed but unknown symmetric distribution, and constructed an optimally adaptive test that relies only on the symmetry of the distribution when $\mu_n >0$ \cite{castro}. In a separate paper, Arias-Castro and Wang also considered mixtures of Poisson distributions and showed the problem had similar detectability behavior to the Gaussian location model \cite{castropoisson}. 

Cai and Wu gave an information-theoretic characterization of the detectable region via an analysis of the sharp asymptotics of the Hellinger distance for a wide variety of distributions, and established a strong converse result showing that reliable detection is impossible outside the detectable region in many cases if \eqref{eq:lrt} is not consistent \cite{caiwu}. This work also gave general conditions for the Higher Criticism test to be consistent. {Our work complements \cite{caiwu} by providing conditions for consistency (as well as asymptotic estimates of error probabilities) for optimal tests, with simple to verify conditions for a fairly general class of models. } While the Hellinger distance used in \cite{caiwu} provides bounds on $\Pro_{\rm FA}(n)+ \Pro_{\rm MD}(n)$ for the test specified in \eqref{eq:lrt}, our analysis treats $\Pro_{\rm FA}(n),\Pro_{\rm MD}(n)$ separately as they may have different rates at which they tend to zero and different acceptable tolerances in applications. As we will show in Sec. \ref{sec:glma} and Sec. \ref{sec:rtadap}, there are cases where $\Pro_{\rm FA}(n) \gg \Pro_{\rm MD}(n)$ for adaptive tests and $\Pro_{\rm FA}(n) \ll \Pro_{\rm MD}(n)$ for an oracle test.  

Walther numerically showed that while the popular Higher Criticism statistic is consistent, there exist optimally adaptive tests with significantly higher power for a given sample size at different sparsity levels \cite{walther}. Our work complements \cite{walther} by providing a benchmark to meaningfully compare the sample size and sparsity trade-offs of different tests with an oracle test. It should be noted that all of the work except \cite{castro,caijengjin} has focused on the case where $\beta > \frac{1}{2}$, and no prior work has provided an analysis of the \emph{rate} at which $\Pro_{\rm FA}(n),\Pro_{\rm MD}(n)$ can be driven to zero with sample size.

\section{Main Results for Rate Analysis}
\subsection{General Case}
Our main result is a characterization of the oracle rate via the test given in \eqref{eq:lrt}. The sufficient conditions required for the rate characterization are applicable to a broad range of parameters in the Gaussian location model (Sec.~\ref{sec:glma}).

We first look at the behavior of ``weak signals'', where $\LR_n$ has suitably controlled tails under the null hypothesis. In the Gaussian location model in Sec.~\ref{sec:glma}, this theorem is applicable to small detectable $\mu_n$.
\begin{theorem} \label{mthm_main}
Let $\gamma_0 \in (0,1)$ and assume that for all $\gamma\in(0,\gamma_0)$ the following conditions are satisfied:
\begin{gather}
\lim_{n \to \infty} \Exp_0 \left[ \frac{(\LR_n -1)^2}{D_n^2}\ind{\LR_n \geq 1 + \frac{\gamma}{\epsilon_n}}\right] =0 \label{eq:c1}\\
\epsilon_n D_n \to 0 \label{eq:c2}\\
\sqrt{n} \epsilon_n D_n \to \infty \label{eq:c3}
\end{gather}
where 
\begin{equation} 
D_n^2 = \Exp_0 [(\LR_n-1)^2]< \infty. 
\end{equation}
Then for the test specified by \eqref{eq:lrt},
\begin{equation}
\lim_{n \to \infty} \frac{\log \Pro_{\rm FA}(n)}{n \epsilon_n^2 D_n^2} = -\frac{1}{8}. \label{eq:mainthm} \end{equation}
Moreover, \eqref{eq:mainthm} holds if we replace $\Pro_{\rm FA}(n)$ with $\Pro_{\rm MD}(n)$. 
\end{theorem}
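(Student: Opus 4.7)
I will write $\text{LLR}(n)=\sum_{i=1}^n W_i$ with $W_i=\log(1+V_i)$ and $V_i=\epsilon_n(\LR_n(X_i)-1)$. Under $\Hyp_0$, $\Exp_0 V_i=0$ and $\Exp_0 V_i^2=\epsilon_n^2 D_n^2$, so a formal second-order Taylor expansion of $\log(1+\cdot)$ suggests $\Exp_0 W_i\approx -\tfrac12\epsilon_n^2 D_n^2$ and $\mathrm{Var}_0 W_i\approx\epsilon_n^2 D_n^2$. The event $\{\text{LLR}(n)\ge 0\}$ then sits approximately $\tfrac12\sqrt{n}\,\epsilon_n D_n$ standard deviations above the centered mean; by \eqref{eq:c3} this diverges while \eqref{eq:c2} keeps each summand small, placing the problem in the moderate-deviations regime. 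The sharp constant $-\tfrac18$ will emerge from matching Chernoff upper and change-of-measure lower bounds at the self-dual tilt $\theta=\tfrac12$.

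For the upper bound on $\pfa$, I will analyze the log-MGF $\Lambda_n(\theta)=\log\Exp_0[(1+V)^\theta]$ for $\theta\in(0,1)$. Splitting $\Exp_0[(1+V)^\theta]$ at $\{V\le\gamma\}\cup\{V>\gamma\}$, using \eqref{eq:c1} to bound the outer contribution by $o(\epsilon_n^2 D_n^2)$, and Taylor-expanding $(1+v)^\theta$ on $\{|V|\le\gamma\}$ (with the third-order remainder $O(\gamma\,\epsilon_n^2 D_n^2)$ absorbed by a final $\gamma\downarrow 0$ step, justified by \eqref{eq:c2}), I will establish $\Lambda_n(\theta)=\tfrac{\theta(\theta-1)}{2}\epsilon_n^2 D_n^2(1+o(1))$. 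Chernoff's inequality $\pfa(n)\le e^{n\Lambda_n(\theta)}$, minimized over $\theta\in(0,1)$ at $\theta=\tfrac12$, yields $\limsup \log\pfa(n)/(n\epsilon_n^2 D_n^2)\le -\tfrac18$. The analogous upper bound for $\pmd$ follows from the identity $\pmd(n)=\Exp_0[e^{\text{LLR}(n)}\ind{\text{LLR}(n)<0}]$ combined with $\ind{x<0}\le e^{-sx}$ ($s>0$), giving $\pmd(n)\le e^{n\Lambda_n(1-s)}$, again minimized at $s=\tfrac12$.

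For the matching lower bounds I will introduce the tilt $dQ_n/d\Pro_{0,n}\propto\prod_i (1+V_i)^{1/2}$. Under $Q_n$ the $W_i$ remain i.i.d.\ with mean $\Lambda_n'(\tfrac12)=o(\epsilon_n^2 D_n^2)$ and variance $\approx\epsilon_n^2 D_n^2$; a triangular-array CLT, whose Lindeberg condition is supplied by \eqref{eq:c1} after translating through the weight $\sqrt{1+V}$, yields $Q_n(0\le\text{LLR}(n)\le\sqrt{n}\,\epsilon_n D_n)\to\Phi(1)-\Phi(0)>0$. Reversing the tilt,
\[
\pfa(n)=e^{n\Lambda_n(1/2)}\Exp_{Q_n}\!\left[e^{-\text{LLR}(n)/2}\ind{\text{LLR}(n)\ge 0}\right]\ge e^{n\Lambda_n(1/2)-\frac12\sqrt{n}\,\epsilon_n D_n}\cdot\Omega(1),
\]
and by \eqref{eq:c3} the linear correction is $o(n\epsilon_n^2 D_n^2)$, matching the upper bound. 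The lower bound for $\pmd$ is parallel, using the same tilt restricted to the window $\{-\sqrt{n}\,\epsilon_n D_n\le\text{LLR}(n)<0\}$.

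The principal obstacle is making the expansion of $\Lambda_n(\theta)$ and the CLT on $Q_n$ rigorous and uniform: $V_i$ can be arbitrarily large, so a naive Taylor expansion of $(1+v)^\theta$ fails and must be patched by excising $\{V_i>\gamma\}$ (where \eqref{eq:c1} acts as a uniform-integrability substitute), a third-order bulk bound controlled by \eqref{eq:c2}, and a final $\gamma\downarrow 0$ limit. Once these are in hand, transferring Lindeberg's condition from $\Pro_{0,n}$ to $Q_n$ reduces to the same split, since $W_i^2\sqrt{1+V_i}\le C V_i^2$ on $\{V_i\ge 0\}$ lets the tilt weight be absorbed into the $V_i^2$ tail controlled by \eqref{eq:c1}.
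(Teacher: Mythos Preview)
Your overall architecture matches the paper's: a Chernoff upper bound at the self-dual parameter together with an exponential-tilt lower bound backed by a Lindeberg--Feller CLT. The upper-bound portion and the identification of \eqref{eq:c1} as the uniform-integrability input for both the MGF expansion and the Lindeberg condition are correct and essentially identical to what the paper does.

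There is one genuine gap in the lower bound. You tilt at the fixed value $\theta=\tfrac12$ and assert that the CLT gives
\[
Q_n\big(0\le \LLR(n)\le \sqrt{n}\,\epsilon_n D_n\big)\ \longrightarrow\ \Phi(1)-\Phi(0).
\]
For this to hold, the standardized shift $a_n\triangleq n\,\Lambda_n'(\tfrac12)/\sqrt{n\,\sigma_n^2}$ must vanish, i.e.\ $\Lambda_n'(\tfrac12)=o(\epsilon_n D_n/\sqrt{n})$. But your expansion only gives $\Lambda_n'(\tfrac12)=o(\epsilon_n^2 D_n^2)$, and since $\epsilon_n^2 D_n^2\cdot \sqrt{n}/(\epsilon_n D_n)=\sqrt{n}\,\epsilon_n D_n\to\infty$, the former estimate does \emph{not} imply the latter. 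In the absence of any quantitative rate in \eqref{eq:c1}, $a_n$ may diverge, in which case the window $[0,\sqrt{n}\,\epsilon_n D_n]$ drifts away from the bulk of $Q_n$ and the probability you need tends to~$0$; a bare CLT then says nothing about its size. The paper sidesteps this by tilting at the minimizer $s_n=\arg\min_s\Lambda_n(s)$ rather than at $\tfrac12$: under that tilt the mean of $\LLR(n)$ is \emph{exactly} zero, so the Lindeberg--Feller CLT yields $\tilde{\Pro}[\LLR(n)\ge 0]\to\tfrac12$ directly, and the expansion $\Lambda_n(s_n)/(\epsilon_n^2 D_n^2)\to -\tfrac18$ (your computation applied at $s_n$ rather than at $\tfrac12$) closes the argument. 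The easiest fix to your write-up is exactly this: replace the $\tfrac12$-tilt by the $s_n$-tilt, note that $s_n\in(0,1)$ by convexity, and proceed as you do.
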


The quantity $D_n^2$ is known as the $\chi^2$-divergence between $\f_{0,n}(x)$ and $\f_{1,n}(x)$ \cite{gibbssu}. In contrast to the problem of testing between i.i.d. samples from two fixed distributions \cite{dz}, the rate is not characterized by the Kullback-Leibler divergence for the mixture detection problem. 
\begin{proof} We provide a sketch of the proof for $\Pro_{\rm FA}(n)$, and leave the details to Supplemental Material. We first establish that
\begin{equation}
\limsup_{n \to \infty} \frac{\log \Pro_{\rm FA}(n)}{n \epsilon_n^2 D_n^2} \leq -\frac{1}{8} \label{eq:cub0}
\end{equation}
By the Chernoff bound applied to $\Pro_{\rm FA}(n)$ and noting $X_1,\ldots,X_n$ are i.i.d.,
\begin{align*}
\Pro_{\rm FA}(n) &= \Pro_0\big[ \text{LLR}(n) \geq 0\big] \leq  \left( \min_{0\leq s \leq 1} \Exp_0 \big[ \big(1- \epsilon_n + \epsilon_n \LR_n(X_1) \big)^s\big] \right)^n \\
&\leq \left(\Exp_0 \left[ \sqrt{1- \epsilon_n + \epsilon_n \LR_n(X_1) }\right] \right)^n \numberthis \label{eq:cub}
\end{align*}
By direct computation, we see $\Exp_0[\LR_n(X_1) -1] = 0$, and the following sequence of inequalities hold:
\begin{multline*}
\Exp_0 \left[ \sqrt{1- \epsilon_n + \epsilon_n \LR_n(X_1) }\right] = 1 - \frac{1}{2} \Exp_0\left[ \frac{\epsilon_n^2 (\LR_n(X_1) - 1)^2}{\big(1+\sqrt{1+\epsilon_n (\LR_n(X_1) -1)}\big)^2}\right] \\
\leq 1 - \frac{\epsilon_n^2}{2} \Exp_0\left[ \frac{(\LR_n(X_1) - 1)^2}{(1+\sqrt{1+\epsilon_n (\LR_n(X_1) -1)})^2}\ind{\epsilon_n (\LR_n(X_1) - 1) \leq \gamma} \right] \\
\leq   1 - \frac{\epsilon_n^2 D_n^2}{2(1+\sqrt{1+\gamma})^2} \Exp_0\left[ \frac{(\LR_n(X_1) - 1)^2}{D_n^2}\ind{\LR_n(X_1)\leq1+ \frac{\gamma}{\epsilon_n}} \right]\\ 
=  1 - \frac{\epsilon_n^2 D_n^2}{2(1+\sqrt{1+\gamma})^2} \left(1 - \Exp_0\left[ \frac{(\LR_n(X_1) - 1)^2}{D_n^2}\ind{\LR_n(X_1)\geq1+ \frac{\gamma}{\epsilon_n}} \right]\right) 
\end{multline*}
Since the expectation in the previous line tends to zero by \eqref{eq:c1}, for sufficiently large $n$ it will become smaller than $\gamma$. Therefore we have by \eqref{eq:cub}
$$ 
\frac{\log \Pro_{\rm FA}(n)}{n} \leq \log \left( 1 - \frac{1}{2} \frac{\epsilon_n^2 D_n^2}{(1+ \sqrt{1+\gamma})^2} (1-\gamma) \right).
$$
Dividing both sides by $\epsilon_n^2 D_n^2$ and taking the $\limsup$ using \eqref{eq:c2},\eqref{eq:c3} establishes $\limsup_{n \to \infty} \frac{\log \Pro_{\rm FA}(n)}{n \epsilon_n^2 D_n^2} \leq -\frac{1}{2} \frac{1- \gamma}{(1+\sqrt{1+\gamma})^2}$. Since $\gamma$ can be arbitrarily small, \eqref{eq:cub0} is established.

We now establish that \begin{equation}
\liminf_{n \to \infty} \frac{\log \Pro_{\rm FA}(n)}{n \epsilon_n^2 D_n^2} \geq -\frac{1}{8}. \label{eq:clb0}
\end{equation} The proof of \eqref{eq:clb0} is similar to that of Cramer's theorem (Theorem I.4, \cite{denhollander}). The key difference from Cramer's theorem is that $\text{LLR}(n)$ is the sum of i.i.d. random variables for each $n$, but the distributions of the summands defining $\text{LLR}(n)$ in \eqref{eq:llr} change for each $n$ under either hypothesis. Thus, we modify the proof of Cramer's theorem by introducing a $n$-dependent tilted distribution, and replacing the standard central limit theorem (CLT) with the Lindeberg-Feller CLT for triangular arrays  (Theorem 3.4.5, \cite{durrett}).

We introduce the tilted distribution $\tilde{\f}_n(x)$ corresponding to $\f_{0,n}(x)$ by
\begin{equation}
 \tilde{\f}_n(x) = \frac{\big(1-\epsilon_n + \epsilon_n \LR_n(x)\big)^{s_n}}{\Lambda_n(s_n)} \f_{0,n}(x) \label{eq:tilt}
\end{equation}
where $\Lambda_n(s) = \Exp_0\big[\big(1-\epsilon_n+\epsilon_n \LR_n(X_1)\big)^s\big]$, which is convex with $\Lambda_n(0)=\Lambda_n(1)=1$, and $s_n = \argmin_{0\leq s \leq 1} \Lambda_n(s)$. Let $\tilde{\Pro},\tilde{\Exp}$ denote the tilted measure and expectation, respectively (where we suppress the $n$ for clarity). A standard dominated convergence argument (Lemma 2.2.5, \cite{dz}) shows that
\begin{equation}
\tilde{\Exp}\big[ \log\big(1-\epsilon_n + \epsilon_n \LR_n(X_1)\big) \big] =0.
\end{equation}
Define the variance of the log-likelihood ratio for one sample as 
\begin{equation}
\sigma_n^2 = \tilde{\Exp} \left[ \big(\log\big(1+ \epsilon_n ( \LR_n(X_1) -1 ) \big) \big)^2\right].
\label{eq:sigma2}
\end{equation}
For sufficiently large $n$ such that Lemma \ref{suplem1} (proved in Supplementary Material) holds, namely that $C_1 \epsilon_n^2 D_n^2 \geq \sigma_n^2 \geq C_2 \epsilon_n^2 D_n^2$, we have:
\begin{align}
\pfa(n)&= \Pro_0 \left[\LLR(n) \geq 0\right]= \Exp_0 \left[\ind{ \LLR(n) \geq 0 }\right]\nonumber \\
&=  \left(\Lambda_n(s_n)\right)^n \tilde{\Exp} \left[ e^{- \LLR(n)} \ind{  \LLR(n) \geq 0 }\right]\nonumber \\
&=  \left(\Lambda_n(s_n)\right)^n \tilde{\Exp} \left[ e^{- \LLR(n)} |  \LLR(n) \geq 0\right] \tilde{\Pro}\left[\LLR(n) \geq 0\right] \nonumber\\
&\geq  \left(\Lambda_n(s_n)\right)^n e^{-\tilde{\Exp} \left[ \LLR(n) |  \LLR(n) \geq 0\right]} \tilde{\Pro}\left[\LLR(n) \geq 0\right]\label{eq:b0}\\
&=  \left(\Lambda_n(s_n)\right)^n e^{-\frac{\tilde{\Exp} \left[ \LLR(n) \ind{ \LLR(n) \geq 0 }\right]}{\tilde{\Pro}\left[\LLR(n)\geq 0\right]}} \tilde{\Pro}\left[\LLR(n) \geq 0\right]\nonumber\\
&\geq  \left(\Lambda_n(s_n)\right)^n e^{-\frac{\tilde{\Exp} \left[ |\LLR(n)|\right]}{\tilde{\Pro}\left[\LLR(n)\geq 0\right]}} \tilde{\Pro}\left[\LLR(n) \geq0\right] \label{eq:b1}\\
&\geq  \left(\Lambda_n(s_n)\right)^n e^{-\frac{\sqrt{\tilde{\Exp} \left[ \left(\LLR(n)\right)^2\right]}}{\tilde{\Pro}\left[\LLR(n)\geq 0\right]}} \tilde{\Pro}\left[\LLR(n) \geq 0\right]\label{eq:b2} \\
&= \left(\Lambda_n(s_n)\right)^n e^{-\frac{\sqrt{n \sigma_n^2}}{\tilde{\Pro}\left[\LLR(n)\geq0\right]}}\tilde{\Pro}\left[\LLR(n) \geq 0\right]\nonumber\\
&\geq \left(\Lambda_n(s_n)\right)^n e^{-\frac{\sqrt{n C_1 \epsilon_n^2 D_n^2}}{\tilde{\Pro}\left[\LLR(n)\geq0\right]}} \tilde{\Pro}\left[\LLR(n) \geq 0\right]\label{eq:b3}
\end{align}
where \eqref{eq:b0} follows from Jensen's inequality, \eqref{eq:b1} by $\LLR(n)\ind{\LLR(n)>0}$ $\leq|\LLR(n)|$, \eqref{eq:b2} by Jensen's inequality, and \eqref{eq:b3} by Lemma \ref{suplem1} proved in the Supplementary Material.

Taking logarithms and dividing through by $n \epsilon_n^2 D_n^2$ gives
$$
\frac{\log \pfa(n)}{n \epsilon_n^2 D_n^2} \geq
 \frac{\log \Lambda_n(s_n)}{\epsilon_n^2 D_n^2} - \frac{\sqrt{C_1}}{\tilde{\Pro}\left[\LLR(n) \geq 0\right]} \frac{1}{\sqrt{n} \epsilon_n D_n} + \frac{\log \tilde{\Pro}\left[\LLR(n) \geq 0\right]}{n \epsilon_n^2 D_n^2}. 
$$
Taking $\liminf$ and applying Lemma \ref{suplem2}, in which it is established that $\tilde{\Pro}[\LLR(n) \geq 0] \to \frac{1}{2}$, and Lemma \ref{suplem3} in which it is established that\\ $\liminf_{n \to \infty} \frac{\log \Lambda_n(s_n)}{\epsilon_n^2 D_n^2} \geq - \frac{1}{8} $, (see Supplementary Material),
along with the assumption $n \epsilon_n^2 D_n^2 \to \infty$ establishes that $\liminf_{n \to \infty} \frac{\log \pfa(n)}{n \epsilon_n^2 D_n^2} \geq -\frac{1}{8}$. 

The analysis under $\Hyp_{1,n}$ for $\Pro_{\rm MD}(n)$ relies on the fact that the $X_i$ are i.i.d.~with pdf $(1-\epsilon_n + \epsilon_n \LR_n) \f_{0,n}(x)$, which allows the use of $1-\epsilon_n + \epsilon_n \LR_n$ to change the measure from the alternative to the null. The upper bound is established identically, by noting that the Chernoff bound furnishes 
\begin{align*}
\Pro_{\rm MD}(n) &=  \Pro_{1,n}\big[ -\text{LLR}(n) >0 \big] \leq \left(\Exp_{1} \left[ \frac{1}{\sqrt{1- \epsilon_n + \epsilon_n \LR_n(X_1) }} \right] \right)^n \\
& = \left(\Exp_{0} \left[ \sqrt{1- \epsilon_n + \epsilon_n \LR_n(X_1) }\right] \right)^n 
\end{align*}
Similarly, the previous analysis can be applied to show that \eqref{eq:clb0} holds with $\Pro_{\rm FA}(n)$ replaced with $\Pro_{\rm MD}(n)$.
\end{proof}

In order to study the behavior of tests when Thm \ref{mthm_main} does not hold, we rely on the following bounds for $\Pro_{\rm MD}(n),\Pro_{\rm FA}(n)$: 
\begin{theorem} \label{lbthm_main}(a)~Let $\{\delta_n\}$ be any sequence of tests such that 
$$
\limsup_{n \to \infty} \Pro_{\rm FA}(n) < 1,
$$
then,
\begin{equation}
\liminf_{n \to \infty} \frac{ \log \Pro_{\rm MD} (n)} {n \epsilon_n} \geq -1. \label{eq:unilb}
\end{equation}
(b)~The following upper and lower bounds for $\Pro_{\rm FA}(n)$ hold for the test specified by \eqref{eq:lrt}: 
\begin{gather}
\Pro_{\rm FA}(n) \leq 1 - (\Pro_0 [ \LR_n \leq 1 ])^n \label{eq:univfaub}\\
\Pro_{\rm FA}(n) \geq \Pro_0 \left[ \sum_{i=1}^n \log \max \big(1 - \epsilon_n, \epsilon_n \LR_n(X_i) \big) \geq 0 \right]. \label{eq:univfalb}
\end{gather}
\end{theorem}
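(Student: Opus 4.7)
The three claims can be handled independently, with (a) being the only one requiring a substantive probabilistic argument and the two parts of (b) being direct inequalities about the form of $\LLR(n)$.

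For part (a), my plan is to exploit the mixture structure of $\Hyp_{1,n}$. Under $\Pro_{1,n}$ each $X_i$ has density $(1-\epsilon_n)\f_{0,n} + \epsilon_n \f_{1,n}$, which I couple by introducing iid Bernoulli$(\epsilon_n)$ indicators $Z_1,\ldots,Z_n$ and setting $X_i\sim \f_{1,n}$ when $Z_i=1$ and $X_i\sim \f_{0,n}$ when $Z_i=0$. On the event $E=\{Z_1=\cdots=Z_n=0\}$, which has probability $(1-\epsilon_n)^n$, the conditional law of $(X_1,\ldots,X_n)$ equals the null law, so
$\Pro_{\rm MD}(n)=\Pro_{1,n}[\delta_n=0]\geq \Pro_{1,n}[E]\cdot\Pro_{1,n}[\delta_n=0\mid E]=(1-\epsilon_n)^n(1-\Pro_{\rm FA}(n))$.
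Taking logarithms, dividing by $n\epsilon_n$, and using $\log(1-\epsilon_n)/\epsilon_n\to -1$, the standing hypothesis $n\epsilon_n\to\infty$, and the given $\limsup\Pro_{\rm FA}(n)<1$ (which keeps $\log(1-\Pro_{\rm FA}(n))$ bounded) delivers \eqref{eq:unilb}.

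For the upper bound in part (b), I would note that each summand $\log(1-\epsilon_n+\epsilon_n\LR_n(X_i))$ of $\LLR(n)$ is an increasing function of $\LR_n(X_i)$ that vanishes at $\LR_n(X_i)=1$, so on the event $\bigcap_i\{\LR_n(X_i)\leq 1\}$ we have $\LLR(n)\leq 0$ and thus $\delta_n=0$ (up to the boundary set where all the likelihood ratios equal $1$, which has $\Pro_0$-measure zero in the continuous settings of interest). Independence of $X_1,\ldots,X_n$ under $\Pro_0$ then gives $1-\Pro_{\rm FA}(n)\geq (\Pro_0[\LR_n\leq 1])^n$, which rearranges to \eqref{eq:univfaub}. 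For the lower bound, the elementary inequality $1-\epsilon_n+\epsilon_n\LR_n(X_i)\geq \max(1-\epsilon_n,\,\epsilon_n\LR_n(X_i))$ (both summands being nonnegative) yields, after taking logs and summing, $\LLR(n)\geq \sum_{i=1}^n \log\max(1-\epsilon_n,\,\epsilon_n\LR_n(X_i))$. The event on the right-hand side of \eqref{eq:univfalb} is therefore contained in $\{\LLR(n)\geq 0\}=\{\delta_n=1\}$, and the bound follows.

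I expect the only genuine subtlety to be in part (a): formalizing the conditional-on-$E$ distributional equality through the Bernoulli coupling. All remaining steps are elementary inequalities and limit computations, and in particular the proof does not require any of the tail assumptions \eqref{eq:c1}--\eqref{eq:c3} from Theorem \ref{mthm_main}, which is consistent with the universality of the claim.
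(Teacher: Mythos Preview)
Your proposal is correct and follows essentially the same approach as the paper's one-line sketch: part (a) rests on the event that every observation under the alternative comes from $\f_{0,n}$, and part (b) on the observation that at least one $\LR_n(X_i)\geq 1$ is needed for $\LLR(n)\geq 0$ together with the trivial minorant $1-\epsilon_n+\epsilon_n\LR_n\geq\max(1-\epsilon_n,\epsilon_n\LR_n)$. If anything, your argument for (a) is more careful than the paper's, since you correctly retain the factor $(1-\Pro_{\rm FA}(n))$ rather than asserting that a miss detection occurs with certainty on the all-noise event; this is exactly where the hypothesis $\limsup\Pro_{\rm FA}(n)<1$ enters, and your handling of the limit is sound.
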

These bounds are easily proved by noting if all observations under $\Hyp_{1,n}$ come from $\f_{0,n}$, then a miss detection occurs (a), and at least one sample must have $\LR_n\geq 1$ in order to raise a false alarm (b). 

Note that these are universal bounds in the sense that they impose no conditions on $\f_{1,n}(x), \f_{0,n}(x)$ and $\epsilon_n$. Also note that the bound of Thm \ref{lbthm_main}(a) is independent of any divergences between $\f_{0,n}(x)$ and $\f_{1,n}(x)$, and it holds for any consistent sequence of tests because $\Pro_{\rm FA}(n) \to 0$. This is in contrast to the problem of testing between i.i.d. samples from fixed distributions, where the rate is a function of divergence \cite{dz}. 

When the conditions of Thm~\ref{mthm_main} do not hold, we have the following rate characterization for ``strong signals'', where $\LR_n$ is under the $\f_{1,n}(x)$ distribution in an appropriate sense.  In the Gaussian location model in Sec.~\ref{sec:glma}, this theorem is applicable to large detectable $\mu_n$.
\begin{theorem} \label{ubthm} Let $M_0>1$, and assume that for all $M>M_0$, the following condition is satisified: 
	\begin{equation}
\Exp_0 \left[ \LR_n \ind{ \LR_n > 1 + \frac{M}{\epsilon_n}} \right] \to 1 \label{eq:ubcond}.
\end{equation}
Then for the test specified by \eqref{eq:lrt}, 
\begin{gather}
\limsup_{n \to \infty} \frac{\log \Pro_{\rm FA} (n)}{n \epsilon_n} \leq -1 \label{eq:gub}\\
\lim_{n \to \infty}  \frac{\log \Pro_{\rm MD} (n)}{n \epsilon_n}=-1. \label{eq:ssmd}
\end{gather} 
\end{theorem}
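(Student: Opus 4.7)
The plan is to extend the Chernoff argument from the proof of Theorem~\ref{mthm_main} to this stronger-signal regime, with the key modification that the Chernoff parameter $s$ must approach $1$ (jointly with $M$) rather than being fixed at $1/2$. A standard change-of-measure computation gives $\Pro_{\rm FA}(n)\leq\Lambda_n(s)^n$ and $\Pro_{\rm MD}(n)\leq\Lambda_n(s)^n$ for every $s\in(0,1)$, where $\Lambda_n(s)=\Exp_0[(1-\epsilon_n+\epsilon_n\LR_n)^s]$: for $\Pro_{\rm MD}(n)$ one writes $\Exp_{1}[(1-\epsilon_n+\epsilon_n\LR_n)^{-(1-s)}]$ as a $\Pro_0$-expectation weighted by $1-\epsilon_n+\epsilon_n\LR_n$, which exactly recovers $\Lambda_n(s)$. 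Consequently, it suffices to exhibit $s\in(0,1)$ with $\Lambda_n(s)\leq 1-\epsilon_n(1-o_M(1))$ for $n$ large; the matching liminf in \eqref{eq:ssmd} is then immediate from Theorem~\ref{lbthm_main}(a), since the resulting limsup \eqref{eq:gub} forces $\Pro_{\rm FA}(n)\to 0$.

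To estimate $\Lambda_n(1-\alpha)$ for a small $\alpha>0$ to be optimized, I will split the expectation using the indicator of $A_n=\{\LR_n>1+M/\epsilon_n\}$ and apply H\"older's inequality with conjugate exponents $1/(1-\alpha)$ and $1/\alpha$ on each piece,
\[
\Exp_0\bigl[(1-\epsilon_n+\epsilon_n\LR_n)^{1-\alpha}\ind{B}\bigr]\leq \Pro_0[B]^{\alpha}\,\Exp_0\bigl[(1-\epsilon_n+\epsilon_n\LR_n)\ind{B}\bigr]^{1-\alpha}.
\]
On $B=A_n^c$, condition \eqref{eq:ubcond} gives $\Exp_0[\LR_n\ind{A_n^c}]=1-\Exp_0[\LR_n\ind{A_n}]\to 0$, so the contribution is at most $(1-\epsilon_n(1-o_n(1)))^{1-\alpha}\leq 1-(1-\alpha)\epsilon_n(1-o_n(1))$; on $B=A_n$, Markov's inequality applied to $\LR_n$ under $\f_{0,n}$ gives $\Pro_0[A_n]\leq\epsilon_n/M$, so the contribution is at most $\epsilon_n(1+o_M(1))\,M^{-\alpha}$. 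Summing,
\[
\Lambda_n(1-\alpha)\leq 1-\epsilon_n\bigl[(1-\alpha)-M^{-\alpha}\bigr]+o(\epsilon_n).
\]
The bracket is maximized over $\alpha\in(0,1)$ at $\alpha_M=\log\log M/\log M$ (where $M^{\alpha_M}=\log M$), at which it equals $1-\log\log M/\log M-1/\log M=1-o_M(1)$. Substituting $\alpha_M$, taking logs, dividing by $n\epsilon_n$, and passing $n\to\infty$ for each fixed $M>M_0$ and then $M\to\infty$ produces both limsup bounds.

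The main obstacle (and the point of departure from Theorem~\ref{mthm_main}) is the necessity of letting $s\to 1$ jointly with $M$: the natural Chernoff choice $s=1/2$ gives only rate $-1/2$ in this regime, because the $A_n^c$ piece is then multiplied by $1/2$ rather than by $1-\alpha$. To recover the sharp constant $-1$ one must balance, via the H\"older exponent $\alpha$, the $A_n^c$ contribution (where condition \eqref{eq:ubcond} tells us the bulk $1-\epsilon_n$ sits) against the $A_n$ contribution (where the $\epsilon_n$-mass of large $\LR_n$ sits and only Markov control is available); the optimization yielding $\alpha_M$ is exactly this balancing.
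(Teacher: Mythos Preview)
Your proof is correct and reaches the same conclusion as the paper, but the technical route is genuinely different. The paper works with the identity $\Exp_0[(1+\epsilon_n(\LR_n-1))^s]=1-\Exp_0[\phi(\epsilon_n(\LR_n-1))]$ where $\phi(x)=1+sx-(1+x)^s\geq 0$, restricts to the event $\{\epsilon_n(\LR_n-1)\geq M\}$, and then lower-bounds $\phi$ pointwise there via $(1+x)^s\leq 2^s x^s$ to obtain $\limsup_n\frac{\log\Pro_{\rm FA}(n)}{n\epsilon_n}\leq -s+2/M^{1-s}$; it then sends $M\to\infty$ for fixed $s$ and finally $s\to 1$. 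You instead split $\Lambda_n(1-\alpha)$ by the same event but control each piece with H\"older's inequality (exponents $1/(1-\alpha)$ and $1/\alpha$), using condition~\eqref{eq:ubcond} on $A_n^c$ and Markov on $A_n$, arriving at $\limsup_n\frac{\log\Pro_{\rm FA}(n)}{n\epsilon_n}\leq -[(1-\alpha)-M^{-\alpha}(1+1/M)]$, which you then optimize in $\alpha$ for each $M$. Both arguments hinge on the same phenomenon you correctly identify: the Chernoff parameter must approach $1$ to extract the constant $-1$. Your H\"older decomposition is arguably cleaner conceptually because it separates the ``bulk'' mass on $A_n^c$ from the ``signal'' mass on $A_n$ without any pointwise algebra, at the cost of the extra step of explicitly optimizing $\alpha_M=\log\log M/\log M$; the paper's $\phi$-bound is more hands-on but avoids that optimization since the $M$-limit already kills the error term for every fixed $s$. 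The change-of-measure reduction of $\Pro_{\rm MD}$ to the same $\Lambda_n(s)$ and the appeal to Theorem~\ref{lbthm_main}(a) for the matching lower bound are identical to the paper's.
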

\begin{proof}

We first prove \eqref{eq:gub}. Let 
\begin{equation*}
\phi(x) = 1+ s x -(1+x)^s.
\end{equation*}
By Taylor's theorem, we see for $s \in (0,1)$ and $x \geq -1$ that $\phi(x) \geq 0$.
Since $\Exp_0[\LR_n-1]=0$,
\begin{equation*}
 \Exp_0 [ (1- \epsilon_n + \epsilon_n \LR_n(X_1) )^s] = 1 - \Exp_0[\phi(\epsilon_n (\LR_n (X_1) -1))].
\end{equation*}
Note this implies $\Exp_0[\phi(\epsilon_n (\LR_n (X_1) -1))] \in [0,1]$ since $ \Exp_0 [ (1- \epsilon_n + \epsilon_n \LR_n(X_1) )^s]$ is convex in $s$ and is $1$ for $s=0,1$. 
As in the proof of Thm \ref{mthm_main}, by the Chernoff bound,
\begin{equation*}
\Pro_{\rm FA}(n) \leq  \left(\Exp_0 [ (1- \epsilon_n + \epsilon_n \LR_n(X_1) )^s] \right)^n 
\end{equation*}
for any $s \in (0,1)$. Thus, supressing the dependence on $X_1$, and assuming $M>M_0$, we have
\begingroup
\allowdisplaybreaks
\begin{align*}
\frac{\log \Pro_{\rm FA}(n)}{n} &\leq \log \Exp_0 \left[ (1- \epsilon_n + \epsilon_n \LR_n(X_1) )^s\right] \\
&= \log (1-\Exp_0\left[\phi(\epsilon_n (\LR_n -1))\right])\\
&\leq - \Exp_0\left[\phi(\epsilon_n (\LR_n -1))\right] \numberthis \label{eq:ubp1}\\
&\leq  - \Exp_0\left[\phi(\epsilon_n (\LR_n -1)) \ind{ \epsilon_n (\LR_n-1) \geq M }\right] \numberthis \label{eq:ubp2} \\
&= - \Exp_0 \left[ \left( 1 + s \epsilon_n (\LR_n -1) - (1+ \epsilon_n (\LR_n -1))^s \right)  \ind{ \epsilon_n (\LR_n-1) \geq M } \right] \\
&\leq - \Exp_0 \left[ \left( s \epsilon_n (\LR_n -1) - (1+ \epsilon_n (\LR_n -1))^s \right)  \ind{ \epsilon_n (\LR_n-1) \geq M } \right] \\
&\leq - \Exp_0 \left[ \left( s \epsilon_n (\LR_n -1) - 2^s \epsilon_n^s (\LR_n -1)^s \right)  \ind{ \epsilon_n (\LR_n-1) \geq M } \right] \numberthis \label{eq:ubp3}\\
&= - \Exp_0 \left[ \epsilon_n (\LR_n -1)  \left( s - \frac{2^s} { (\epsilon_n (\LR_n -1))^{1-s}} \right)  \ind{ \epsilon_n (\LR_n-1) \geq M } \right]\\
&\leq - \Exp_0 \left[ \epsilon_n (\LR_n -1)  \left( s - \frac{2} { M^{1-s}} \right)  \ind{ \epsilon_n (\LR_n-1) \geq M } \right] \numberthis \label{eq:ubp4}\\
&= - \epsilon_n \left( s - \frac{2}{M^{1-s}} \right) \Exp_0\left[ \LR_n \left(1 - \frac{1}{\LR_n} \right)  \ind{ \epsilon_n (\LR_n-1) \geq M } \right] \\
&\leq  - \epsilon_n \left( s - \frac{2}{M^{1-s}} \right) \Exp_0\left[ \LR_n \left(1 - \frac{1}{1+ \frac{M}{\epsilon_n}} \right)  \ind{ \epsilon_n (\LR_n-1) \geq M } \right] \\
&=  - \epsilon_n \left( s - \frac{2}{M^{1-s}} \right)  \left(1 - \frac{1}{1+ \frac{M}{\epsilon_n}} \right)  \Exp_0\left[ \LR_n \ind{ \epsilon_n (\LR_n-1) \geq M } \right]
\end{align*}
\endgroup
where \eqref{eq:ubp1} follows from $\log(1-x) \leq -x$ for $x\leq 1$, \eqref{eq:ubp2} follows from $\phi(x) \geq 0$, \eqref{eq:ubp3} follows from $(1+x)^s \leq 2^s x^s$ for $x \geq 1$ and taking $M>M_0$, \eqref{eq:ubp4} follows from $s\in (0,1)$.
Dividing both sides of the inequality by $\epsilon_n$ and taking a $\limsup_{n \to \infty}$ establishes
\begin{equation*}
\limsup_{n \to \infty} \frac{\log \Pro_{\rm FA}}{n \epsilon_n} \leq -s + \frac{2}{M^{1-s}}.
\end{equation*}
	Letting $M \to \infty$ and optimizing over $s \in (0,1)$ establishes the \eqref{eq:gub}. By a change of measure between the alternative and null hypotheses, we see that \eqref{eq:gub} also holds with $\Pro_{\rm FA}(n)$ replaced with $\Pro_{\rm MD}(n)$. Combining this with Thm \ref{lbthm_main} establishes \eqref{eq:ssmd}. 
\end{proof}

Theorem \ref{ubthm} shows that the rate of miss detection is controlled by the average number of observations drawn from $\f_{1,n}(x)$ under $\Hyp_{1,n}$, independent of any divergence between $\f_{1,n}(x)$ and $\f_{0,n}(x)$ when \eqref{eq:ubcond} holds. Interestingly, so long as the condition of Thm \ref{ubthm} holds, by Thm \ref{lbthm_main}(a), no non-trivial sequence of tests (i.e. $\limsup_{n \to \infty} \Pro_{\rm FA}(n), \Pro_{\rm MD}(n) < 1$) can achieve a better rate than \eqref{eq:lrt} under $\Hyp_{1,n}$. This is different from the case of testing i.i.d. observations from two fixed distributions, where allowing for a slower rate of decay for $\Pro_{\rm FA}(n)$ can allow for a faster rate of decay for $\Pro_{\rm MD}(n)$ (Sec. 3.4, \cite{dz}). 

In Sec. \ref{sec:glma}, we will show that Thm \ref{ubthm} is not always tight under $\Hyp_{0,n}$, and the true behavior can depend on divergence between $\f_{0,n}(x)$ and $\f_{1,n}(x)$, using the upper and lower bounds of Thm \ref{lbthm_main}(b).

\subsubsection{Comparison to Related Work} Cai and Wu \cite{caiwu} consider a model which is essentially as general as ours, and characterize the detection boundary for many cases of interest, but do not perform a rate analysis. Note that our rate characterization \eqref{eq:mainthm} depends on $D_n$, the $\chi^2$-divergence between $\f_{0,n}$ and $\f_{1,n}$. While the Hellinger distance used in \cite{caiwu} can be upper bounded in terms of the $\chi^2$-divergence, a corresponding lower bound does not exist in general \cite{gibbssu}, and so our results cannot be derived using the methods of \cite{caiwu}. In fact, our results complement \cite{caiwu} in giving precise bounds on the error decay for this problem once the detectable region boundary has been established. Furthermore, as we will show in Thm~\ref{maxtest}, there are cases where the rates derived by analyzing the likelihood ratio test are essentially achievable. 

\subsection{Gaussian Location Model} \label{sec:glma}
In this section, we specialize Thm~\ref{mthm_main} and \ref{ubthm} to the Gaussian location model. The rate characterization proved is summarized in Fig. \ref{fig:glmsummary}. 
We first recall some results from the literature for the detectable region for this model. 
\begin{figure}[!b]
    \centering
    \begin{subfigure}[b]{0.48\textwidth}
        \includegraphics[width=\textwidth]{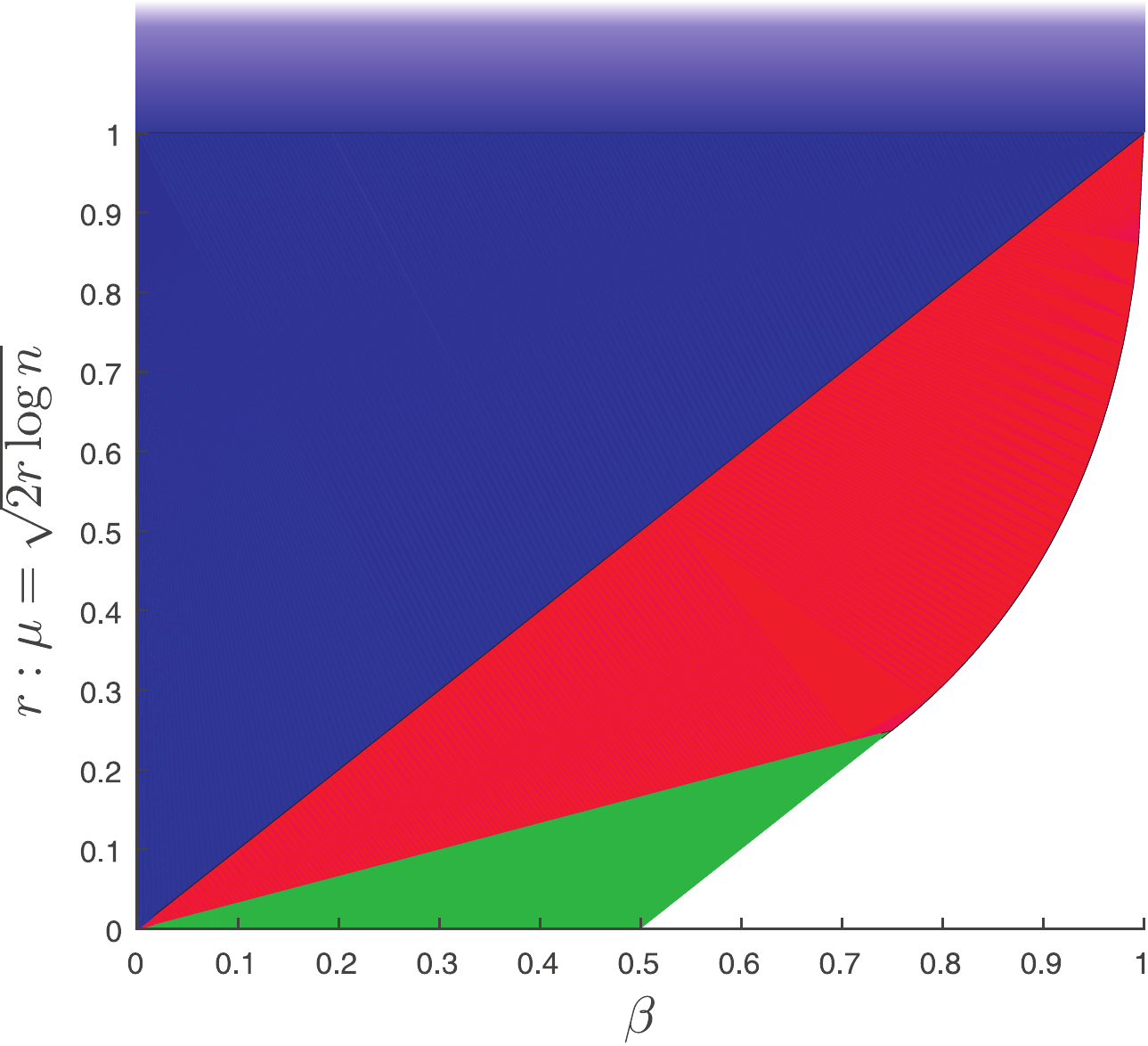}
        \caption{Detectable region ($r$ versus $\beta$) where $\mu_n = \sqrt{2 r \log n}$, $\epsilon_n = n^{-\beta}$}
	\label{fig:glmsummarya}
    \end{subfigure}
    \hfill
    \begin{subfigure}[b]{0.48\textwidth}
        \includegraphics[width=\textwidth]{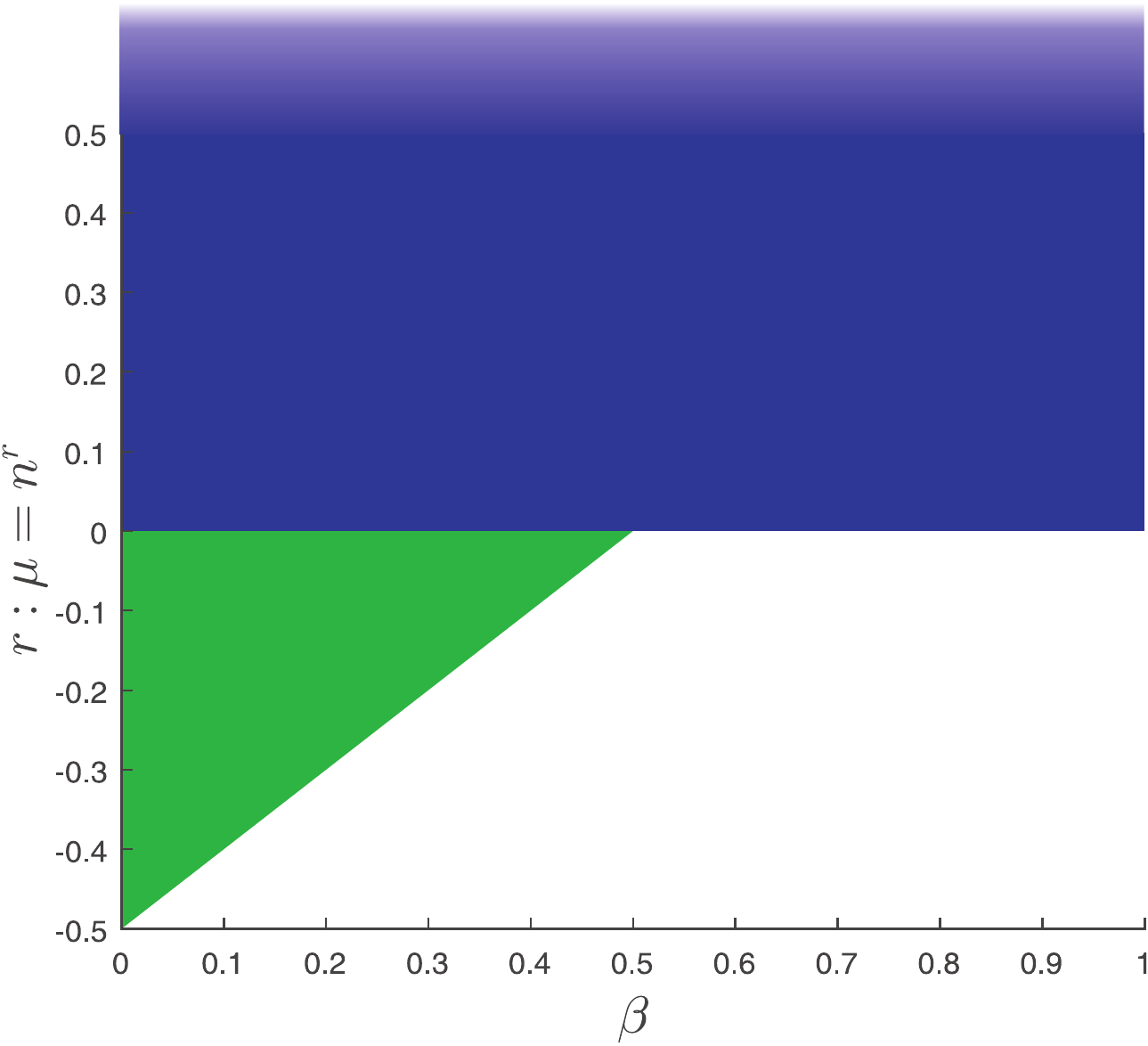}
        \caption{Detectable region ($r$ versus $\beta$) where $\mu_n = n^{r}$, $\epsilon_n = n^{-\beta}$}
    \end{subfigure}
    \caption{Detectable regions for the Gaussian location model. Unshaded regions have $\Pro_{\rm MD}(n)+\Pro_{\rm FA}(n) \to 1$ for any test (i.e. reliable detection is impossible). Green regions are where corollaries \ref{densecase} and \ref{ms} provide an exact rate characterization. The red region is where Thm \ref{gausub_main} provides an upper bound on the rate, but no lower bound. The blue region is where Cor. \ref{ssig} holds, and provides an upper bound on the rate for $\Pro_{\rm FA}(n)$ and an exact rate characterization for $\Pro_{\rm MD}(n)$.} \label{fig:glmsummary} 
\end{figure}

\begin{theorem} \label{dc} The boundary of the detectable region (in $\{(\epsilon_n,\mu_n)\}$ space) is given by (with $\epsilon_n = n^{-\beta})$: 
\begin{enumerate}
\item If $0 < \beta \leq \frac{1}{2}$, then $\mu_{crit,n} = n^{\beta-\frac{1}{2}}$. (Dense)
\item If $\frac{1}{2} < \beta < \frac{3}{4}$, then $\mu_{crit,n} = \sqrt{2 (\beta - \frac{1}{2}) \log n}$. (Moderately Sparse)
\item If $\frac{3}{4} \leq \beta < 1 $, then $\mu_{crit,n} = \sqrt{2 (1- \sqrt{1- \beta})^2 \log n}$. (Very Sparse)
\end{enumerate}
	If in the dense case $\mu_n = n^{r}$, then the LRT \eqref{eq:lrt} is consistent if $r> \beta -\frac{1}{2}$. Moreover, if $r< \beta-\frac{1}{2}$, then $\Pro_{\rm FA}(n)+ \Pro_{\rm MD}(n) \to 1$ for any sequence of tests as $n\to\infty$. If in the sparse cases, $\mu_n = \sqrt{2 r \log n}$, then the LRT is consistent if $\mu_n > \mu_{crit,n}$. Moreover, if $\mu_n < \mu_{crit,n}$, then $\Pro_{\rm FA}(n)+ \Pro_{\rm MD} (n)\to 1$ for any sequence of tests as $n\to\infty$. 
\end{theorem}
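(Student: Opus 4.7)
The proof combines a direct $\chi^2$-divergence computation in the Gaussian location model with applications of Theorems \ref{mthm_main}, \ref{ubthm}, and \ref{lbthm_main}, and classical impossibility arguments from the sparse-detection literature. The starting observation is $D_n^2 = \Exp_0[(\LR_n-1)^2] = e^{\mu_n^2} - 1$, which behaves like $\mu_n^2$ when $\mu_n \to 0$ (dense case) and like $n^{2r}$ in the sparse cases $\mu_n^2 = 2r\log n$. Thus $n\epsilon_n^2 D_n^2 \sim n^{1-2\beta+2r}$ in all regimes, and this quantity diverges iff $r > \beta - 1/2$.

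For achievability (consistency of the LRT above each stated boundary), I would argue regime by regime. In the dense case with $\mu_n = n^r$ and $r > \beta - 1/2$, Theorem \ref{mthm_main} applies directly: \eqref{eq:c2}--\eqref{eq:c3} reduce to $r \in (\beta - 1/2, \beta)$ (trivially satisfied here since $r < 0 < \beta$), while \eqref{eq:c1} follows from a Mills-ratio estimate on the Gaussian tail of $\LR_n$ under $\Hyp_0$. In the moderately sparse case $1/2 < \beta < 3/4$, the same application works whenever $r < \beta/3$, since the tail condition \eqref{eq:c1} translates (via the Gaussian computation $\Phi^c((\beta - 3r)\sqrt{\log n/(2r)}) \to 0$) into exactly $r < \beta/3$; the hypothesis $\beta < 3/4$ gives $\beta - 1/2 < \beta/3$, so this covers a nontrivial range just above the boundary. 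For larger $r$, Theorem \ref{ubthm} applies once $r > \beta$ after checking \eqref{eq:ubcond} by a Gaussian tail estimate on $\Pro_1[\LR_n \leq 1 + M/\epsilon_n]$; the remaining intermediate range is handled by appealing to the classical consistency of the Higher Criticism or Ingster test throughout the detectable region, combined with Neyman--Pearson optimality of the LRT for $\pfa + \pmd$. The very sparse case $3/4 \leq \beta < 1$ is analogous: Theorem \ref{ubthm} yields consistency for $r > \beta$, and for $r \in ((1-\sqrt{1-\beta})^2, \beta)$ we again invoke consistency of the classical adaptive tests plus LRT optimality.

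For the converse, below each stated boundary I would show that the total variation between the product null and alternative measures tends to zero. When $n\epsilon_n^2 D_n^2 \to 0$ (the impossible parts of the dense and moderately sparse regions), this is immediate from the second-moment bound $4\,\mathrm{TV}^2 \leq (1 + \epsilon_n^2 D_n^2)^n - 1 \to 0$. In the very sparse regime below $(1-\sqrt{1-\beta})^2$, the joint $\chi^2$ no longer goes to zero because rare but enormous values of $\LR_n$ under $\Hyp_0$ inflate the second moment, so one uses the classical truncation argument (Ingster, Donoho--Jin, Cai--Wu): truncate $\LR_n$ at an appropriate level, show the truncated and original measures are close in total variation, and apply the second-moment method to the product of the truncated measures.

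The main obstacle is the very sparse regime in both directions, since naive $\chi^2$ arguments on the full likelihood ratio fail and one must truncate. The precise level of truncation is exactly what produces the unusual boundary $(1 - \sqrt{1-\beta})^2$ rather than the naive $\beta - 1/2$ threshold coming from the $\chi^2$-divergence. Everything else reduces to Gaussian tail estimates fed into the machinery of Theorems \ref{mthm_main} and \ref{ubthm} and to the elementary $\chi^2$/TV inequalities.
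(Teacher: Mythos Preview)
The paper does not prove this theorem at all: its proof consists solely of the citation ``For the proof see \cite{castro,caijengjin,donoho}.'' Your proposal is therefore far more detailed than the paper's own treatment, and your overall strategy (achievability via the paper's Theorems~\ref{mthm_main} and~\ref{ubthm} near the boundaries, converse via the second-moment method with truncation in the very sparse regime) is the standard and correct route.

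That said, your attempt to make the achievability self-contained via the paper's machinery only partially succeeds, and you acknowledge this yourself. In the moderately sparse case Theorem~\ref{mthm_main} covers only $r\in(\beta-\tfrac12,\tfrac{\beta}{3})$ and Theorem~\ref{ubthm} only $r>\beta$; for $r\in(\tfrac{\beta}{3},\beta)$ you explicitly fall back on ``classical consistency of the Higher Criticism or Ingster test\ldots plus Neyman--Pearson optimality,'' which is precisely the same appeal to \cite{castro,caijengjin,donoho} that the paper makes. The very sparse case is similar. There is also a small slip in the dense case: your claim that \eqref{eq:c2}--\eqref{eq:c3} are ``trivially satisfied since $r<0<\beta$'' is not quite right, since $r>\beta-\tfrac12$ allows $r\ge 0$, and more importantly Corollary~\ref{densecase} shows that condition~\eqref{eq:c1} of Theorem~\ref{mthm_main} only holds up to $\mu_n\lesssim\sqrt{\tfrac{2}{3}\beta\log n}$, not for all $\mu_n$ above the boundary. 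So in every regime there is an intermediate band where the paper's rate theorems do not by themselves yield consistency of the LRT, and one is forced back to the literature --- which is exactly what the paper does. Your converse sketch (second-moment bound $4\,\mathrm{TV}^2\le (1+\epsilon_n^2 D_n^2)^n-1$ for $r<\beta-\tfrac12$, truncation for the very sparse regime) is correct and is the standard argument in the cited references.
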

\begin{proof}
For the proof see \cite{castro,caijengjin,donoho}.
\end{proof}

We call the set of $\{(\epsilon_n,\mu_n)\}$ sequences where \eqref{eq:lrt} is consistent the \emph{interior of the detectable region}. 
We now begin proving a rate characterization for the Gaussian location model by specializing Thm~\ref{mthm_main}. Note that $\LR_n(x) = e^{\mu_n x-\frac{1}{2}\mu_n^2}$ and $D_n^2 = e^{\mu_n^2}-1$.
A simple computation shows that the conditions in the theorem can be re-written as:

For all $\gamma>0$ sufficiently small:
\begin{gather}
\textstyle
Q\left(-{\frac{3}{2}} \mu_n + \frac{1}{\mu_n}\log \left(1 + \frac{\gamma}{\epsilon_n}\right)\right)+\frac{1}{e^{\mu_n^2}-1}\Big\{Q\left(-\frac{3}{2} \mu_n + \frac{1}{\mu_n}\log \left(1 + \frac{\gamma}{\epsilon_n}\right)\right)\label{eq:c1g}\\
\textstyle
- 2 Q\left(-\frac{1}{2} \mu_n + \frac{1}{\mu_n}\log \left(1 + \frac{\gamma}{\epsilon_n}\right)\right) + Q\left(\frac{1}{2} \mu_n + \frac{1}{\mu_n}\log \left(1 + \frac{\gamma}{\epsilon_n}\right)\right)\Big\} \to 0 \nonumber\\
\epsilon_n^2 (e^{\mu_n^2}-1) \to 0 \label{eq:c2g}
\\
\textstyle
n \epsilon_n^2 (e^{\mu_n^2}-1) \to \infty \label{eq:c3g}
\end{gather}
where $Q(x) = \int_x^\infty \frac{1}{\sqrt{2 \pi}} e^{-\frac{1}{2}x^2} dx$.

\begin{corollary} \label{densecase}(Dense case) If $\epsilon_n = n^{-\beta}$ for $\beta \in (0,\frac{1}{2})$ and $\mu_n = \frac{h(n)}{n^{\frac{1}{2}-\beta}}$ where $h(n)\to \infty$ and $\limsup_{n \to \infty} \frac{\mu_n}{\sqrt{\frac{2 }{3}\beta \log n}} <1$, then \begin{equation}
\lim_{n \to \infty} \frac{\log \Pro_{\rm FA}(n)}{n \epsilon_n^2 (e^{\mu_n^2}-1)} = -\frac{1}{8}. \label{eq:seventeen}
\end{equation}
If $\mu_n \to 0$, \eqref{eq:seventeen} can be rewritten as 
\begin{equation}
\lim_{n \to \infty} \frac{\log \Pro_{\rm FA}(n)}{n \epsilon_n^2 \mu_n^2} = -\frac{1}{8}.
\end{equation}
This result holds when replacing $\Pro_{\rm FA}(n)$ with $\Pro_{\rm MD}(n)$.
\end{corollary}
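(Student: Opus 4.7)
The plan is to apply Theorem \ref{mthm_main} by verifying conditions \eqref{eq:c1g}, \eqref{eq:c2g}, \eqref{eq:c3g} in the Gaussian location model. Throughout, write $\alpha_n := \mu_n^{-1}\log(1 + \gamma/\epsilon_n) = \mu_n^{-1}(\beta \log n + O(1))$, which uses $\epsilon_n = n^{-\beta}$. By the $\limsup$ hypothesis there exist $\eta > 0$ and $N$ such that $\mu_n^2 \leq \tfrac{2\beta}{3}(1-\eta)\log n$ for all $n \geq N$; this single inequality will drive the entire argument.

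First I would dispose of the easy conditions. For \eqref{eq:c2g}, the bound $\mu_n^2 \leq \tfrac{2\beta}{3}(1-\eta)\log n$ gives $e^{\mu_n^2} \leq n^{2\beta/3}$, so $\epsilon_n^2 e^{\mu_n^2} \leq n^{-4\beta/3} \to 0$. For \eqref{eq:c3g}, use $e^{\mu_n^2} - 1 \geq \mu_n^2$ to obtain
\begin{equation*}
n\epsilon_n^2(e^{\mu_n^2}-1) \;\geq\; n \cdot n^{-2\beta} \cdot \frac{h(n)^2}{n^{1-2\beta}} \;=\; h(n)^2 \;\to\; \infty.
\end{equation*}

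The main step is \eqref{eq:c1g}. Rather than tracking the three Q-function terms individually, I would bound the original tail expectation directly. Since the event $\{\LR_n \geq 1 + \gamma/\epsilon_n\}$ implies $\LR_n \geq 1/2$, on that event $(\LR_n - 1)^2 \leq \LR_n^2$, and therefore
\begin{equation*}
\frac{\Exp_0[(\LR_n-1)^2\indk\{\LR_n \geq 1 + \gamma/\epsilon_n\}]}{D_n^2} \;\leq\; \frac{\Exp_0[\LR_n^2\indk\{\LR_n \geq 1 + \gamma/\epsilon_n\}]}{e^{\mu_n^2}-1} \;=\; \frac{e^{\mu_n^2}}{e^{\mu_n^2}-1}\, Q\!\left(\alpha_n - \tfrac{3}{2}\mu_n\right),
\end{equation*}
using the standard Gaussian change-of-variable computation that produced \eqref{eq:c1g}. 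It remains to show the right-hand side tends to zero. From $\alpha_n - \tfrac{3}{2}\mu_n = \mu_n^{-1}(\beta\log n - \tfrac{3}{2}\mu_n^2 + O(1))$ and the assumption, this quantity is bounded below by $\eta\beta\log n/(2\mu_n)(1+o(1))$, which tends to $+\infty$ at rate at least $\sqrt{\log n}$ (since $\mu_n = o(\sqrt{\log n})$). Then I would split into two regimes: if $\mu_n$ is bounded away from $0$, the prefactor $e^{\mu_n^2}/(e^{\mu_n^2}-1)$ is bounded and $Q(\alpha_n - \tfrac{3}{2}\mu_n) \to 0$ suffices; if $\mu_n \to 0$, then $e^{\mu_n^2}-1 \sim \mu_n^2$ but $\alpha_n \sim \beta\log n/\mu_n$ grows super-polynomially fast, so $Q(\alpha_n - \tfrac{3}{2}\mu_n)/\mu_n^2 \to 0$ trivially.

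The hardest part is recognising that one should estimate $\Exp_0[(\LR_n-1)^2\indk{\cdot}]$ by $\Exp_0[\LR_n^2\indk{\cdot}]$ rather than fight with the signed sum of three Gaussian tails in \eqref{eq:c1g}; the naive estimate $Q(\alpha_n - \tfrac{3}{2}\mu_n)/(e^{\mu_n^2}-1) \to 0$ plus crude bounds on the two smaller terms would require $r_n < (3-2\sqrt{2})\beta$, strictly weaker than what the corollary claims. Once \eqref{eq:c1g}--\eqref{eq:c3g} are verified, Theorem \ref{mthm_main} yields \eqref{eq:seventeen} for both $\pfa(n)$ and $\pmd(n)$. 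The final display is immediate: when $\mu_n \to 0$, $e^{\mu_n^2} - 1 = \mu_n^2(1 + o(1))$, so the normalising sequence $n\epsilon_n^2(e^{\mu_n^2}-1)$ may be replaced by $n\epsilon_n^2 \mu_n^2$ without changing the limit.
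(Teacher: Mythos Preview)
Your proof is correct and follows the paper's strategy: verify \eqref{eq:c2g} and \eqref{eq:c3g} directly, then establish \eqref{eq:c1g} via Gaussian tail bounds with a case split on whether $\mu_n\to 0$. Your one twist is to bound $(\LR_n-1)^2\le \LR_n^2$ on the tail event and thereby collapse the signed three–term sum into the single quantity $\tfrac{e^{\mu_n^2}}{e^{\mu_n^2}-1}\,Q(\alpha_n-\tfrac{3}{2}\mu_n)$; the paper instead keeps the three $Q$–terms and, in the $\mu_n\to 0$ case, shows $Q(\alpha\mu_n+\alpha_n)/(e^{\mu_n^2}-1)\to 0$ for each $\alpha\in\{-\tfrac{3}{2},-\tfrac{1}{2},\tfrac{1}{2}\}$ separately.

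One correction to your commentary: the term-by-term (``naive'') route does \emph{not} lose range. In the decomposition \eqref{eq:c1g} the binding constraint comes from the free-standing term $Q(\alpha_n-\tfrac{3}{2}\mu_n)$, which requires exactly $\alpha_n-\tfrac{3}{2}\mu_n\to+\infty$, i.e.\ $\limsup \mu_n/\sqrt{\tfrac{2}{3}\beta\log n}<1$; the three bracketed terms divided by $e^{\mu_n^2}-1$ then go to zero under the same condition (trivially if $\mu_n\not\to 0$, and by the $Q(x)\le e^{-x^2/2}$ bound if $\mu_n\to 0$). So both arguments reach the full range stated in the corollary, and your $(3-2\sqrt{2})\beta$ threshold does not arise.
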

 
\begin{proof}
It is easy to verify \eqref{eq:c2g} and \eqref{eq:c3g} directly, and \eqref{eq:c1g} if $\mu_n$ does not tend to zero. To verify \eqref{eq:c1g} it suffices to show: If $\mu_n \to 0$, for any $\alpha \in \mathbb{R}$, then $\frac{Q(\alpha \mu_n + \frac{1}{\mu_n}\log(1+\frac{\gamma}{\epsilon_n}))}{e^{\mu_n^2}-1} \to 0$.
Since $e^x -1 \geq x$, it suffices to show that $\frac{Q(\alpha \mu_n + \frac{1}{\mu_n}\log(1+\frac{\gamma}{\epsilon_n}))}{\mu_n^2} \to 0$.
 This can be verified by the standard bound $Q(x) \leq e^{-\frac{1}{2} x^2}$ for $x>0$, and noting that $\alpha \mu_n + \frac{1}{\mu_n}\log(1+\frac{\gamma}{\epsilon_n})>0$ for sufficiently large $n$ and that $\frac{x}{e^{x}-1} \to 1$ as $x \to 0$. 
\end{proof}
The implication of this corollary is that our rate characterization of the probabilities of error holds for a large portion of the 
detectable region up to the detection boundary, as $h(n)$ can be taken such that $\frac{h(n)}{n^{\xi}} \to 0$ for any $\xi>0$, making it negligible with respect to $\mu_{crit,n}$ in Thm~\ref{dc}.

\begin{corollary} \label{ms} (Moderately sparse case) If $\epsilon_n = n^{-\beta}$ for $\beta \in (\frac{1}{2},\frac{3}{4})$ and $\mu_n = \sqrt{ 2(\beta+\frac{1}{2}+\xi) \log n}$ for any $0<\xi < \frac{3-4 \beta}{6}$ then \begin{equation}
\lim_{n \to \infty} \frac{\log \Pro_{\rm FA}(n)}{n \epsilon_n^2 (e^{\mu_n^2}-1)} = -\frac{1}{8}
\end{equation}
and the same result holds replacing $\Pro_{\rm FA}(n)$ with $\Pro_{\rm MD}(n)$.
\end{corollary}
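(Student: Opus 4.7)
The plan is to deduce Corollary~\ref{ms} as a direct specialization of Theorem~\ref{mthm_main} to the Gaussian location model. The work reduces to verifying the three hypotheses \eqref{eq:c1g}, \eqref{eq:c2g}, \eqref{eq:c3g} for the given scaling $\epsilon_n = n^{-\beta}$ and $\mu_n = \sqrt{2r \log n}$ (writing $r$ for the corollary's shift parameter). Using $e^{\mu_n^2} - 1 \sim n^{2r}$, conditions \eqref{eq:c2g} and \eqref{eq:c3g} collapse to the exponent inequalities $r < \beta$ and $r > \beta - \tfrac{1}{2}$ respectively, both of which hold in the admissible $\xi$-range. The $\Pro_{\rm MD}$ conclusion then follows automatically from the ``moreover'' clause of Theorem~\ref{mthm_main}, since the theorem already couples the two error probabilities.

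The crux is the tail condition \eqref{eq:c1g}. The three shifted $Q$-terms have arguments of the form $a \mu_n + \mu_n^{-1} \log(1 + \gamma/\epsilon_n)$ with $a \in \{-3/2, -1/2, 1/2\}$. Using $\log(1 + \gamma/\epsilon_n) = \beta \log n + O(1)$ and $\mu_n = \sqrt{2r \log n}$, each argument admits an asymptotic expansion $\sqrt{\log n / (2r)} \cdot (\beta + 2 a r) + O(1/\sqrt{\log n})$. The critical term is the one with $a = -3/2$, whose argument behaves like $(\beta - 3r) \sqrt{\log n/(2r)}$; forcing it to $+\infty$ requires $r < \beta/3$. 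The constraint $\xi < (3 - 4\beta)/6$ in the corollary is exactly this cap on $r$ (namely $r = \beta - \tfrac{1}{2} + \xi < \beta/3$), and the admissible window $r \in (\beta - \tfrac{1}{2}, \beta/3)$ is nonempty iff $\beta < 3/4$, explaining why the moderately sparse regime ends at the upper endpoint $3/4$.

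Once the three $Q$-arguments are confirmed to grow like $\sqrt{\log n}$ at positive rates, the Gaussian tail bound $Q(x) \leq e^{-x^2/2}$ gives polynomial-in-$n$ decay for each. The standalone $Q(T_n - 2\mu_n)$ term in \eqref{eq:c1g} therefore vanishes, and since $e^{\mu_n^2} - 1 \to \infty$ the compound ratio $[Q(T_n - 2\mu_n) - 2Q(T_n - \mu_n) + Q(T_n)]/(e^{\mu_n^2} - 1)$ vanishes a fortiori. This confirms \eqref{eq:c1g}, and Theorem~\ref{mthm_main} delivers the $-1/8$ limit for both $\Pro_{\rm FA}(n)$ and $\Pro_{\rm MD}(n)$. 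The only genuine obstacle is tracking the leading $\sqrt{\log n}$ coefficient for the $a = -3/2$ argument and matching the resulting $r < \beta/3$ requirement to the stated $\xi$-bound; everything else is bookkeeping.
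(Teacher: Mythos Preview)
Your proposal is correct and follows essentially the same route as the paper: verify \eqref{eq:c2g}, \eqref{eq:c3g} via the exponent inequalities $r<\beta$ and $r>\beta-\tfrac12$, then reduce \eqref{eq:c1g} to showing the $a=-3/2$ argument tends to $+\infty$, which yields exactly the constraint $r<\beta/3$ (equivalently $\xi<(3-4\beta)/6$). The only cosmetic difference is that the paper dispatches the bracketed term in \eqref{eq:c1g} by the cruder observation that the numerator is bounded while $e^{\mu_n^2}-1\to\infty$, so it never needs to analyze the $a=-1/2$ and $a=1/2$ arguments separately; your slightly more detailed treatment of all three is harmless extra work.
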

\begin{proof}
It is easy to verify \eqref{eq:c2g} and \eqref{eq:c3g} directly. To verify \eqref{eq:c1g}, note since $Q(\cdot)  \leq 1$ and $\mu_n \to \infty$, we need
\begin{multline*}
\textstyle
\frac{1}{e^{\mu_n^2}-1}\Big\{Q\left(-\frac{3}{2} \mu_n + \frac{1}{\mu_n}\log \left(1 + \frac{\gamma}{\epsilon_n}\right)\right)-\\
\textstyle
2 Q\left(-\frac{1}{2} \mu_n + \frac{1}{\mu_n}\log \left(1 + \frac{\gamma}{\epsilon_n}\right)\right)
 + Q\left(\frac{1}{2} \mu_n + \frac{1}{\mu_n}\log \left(1 + \frac{\gamma}{\epsilon_n}\right)\right) \Big\} \to 0.
\end{multline*}
Thus, it suffices to show  that$Q\big(\!-\frac{3}{2} \mu_n + \frac{1}{\mu_n}\log (1 + \frac{\gamma}{\epsilon_n})\big) \to 0 $, or equivalently, that $-\frac{3}{2} \mu_n + \frac{1}{\mu_n}\log (1 + \frac{\gamma}{\epsilon_n}) \to \infty$ for any fixed $\gamma>0$.  Applying $\log (1 + \frac{\gamma}{\epsilon_n}) \geq \log(\frac{\gamma}{\epsilon_n}) = \beta \log n + \log \gamma$ shows that
\begin{multline*} 
\textstyle
-\frac{3}{2} \mu_n + \frac{1}{\mu_n}\log \big(1 + \frac{\gamma}{\epsilon_n}\big) \geq\\
 -\frac{3}{2} \sqrt{ 2 (\beta -{\textstyle \frac{1}{2}} +\xi) \log n } + \frac{\beta \log n + \log \gamma }{\sqrt{2 (\beta - \frac{1}{2} +\xi) \log n}} \\
= \Big( -{\textstyle\frac{3}{2} }\sqrt{ 2 (\beta - {\textstyle \frac{1}{2}} +\xi) }  + \frac{\beta}{\sqrt{2 (\beta - \frac{1}{2} +\xi) }}\Big) \sqrt{\log n}
+ \frac{ \log \gamma }{\sqrt{2 (\beta - \frac{1}{2} +\xi) \log n}}  \numberthis \label{eq:vsp}
\end{multline*}
where the last term tends to 0 with $n$. Thus, \eqref{eq:vsp} tends to infinity if the coefficient of $\sqrt{\log n}$ is positive, i.e. if $\frac{1}{2} (1- 2 \xi) < \beta <\frac{1}{4} (3-6 \xi)$ , which holds by the definition of $\xi$. Thus, \eqref{eq:vsp} tends to infinity and \eqref{eq:c1g} is proved. 
\end{proof}
Note that $\xi$ can be replaced with an appropriately chosen sequence tending to $0$ such that \eqref{eq:c2g} and \eqref{eq:c3g} hold.
For $\mu_n> \sqrt{\frac{2}{3}\beta\log n}$, \eqref{eq:c1g} does not hold. However, Thm~\ref{ubthm} and Thm~\ref{lbthm_main}  provide a partial rate characterization for the case where $\mu_n$ grows faster than $\sqrt{2 \beta \log n}$ which we present in the following corollary.

\begin{corollary} \label{ssig}If $\epsilon_n = n^{-\beta}$ for $\beta \in (0,1)$ and 
 $\liminf_{n\to \infty} \frac{\mu_n}{ \sqrt{2 \beta \log n}} > 1$, then
\begin{equation}
\lim_{n\to \infty} \frac{ \log \Pro_{\rm MD}(n)}{n \epsilon_n} = -1. 
\end{equation}
If  $\frac{n \epsilon_n}{\mu_n^2} \to \infty$, then 
\begin{equation}
\limsup_{n\to \infty} \frac{ \log \Pro_{\rm FA}(n)}{n \epsilon_n} = -1
\end{equation}
Otherwise,  if $\frac{n \epsilon_n}{\mu_n^2} \to 0$, then
\begin{equation}
\limsup_{n\to \infty} \frac{ \log \Pro_{\rm FA}(n)}{\mu_n^2} \leq -\frac{1}{8}. \label{eq:degeneratebehavior}
\end{equation}
\end{corollary}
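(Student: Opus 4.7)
The plan is to specialize Theorems~\ref{ubthm} and \ref{lbthm_main} to the Gaussian location model. Three facts will be used throughout: $\LR_n(x) = e^{\mu_n x - \mu_n^2/2}$, so $\{\LR_n \geq c\} = \{X \geq \mu_n/2 + \mu_n^{-1}\log c\}$; tail probabilities take the form $\Pro_0[\LR_n \geq c] = Q(\mu_n/2 + \mu_n^{-1}\log c)$; and $\Exp_0[\LR_n\, g(\LR_n)] = \Exp_1[g(\LR_n)]$ by change of measure.

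The first step is to verify condition \eqref{eq:ubcond} of Theorem~\ref{ubthm}. The change of measure gives
\[
\Exp_0[\LR_n \ind{\LR_n > 1+M/\epsilon_n}] = Q\bigl(-\mu_n/2 + \mu_n^{-1}\log(1+M/\epsilon_n)\bigr),
\]
and under the hypothesis $\liminf \mu_n/\sqrt{2\beta\log n} > 1$ a short case split (according to whether $\mu_n^2$ is of order $\log n$ or grows much faster) shows that the argument of $Q$ tends to $-\infty$. Theorem~\ref{ubthm} then immediately delivers $\lim \log\pmd/(n\epsilon_n) = -1$ (the first claim) and the upper bound $\limsup \log\pfa/(n\epsilon_n) \leq -1$ needed in the regime $n\epsilon_n/\mu_n^2 \to \infty$.

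For the matching $\pfa$ lower bound in that regime, I would apply \eqref{eq:univfalb} of Theorem~\ref{lbthm_main}(b) to an explicit event. Setting the threshold $T = \mu_n$, each $X_i > T$ contributes at least $\log(\epsilon_n e^{\mu_n^2/2}) = \mu_n^2/2 - \beta\log n$ to the sum $\sum_i \log\max(1-\epsilon_n, \epsilon_n \LR_n(X_i))$, so with $k_n = \lceil n\epsilon_n/(\mu_n^2/2 - \beta\log n)\rceil$ the event $E_n = \{\#\{i : X_i > T\} \geq k_n\}$ forces that sum to be $\geq 0$ for large $n$, since the remaining $n-k_n$ samples contribute at least $\log(1-\epsilon_n) = -\epsilon_n(1+o(1))$ each. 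A Stirling-based large-deviation estimate for $\mathrm{Bin}(n, Q(\mu_n))$ in the regime $k_n \gg nQ(\mu_n)$ then yields
\[
\log \Pro_0[E_n] \sim -k_n\log\!\bigl(k_n/(nQ(\mu_n))\bigr) \sim -k_n\,(\mu_n^2/2 - \beta\log n) = -n\epsilon_n,
\]
so $\liminf \log\pfa/(n\epsilon_n) \geq -1$.

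For the remaining regime $n\epsilon_n/\mu_n^2 \to 0$, the other direction of Theorem~\ref{lbthm_main}(b) suffices: $\pfa \leq 1 - (1-Q(\mu_n/2))^n \leq n\, Q(\mu_n/2) \leq n e^{-\mu_n^2/8}$, and since the hypothesis forces $\mu_n^2 \gg n\epsilon_n \gg \log n$, dividing by $\mu_n^2$ gives $\limsup \log\pfa/\mu_n^2 \leq -1/8$. The main obstacle will be the lower-bound step: $T$ and $k_n$ must be calibrated so that (i) the deterministic margin $k_n(\mu_n^2/2 - \beta\log n)$ wipes out the $-n\epsilon_n$ contribution of the other samples, and (ii) the leading factor $\log(k_n/(nQ(\mu_n)))$ in the binomial large-deviation rate reduces cleanly to $\mu_n^2/2 - \beta\log n$. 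The choice $T = \mu_n$ achieves both, with the hypothesis $\mu_n^2 > 2\beta\log n$ ensuring that Stirling's subleading corrections are negligible compared to $n\epsilon_n$.
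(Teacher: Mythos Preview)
Your proposal is correct and matches the paper's argument for two of the three pieces: verifying condition \eqref{eq:ubcond} of Theorem~\ref{ubthm} (which yields \eqref{eq:ssmd} and the $\limsup\le -1$ half of the false-alarm rate), and applying \eqref{eq:univfaub} together with $Q(x)\le e^{-x^2/2}$ in the regime $n\epsilon_n/\mu_n^2\to 0$. The genuine difference is in the matching lower bound $\liminf \log\pfa(n)/(n\epsilon_n)\ge -1$ when $n\epsilon_n/\mu_n^2\to\infty$.

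The paper handles that step in two ways. It first asserts, without details, that the tilted-measure--plus--CLT machinery from the lower bound of Theorem~\ref{mthm_main} can be transported to the surrogate sum in \eqref{eq:univfalb}. It then offers a ``short proof'' via a deterministic index split,
\[
\pfa(n)\ \ge\ \Pro_0\!\Big[\sum_{i\le k}\log(1-\epsilon_n)+\sum_{i>k}\log\big(\epsilon_n\LR_n(X_i)\big)\ge 0\Big],
\]
which reduces to a single Gaussian tail; optimizing over $k$ gives only $\liminf\ge -C$ for some $C\ge 1$, with $C=1$ recovered only when $\mu_n/\sqrt{\log n}\to\infty$. Your exceedance-count approach is different from both: you force the event via $\#\{i:X_i>\mu_n\}\ge k_n$ and read off the rate from a Stirling estimate for $\mathrm{Bin}(n,Q(\mu_n))$ in the regime $k_n\gg nQ(\mu_n)$. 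This is more elementary than the tilted-measure route (no exponential change of measure, no Lindeberg--Feller verification) and, unlike the paper's short proof, delivers the sharp constant $-1$ uniformly across the regime, including the case $\mu_n\asymp\sqrt{\log n}$.

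One small calibration issue to flag: with $k_n=\lceil n\epsilon_n/(\mu_n^2/2-\beta\log n)\rceil$ the positive contribution on $E_n$ is only $\ge n\epsilon_n$, while the negative contribution from the remaining samples is $-(n-k_n)\log(1-\epsilon_n)=n\epsilon_n(1+o(1))$, so the sum is not guaranteed to be $\ge 0$. Replacing $k_n$ by $\lceil (1+\eta)\,n\epsilon_n/(\mu_n^2/2-\beta\log n)\rceil$ and sending $\eta\downarrow 0$ at the end repairs this without changing the asymptotic, exactly as you anticipate in your closing remarks about calibrating $T$ and $k_n$.
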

\begin{proof}
The condition for Thm~\ref{ubthm} given by \eqref{eq:ubcond} is 
$$
\textstyle
Q\left( \frac{1}{\mu_n}\log\big(1+\frac{M}{\epsilon_n}\big) - \frac{1}{2}\mu_n \right) \to 1.
$$
This holds if $ \frac{1}{\mu_n}\log(1+\frac{M}{\epsilon_n}) - \frac{1}{2}\mu_n \to - \infty$, which is true if $r > \beta$.

To show that $\liminf_{n\to \infty} \frac{ \log \Pro_{\rm FA}(n)}{n \epsilon_n} \geq  -1$ if $\frac{n \epsilon_n}{\mu_n^2} \to \infty$, we can apply a similar argument to the lower bound for Thm \ref{mthm_main} to the lower bound given by \eqref{eq:univfalb} and is thus omitted. Instead, we show a short proof of  $\liminf_{n\to \infty} \frac{ \log \Pro_{\rm FA}(n)}{n \epsilon_n} \geq  -C$ for $C \geq 1$ using \eqref{eq:univfalb}. Note that we can loosen \eqref{eq:univfalb} to 
\begin{equation*}
\Pro_{\rm FA}(n) \geq \Pro_0 \left[ \sum_{i=1}^k \log \left(1 - \epsilon_n \right) + \sum_{i=k+1}^n \log \big( \epsilon_n \LR_n(X_i) \big) \geq 0 \right]
\end{equation*} for any $k$ and explicitly compute a lower bound to $\Pro_{\rm FA}(n)$ in terms of the standard normal cumulative distribution function. Optimizing this bound over the choice of $k$ establishes that $\liminf_{n\to \infty} \frac{ \log \Pro_{\rm FA}(n)}{n \epsilon_n} \geq -C$ for some constant $C \geq 1$ (with $C=1$ if $\frac{\mu_n}{\sqrt{\log n}} \to \infty$). The lower bounding of \eqref{eq:univfalb} in a manner similar to \eqref{mthm_main} recovers the correct constant when $\mu_n$ scales as $\sqrt{2 r \log n}$. 

To see that the log-false alarm probability scales faster than $n \epsilon_n$ when $\frac{n \epsilon_n}{\mu_n^2} \to 0 $, one can apply \eqref{eq:univfaub}. In this case, 
$$
\log \Pro_{\rm FA} (n) \leq \log\left(1 - \big(1- Q({\textstyle\frac{1}{2}}\mu_n)\big)^n\right).
$$
Applying the standard approximation 
\begin{equation}
\frac{x e^{-\frac{1}{2}x^2}}{\sqrt{2 \pi}(1+x^2)} \leq Q(x) \leq \frac{e^{-\frac{1}{2}x^2}}{x \sqrt{2 \pi}} \text{  for } x>0 \label{eq:qapprox},
\end{equation} we see $\limsup_{n\to \infty} \frac{ \log \Pro_{\rm FA}(n)}{\mu_n^2} \leq - \frac{1}{8}$.  
\end{proof}

Note that \eqref{eq:degeneratebehavior} shows an asymmetry between the rates for the miss detection and false alarm probabilities, since there is a fundamental lower bound due to the sparsity under the alternative for the miss probability, but not under the null. 

Theorems \ref{mthm_main} and \ref{ubthm} do not hold when $\epsilon_n = n^{-\beta}$ and $\mu_n = \sqrt{2 r \log n}$ where $r \in ( \frac{\beta}{3}, \beta )$ for $\beta \in (0, \frac{3}{4})$ or $r \in ((1-\sqrt{1-\beta})^2,\beta)$ for $\beta \in (\frac{3}{4},1)$. For the remainder of the detectable region, we have an upper bound on the rate derived specifically for the Gaussian location setting. One can think of this as a case of ``moderate signals''.

\begin{theorem}\label{gausub_main} Let $\epsilon_n = n^{-\beta}$ and $\mu_n = \sqrt{2 r \log n}$ where $r \in \left( \frac{\beta}{3}, \beta \right)$ for $\beta \in (0, \frac{3}{4})$ or $r \in ((1-\sqrt{1-\beta})^2,\beta)$ for $\beta \in (\frac{3}{4},1)$. Then,
\begin{equation}
\limsup_{n \to \infty} \frac{ \log \Pro_{\rm FA} (n)}{n \epsilon_n^2 e^{\mu_n^2} \Phi\left( \big( \frac{\beta}{2r} - \frac{3}{2} \big) \mu_n \right)} \leq -\frac{1}{16}. \label{eq:gausubbd_main}
\end{equation}
where $\Phi(x) = 1-Q(x) = \int_{-\infty}^x \frac{1}{\sqrt{2 \pi}} e^{-x^2/2} dx $ denotes the standard normal cumulative distribution function.

Moreover, \eqref{eq:gausubbd_main} holds replacing $\Pro_{\rm FA}$ with $\Pro_{\rm MD}$.
\end{theorem}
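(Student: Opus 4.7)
The plan is to follow the Chernoff-bound template of Theorem \ref{mthm_main} at $s = 1/2$, but--since condition \eqref{eq:c1} fails in this moderate-signal regime--to lower-bound the resulting deficit by restricting to a bounded range of $\epsilon_n(\LR_n-1)$ and then compute the truncated second moment of $\LR_n$ explicitly using the Gaussian tail. First, I would note
$$
\Pro_{\rm FA}(n) \leq \bigl(\Exp_0\bigl[\sqrt{1 + \epsilon_n(\LR_n - 1)}\bigr]\bigr)^n = \Bigl(1 - \Exp_0\bigl[1 + \tfrac{1}{2}\epsilon_n(\LR_n-1) - \sqrt{1 + \epsilon_n(\LR_n - 1)}\bigr]\Bigr)^n,
$$
where the second equality uses $\Exp_0[\LR_n - 1] = 0$ and the integrand is non-negative. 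The identity $1 + y/2 - \sqrt{1+y} = y^2/[4(1 + y/2 + \sqrt{1+y})]$ together with $1 + y/2 + \sqrt{1+y} \leq 4$ for $y \in [0,1]$ yields the elementary bound $1 + y/2 - \sqrt{1+y} \geq y^2/16$ on this range. Restricting to the event $\{1 \leq \LR_n \leq 1 + 1/\epsilon_n\}$, on which $Y := \epsilon_n(\LR_n - 1) \in [0,1]$, gives
$$
\Exp_0\bigl[1 + Y/2 - \sqrt{1+Y}\bigr] \geq \frac{\epsilon_n^2}{16}\,\Exp_0\bigl[(\LR_n - 1)^2\ind{1 \leq \LR_n \leq 1 + 1/\epsilon_n}\bigr].
$$

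Second, in the Gaussian model $\LR_n(x) = e^{\mu_n x - \mu_n^2/2}$, the indicator event is $\{X \in [\mu_n/2, a_n]\}$ with $a_n = \mu_n/2 + \mu_n^{-1}\log(1+1/\epsilon_n)$. Completing the square yields $\Exp_0[\LR_n^2 \ind{X \leq a_n}] = e^{\mu_n^2}\Phi(a_n - 2\mu_n)$. Expanding $(\LR_n - 1)^2 = \LR_n^2 - 2\LR_n + 1$, the cross term $-2\Pro_1[X \leq a_n]$ and the probability term $\Pro_0[X \leq a_n]$ are $O(1)$, as is the mass $e^{\mu_n^2}\Phi(-3\mu_n/2)$ removed by the lower truncation at $\mu_n/2$; each is $o(e^{\mu_n^2}\Phi(a_n - 2\mu_n))$ provided $r > \beta/6$, which is implied by $r > \beta/3$ (respectively $r > (1-\sqrt{1-\beta})^2$ for $\beta > 3/4$). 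Plugging in $\mu_n = \sqrt{2r\log n}$ and $\epsilon_n = n^{-\beta}$, the expansion $\log(1+1/\epsilon_n) = \beta\log n + O(\epsilon_n)$ gives $a_n - 2\mu_n = (\beta/(2r) - 3/2)\mu_n + O(\epsilon_n/\mu_n)$; the difference of squares of the two arguments tends to zero, so Mills' ratio produces $\Phi(a_n - 2\mu_n) = (1+o(1))\,\Phi((\beta/(2r)-3/2)\mu_n)$.

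Combining and applying $\log(1-x) \leq -x$,
$$
\log\Pro_{\rm FA}(n) \leq -\frac{n\epsilon_n^2}{16}\, e^{\mu_n^2}\,\Phi\bigl((\beta/(2r)-3/2)\mu_n\bigr)(1+o(1)),
$$
which gives the claimed $\limsup \leq -1/16$. The $\Pro_{\rm MD}$ bound follows because the Chernoff bound at $s=1/2$ is invariant under the change of measure $d\Pro_1/d\Pro_0 = 1 + \epsilon_n(\LR_n-1)$: indeed $\Exp_1[(1+\epsilon_n(\LR_n-1))^{-1/2}] = \Exp_0[\sqrt{1+\epsilon_n(\LR_n-1)}]$, so the same estimate bounds $\Pro_{\rm MD}$.

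The main technical obstacle is the Mills-ratio bookkeeping: verifying that the cross term, probability term, and lower-tail loss are uniformly negligible compared to $e^{\mu_n^2}\Phi(a_n - 2\mu_n)$ across the prescribed range of $r$, and controlling the multiplicative correction from replacing $a_n - 2\mu_n$ by $(\beta/(2r)-3/2)\mu_n$. The range constraints on $r$ are exactly what guarantee $n\epsilon_n^2 e^{\mu_n^2}\Phi((\beta/(2r)-3/2)\mu_n)\to\infty$, so that the rate characterization is non-degenerate.
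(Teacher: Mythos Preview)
Your proposal is correct and follows essentially the same strategy as the paper: Chernoff bound at $s=\tfrac{1}{2}$, truncation to a region where $\epsilon_n(\LR_n-1)$ is bounded so that the deficit $1+y/2-\sqrt{1+y}$ can be compared to $y^2$, explicit computation of the truncated second moment of $\LR_n-1$ in the Gaussian model, and identification of $e^{\mu_n^2}\Phi\big((\tfrac{\beta}{2r}-\tfrac{3}{2})\mu_n\big)$ as the dominant term. The only notable difference is the truncation set: the paper restricts to $\{X_1<\kappa\mu_n\}$ with $\kappa=\tfrac{\beta}{2r}+\tfrac12$ (equivalently $\{\LR_n<1/\epsilon_n\}$, allowing $\LR_n<1$), whereas you use $\{1\le\LR_n\le1+1/\epsilon_n\}$; the paper's choice lands exactly at $\Phi\big((\kappa-2)\mu_n\big)=\Phi\big((\tfrac{\beta}{2r}-\tfrac{3}{2})\mu_n\big)$ without needing your Mills-ratio step comparing $\Phi(a_n-2\mu_n)$ to $\Phi\big((\tfrac{\beta}{2r}-\tfrac{3}{2})\mu_n\big)$, and it also avoids discarding the lower-tail mass $e^{\mu_n^2}\Phi(-\tfrac{3}{2}\mu_n)$.
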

\begin{proof} The proof is based on a Chernoff bound with $s=\frac{1}{2}$. Details are given in the Supplemental Material. 
\end{proof}
It is useful to note that $n \epsilon_n^2 e^{\mu_n^2} \Phi\big( ( \frac{\beta}{2r} - \frac{3}{2} ) \mu_n \big)$ behaves on the order of $\frac{n^{1-2 \beta + 2r - r( 1.5 - \beta/2r)^2}}{\sqrt{2 r \log n}}$ for large $n$ in Thm \ref{gausub_main}. 

\section{Rates and Adaptive Testing in the Gaussian Location Model} \label{sec:rtadap}
No adaptive tests prior to this work have had precise rate characterization. Moreover, optimally adaptive tests for $0<\beta<1$ such as the Higher Criticism (HC) \cite{donoho} test or the sign test of Arias-Castro and Wang (ACW) (\cite{castro}, Sec. 1.4)\footnote{We avoid the use of the acronym CUSUM since it is reserved for the most popular test for the quickest change detection problem in Sequential Analysis.}are not amenable to rate analysis based on current analysis techniques. This is due to the fact that the consistency proofs of these tests follow from constructing functions of order statistics that grow slowly under the null and slightly quicker under the alternative via a result of Darling and Erd\"os \cite{darlingerdos}. We therefore analyze the \emph{max test}:
\begin{equation}
\delta_{\max}(X_1,\ldots,X_n) \triangleq  \begin{cases} 1 & \max_{i=1,\ldots,n} X_i \geq \tau_n \\0 & \text{otherwise} \end{cases} \label{eq:maxtest}
\end{equation}
where $\tau_n$ is a sequence of test thresholds. 

While the max test is not consistent everywhere \eqref{eq:lrt} is \cite{castro,donoho}, it has a few advantages over other tests that are adaptive to all $\{(\epsilon_n,\mu_n)\}$ possible (i.e. optimally adaptive). The first advantage is a practical perspective; the max test requires a linear search and trivial storage complexity to find the largest element in a sample, whereas computing the HC or ACW test requires on the order of $n \log n$ operations to compute the order statistics of a sample of size $n$ (which may lead to non-trivial auxiliary storage requirements), along with computations depending on $Q$-functions or partial sums of the signs of the data. Moreover, the max test has been shown to work in applications such as astrophysics \cite{cayon}. It does not require specifying the null distribution, which allows it to be applied to the Subbotin location models as in \cite{castro}. The second advantage is analytical, as the cumulative distribution function of the maximum of an i.i.d. sample of size $n$ with cumulative distribution function $F(x)$ has the simple form of $F(x)^n$. This also provides a simple way to set the test threshold to meet a pre-specified false alarm probability for a given sample size $n$. As most applications focus on the regime where $\epsilon_n = n^{-\beta}$ for $\beta > \frac{1}{2}$, the following theorem shows the max test provides a simple test with rate guarantees for almost the entire detectable region in this case. 

\begin{theorem}\label{maxtest}
For the max test given by \eqref{eq:maxtest} with threshold $\tau_n = \sqrt{2 \log n}$:

The rate under the null is given by
\begin{equation}
\lim_{n \to \infty} \frac{\log \Pro_{\rm FA}(n)}{\log \log n} = - \frac{1}{2}. \label{eq:ratemaxfa}
\end{equation}

Under the alternative, if $\liminf_{n \to \infty} \frac{\mu_n}{\sqrt{2 (1-\sqrt{1-\beta})^2 \log n}} >1$ with $\epsilon_n = n^{-\beta}$, 
\begin{equation}
\lim_{n \to \infty} \frac{\log \Pro_{\rm MD}(n)}{n \epsilon_n Q(\sqrt{2 \log n} - \mu_n)} = -1. \label{eq:ratemax1md}
\end{equation} 
In particular, if $\liminf_{n \to \infty} \frac{\mu_n}{\sqrt{2 \log n}} >1$, the max test achieves the optimal rate under the alternative
\begin{equation}
\lim_{n \to \infty} \frac{\log \Pro_{\rm MD}(n)}{n \epsilon_n} = -1. \label{eq:ratemax2md}
\end{equation}
Otherwise, the max test is not consistent. 
\end{theorem}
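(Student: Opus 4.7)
The proof rests on the max statistic having an exact closed-form distribution: since $X_1,\ldots,X_n$ are i.i.d., $\Pro[\max_i X_i < \tau_n] = F_n(\tau_n)^n$ where $F_n$ is the marginal CDF of $X_1$ under the relevant hypothesis. Concretely, with $\tau_n = \sqrt{2\log n}$ this gives
\[
\Pro_{\rm FA}(n) = 1 - (1-Q(\tau_n))^n, \qquad \Pro_{\rm MD}(n) = (1 - p_n)^n,
\]
where $p_n := (1-\epsilon_n)\,Q(\tau_n) + \epsilon_n\,Q(\tau_n - \mu_n)$ is the one-sample probability of exceeding $\tau_n$ under the mixture. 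The whole proof then reduces to applying the Mills-ratio bounds \eqref{eq:qapprox} to these Gaussian tails and tracking which term of $p_n$ dominates.

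For \eqref{eq:ratemaxfa}, I would first use \eqref{eq:qapprox} to obtain $Q(\tau_n) = \Theta(1/(n\sqrt{\log n}))$, so $n Q(\tau_n) \to 0$, and the first-order expansion $1-(1-Q(\tau_n))^n = n Q(\tau_n)\,(1+o(1))$ becomes valid. Taking logarithms and substituting the Mills-ratio asymptotics then yields $\log \Pro_{\rm FA}(n) = -\tfrac{1}{2}\log\log n + O(1)$, from which \eqref{eq:ratemaxfa} follows immediately.

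For \eqref{eq:ratemax1md}, parameterize $\mu_n = \sqrt{2 r_n \log n}$ and suppose first $r_n \le 1$. Then \eqref{eq:qapprox} gives $\epsilon_n Q(\tau_n - \mu_n)/Q(\tau_n)$ of polynomial order $n^{1-\beta-(1-\sqrt{r_n})^2}$, up to sub-polynomial prefactors. The hypothesis $\liminf \mu_n/\sqrt{2(1-\sqrt{1-\beta})^2\log n} > 1$ is exactly $\limsup (1-\sqrt{r_n})^2 < 1-\beta$, so this exponent stays bounded below by some fixed $\eta > 0$; this simultaneously secures $p_n \sim \epsilon_n Q(\tau_n - \mu_n)$ and $n p_n \to \infty$, so the expansion $\log(1-p_n)^n = -n p_n(1+o(1))$ delivers \eqref{eq:ratemax1md}. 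The regime $r_n > 1$ is easier since $Q(\tau_n - \mu_n) \ge 1/2$ makes domination automatic, and this regime also gives \eqref{eq:ratemax2md} directly once $\liminf \mu_n/\sqrt{2\log n} > 1$ forces $\tau_n - \mu_n \to -\infty$ and hence $Q(\tau_n - \mu_n) \to 1$.

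Finally, the inconsistency claim reverses the previous computation: if $\liminf \mu_n/\sqrt{2(1-\sqrt{1-\beta})^2\log n} \le 1$, then along a subsequence $1-\beta-(1-\sqrt{r_n})^2 \le 0$, so $n \epsilon_n Q(\tau_n - \mu_n) \to 0$ there (the $1/\sqrt{\log n}$ Mills-ratio prefactor only helps); combined with $n Q(\tau_n) \to 0$, this forces $n p_n \to 0$ along that subsequence, and hence $\Pro_{\rm MD}(n) = (1-p_n)^n \to 1$, ruling out consistency. The main obstacle I anticipate is the dominance step in the miss-detection analysis: the two tail contributions to $p_n$ differ in $n$ only through a polynomial factor whose exponent can tend to zero, so the strict $\liminf > 1$ hypothesis must be converted carefully into a strict polynomial separation before the $\log(1-p_n)^n \sim -n p_n$ expansion can be invoked. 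Apart from this bookkeeping, the argument is essentially a careful exercise in Gaussian tail asymptotics.
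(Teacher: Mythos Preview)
Your proposal is correct and follows exactly the paper's approach: write down the closed-form error probabilities $\Pro_{\rm FA}(n)=1-\Phi(\tau_n)^n$ and $\Pro_{\rm MD}(n)=\big((1-\epsilon_n)\Phi(\tau_n)+\epsilon_n\Phi(\tau_n-\mu_n)\big)^n$ from the CDF of the maximum, then apply the Mills-ratio bounds \eqref{eq:qapprox} to extract the asymptotics. The paper's proof is in fact just a two-sentence pointer to these formulas and the $Q$-function approximation, so your write-up supplies considerably more of the routine bookkeeping (the dominance analysis via the parameterization $\mu_n=\sqrt{2r_n\log n}$, and the $np_n\to\infty$ versus $np_n\to 0$ dichotomy) than the paper itself does.
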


\begin{figure}[!b]
        \includegraphics[width=0.5\textwidth]{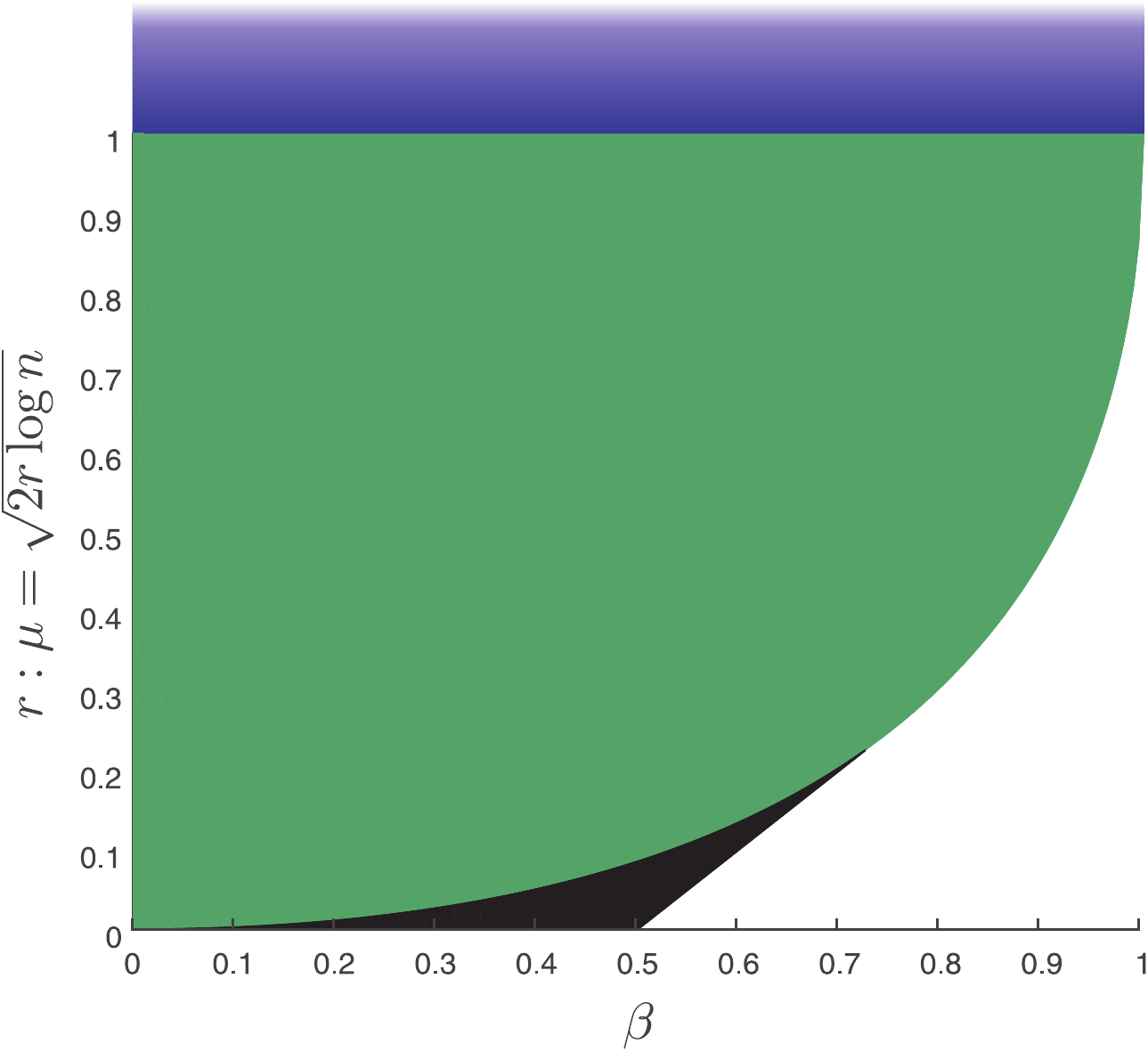}
        \caption{Detectable region of the Max test. White denotes where detection is impossible for any test. Black denotes where the max test is inconsistent. Green denotes where the max test is consistent, but has suboptimal rate under the alternative compared to \eqref{eq:lrt}. Blue denotes where the max test achieves the optimal rate under the alternative. Compare to Fig.~\ref{fig:glmsummarya}.}
\label{fig:maxtest}
\end{figure}
\begin{proof}
The error probabilities for the max test given by \eqref{eq:maxtest} with threshold $\tau_n$
\begin{gather}
\Pro_{\rm FA} (n) = 1-\Phi(\tau_n)^n \label{eq:famax}\\
\Pro_{\rm MD} (n) = \big( (1- \epsilon_n) \Phi(\tau_n) + \epsilon_n \Phi(\tau_n - \mu_n) \big)^n \label{eq:mdmax}
\end{gather}
follow from the cumulative distribution function of the maximum of an i.i.d. sample. The rates \eqref{eq:ratemaxfa},\eqref{eq:ratemax1md},\eqref{eq:ratemax2md} as well as the condition for inconsistency  are derived by applying the approximation \eqref{eq:qapprox} to \eqref{eq:famax} and \eqref{eq:mdmax}.  
\end{proof}

The results of Thm.~\ref{maxtest} are summarized in Fig.~\ref{fig:maxtest}. In particular, if we take $\mu_n = \sqrt{2 r \log n}$ with $r \in \big((1-\sqrt{1-\beta})^2 , 1\big)$, we see $\log \Pro_{\rm MD}(n)$ scales on the order of $\frac{n^{1-\beta-(1-\sqrt{r})^2}}{(1- \sqrt{r}) \sqrt{2 \log n}}$. This is suboptimal compared to the rates achieved by the (non-adaptive) likelihood ratio test \eqref{eq:lrt}, but is of polynomial order (up to a sub-logarithmic factor). Note that the rate of decay of the sum error probability can be slower than that of the miss detection probability, since the false alarm probability is fixed by the choice of threshold, independent of the true $\{(\epsilon_n,\mu_n)\}$ for adaptivity.

\section{Numerical Experiments}

In this section, we provide numerical simulations to verify the rate characterization developed for the Gaussian location model as well as some results comparing the performance of adaptive tests.  

\subsection{Rates for the Likelihood Ratio Test}
\begin{figure}[b!]
\centering
 \begin{subfigure}[t]{0.48\textwidth}
\includegraphics[width=\textwidth]{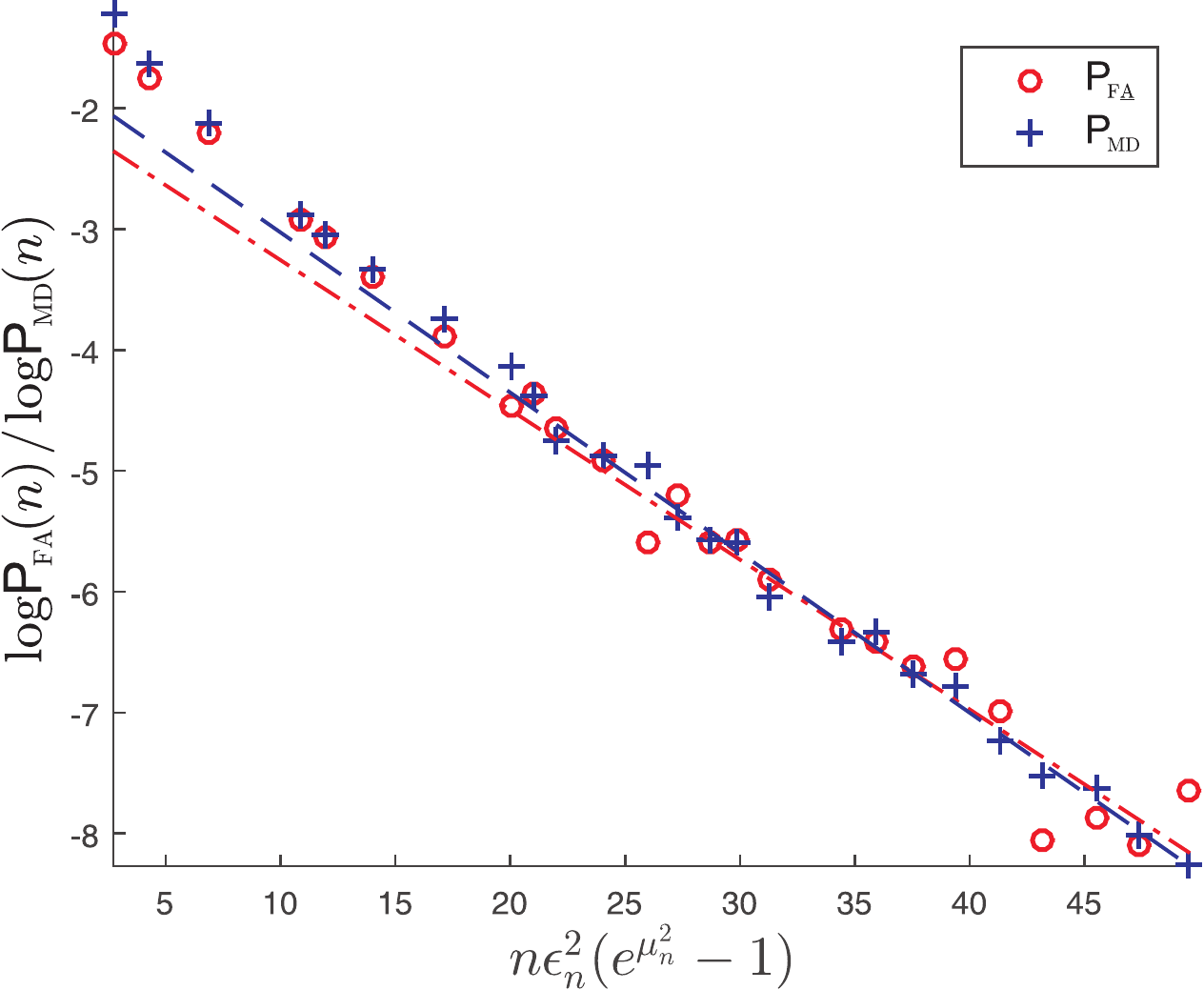}
\caption{Simulations of error probabilities in the Gaussian location model with $\mu_n = 1, \epsilon_n = n^{-0.4}$ for the test \eqref{eq:lrt}. A best fit line for $\log \Pro_{\rm MD}(n)$ is given as a blue dashed line and corresponding line for $\log \Pro_{\rm FA}(n)$ is given as a red dot-dashed line.}
\label{fig:b04m1}
\end{subfigure}
\hfill
\begin{subfigure}[t]{0.48\textwidth}
\includegraphics[width=\textwidth]{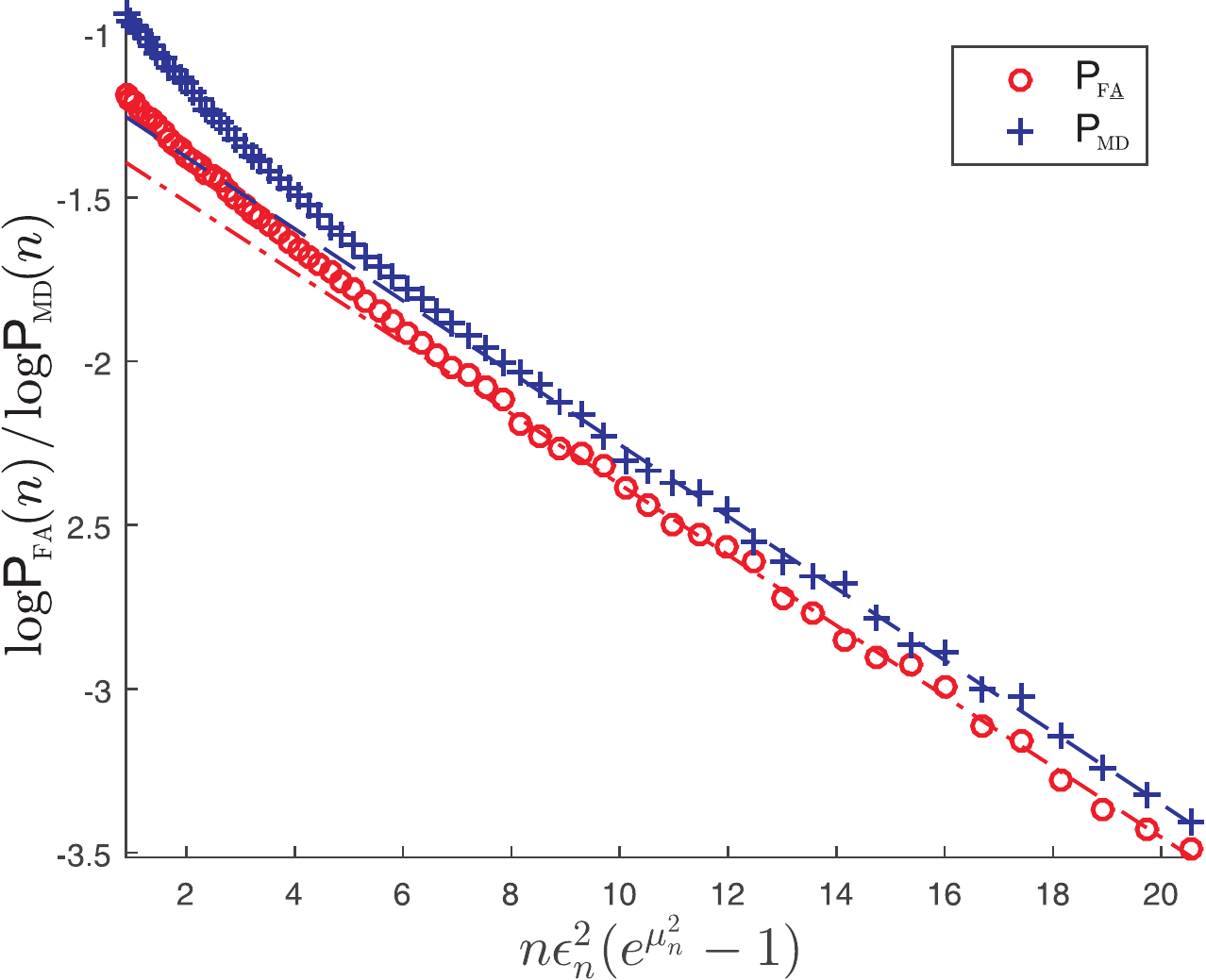}
\caption{Simulations of error probabilities in the Gaussian location model with $\mu_n = \sqrt{2 (0.19) \log n}, \epsilon_n = n^{-0.6}$ for the test \eqref{eq:lrt}. A best fit line for $\log \Pro_{\rm MD}(n)$ is given as a blue dashed line and corresponding line for $\log \Pro_{\rm FA}(n)$ is given as a red dot-dashed line.}
\label{fig:b06}
\end{subfigure}
\caption{Simulation results for Cor. \ref{densecase} and \ref{ms}}
\end{figure}
We first consider the dense case, with $\epsilon_n = n^{-0.4}$ and $\mu_n=1$. The conditions of Cor. \ref{densecase} apply here, and we expect $\frac{\log \Pro_{\rm FA}(n)}{n \epsilon_n^2 (e^{\mu_n^2}-1)} \to -\frac{1}{8}$. Simulations were done using direct Monte Carlo simulation with $10000$ trials for the errors for $n \leq 10^6$. Importance sampling via the hypothesis alternate to the true hypothesis (i.e. $\Hyp_{0,n}$ for simulating $\Pro_{\rm MD}(n)$, $\Hyp_{1,n}$ for simulations $\Pro_{\rm FA}(n)$) was used for $10^6 < n \leq 2 \times 10^7$ with between $10000-15000$ data points. The performance of the test given \eqref{eq:lrt} is shown in Fig. \ref{fig:b04m1}. The dashed lines are the best fit lines between the log-error probabilities and $n \epsilon_n^2 (e^{\mu_n^2}-1)$ using data for $n \geq 350000$. By Cor. \ref{densecase}, we expect the slope of the best fit lines to be approximately $-\frac{1}{8}$. This is the case, as the line corresponding to missed detection has slope $-0.13$ and the line corresponding to false alarm has slope $-0.12$. 

The moderately sparse case with $\epsilon_n = n^{-0.6}$ and $\mu_n = \sqrt{2 (0.19) \log n}$ is shown in Fig. \ref{fig:b06}. The conditions of Cor. \ref{ms} apply here, and we expect $\frac{\log \Pro_{\rm FA}(n)}{n \epsilon_n^2 (e^{\mu_n^2}-1)} \to -\frac{1}{8}$.  Simulations were performed identically to the dense case. The dashed lines are the best fit lines between the log-error probabilities and  $n \epsilon_n^2 (e^{\mu_n^2}-1)$ using data for $n \geq 100000$. By Cor. \ref{ms}, we expect the slope of the best fit lines to be approximately $-\frac{1}{8}$. Both best fit lines have slope of $-0.11$. It is important to note that $\Pro_{\rm FA}(n),\Pro_{\rm MD}(n)$ are both large even at $n=2 \times 10^{7}$ and simulation to larger sample sizes should show better agreement with Cor. \ref{ms}. 

\subsection{Adaptive Testing}
In order to implement an adaptive test, the threshold for the test statistic must be chosen in order to achieve a target false alarm probability. This can be done analytically for the max test by inverting \eqref{eq:famax}. For other tests, which do not have tractable expressions for the false alarm probability, we set the threshold by simulating the test statistic under the null. The threshold is chosen such that the empirical fraction of exceedances of the threshold matches the desired false alarm. As expected, the adaptive tests cannot match the rate under the null with non-trivial behavior under the alternative, and therefore we report the results for adaptive tests at the standard $0.05$ and $0.10$ levels. The miss detection probabilities reported for the max test were computed analytically via \eqref{eq:mdmax}. Note that the likelihood ratio test \eqref{eq:lrt} with threshold set to meet a given false alarm level is the oracle test which minimizes the miss detection probability \cite{dz}. 

As multiple definitions of the Higher Criticism test exist in literature, we use the following version from \cite{caijengjin}: Given a sample $X_1, \ldots, X_n$, let $p_i = Q(X_i)$ for $1\leq i \leq n$. Let $\{p_{(i)}\}$ denote $\{p_i\}$ sorted in ascending order. Then, the higher criticism statistic is given by 
\begin{equation}
\text{HC}^*_n = \max_{1 \leq i \leq n} \text{HC}_{n,i} \text{ where } \text{HC}_{n,i} = \frac{ \frac{i}{n} - p_{(i)}}{\sqrt{p_{(i)} (1- p_{(i)})}} \sqrt{n}
\end{equation}
and the null hypothesis is rejected when $\text{HC}^*_n$ is large. The HC test is optimally adaptive, i.e. is consistent whenever \eqref{eq:lrt} is. 

The ACW test \cite{castro} is implemented as follows: Given the samples $X_1,\ldots,X_n$, let $X_{[i]}$ denote the $i$-th largest sample by absolute value. Then,
\begin{equation}
S^*= \max_{1\leq k \leq n} \frac{\sum_{i=1}^k \text{sgn}(X_{[i]})}{\sqrt{k}}
\end{equation}
and the null hypothesis is rejected when $S^*$ is large. The ACW test is adaptive for $\beta>\frac{1}{2}$. It is unknown how the ACW test behaves for $\beta\leq \frac{1}{2}$. Note that like the Max test (and unlike the HC test), the ACW test does not exploit exact knowledge of the null distribution (but assumes continuity and symmetry about zero). 

\begin{table}[!b]
\footnotesize
	\begin{tabular}{|c|c|c|}
\hline
	\multicolumn{3}{|c|}{LRT}\\
\hline
$n$ & $\pfa(n)$ & $\pmd(n)$  \\
\hline
$10$   & 0.307 & 0.388 \\
$10^2$ & 0.258 & 0.320 \\
$10^3$ & 0.213 & 0.256 \\
$10^4$ & 0.166 & 0.193 \\
$10^5$ & 0.119 & 0.134 \\
$10^6$ & 0.074 & 0.084 \\
\hline
\end{tabular}
	\caption{Error probabilities for $\mu_n=\sqrt{2(0.19)\log n}$, $\epsilon_n=n^{-0.6}$ for the LRT given by \eqref{eq:lrt}. }
\label{table:spweak}
\vskip0.5cm
\footnotesize
\begin{tabular}{|c|c|c|c|c||c|c|c|c|}
\cline{2-9}
\multicolumn{1}{c|}{~}&\multicolumn{4}{c||}{$\Pro_{\text{FA}}=0.05$}&\multicolumn{4}{c|}{$\Pro_{\text{FA}}=0.10$}\\
\hline
$n$ & LRT & Max & HC & ACW & LRT & Max & HC & ACW \\
\hline
	$10$   & 0.776 & 0.845 & 0.790 &  0.807 & 0.665 & 0.744 & 0.666& 0.706 \\
	$10^2$ & 0.667 & 0.814 &  0.775 & 0.816 & 0.542 & 0.704 & 0.630& 0.722  \\
	$10^3$ & 0.548 & 0.789 &  0.728 & 0.792 & 0.417 & 0.672 & 0.561& 0.653  \\
	$10^4$ & 0.403 & 0.762 &  0.688 & 0.751 & 0.281 & 0.639 & 0.491& 0.617  \\
	$10^5$ & 0.252 & 0.733 &  0.623 & 0.685 & 0.158 & 0.603 & 0.396& 0.539  \\
	$10^6$ & 0.119 & 0.699 &  0.546 & 0.602 & 0.064 & 0.562 & 0.295& 0.446  \\
\hline
\end{tabular}
	\caption{Miss Detection probabilities for $\mu_n=\sqrt{2(0.19)\log n}$, $\epsilon_n=n^{-0.6}$, for False Alarm probability 0.05 and 0.10.}
\label{table:spweakfa}
\end{table}
The performance of test \eqref{eq:lrt} is summarized in Table \ref{table:spweak} with a comparison of adaptive tests in the moderately sparse example from the previous section is given in Table \ref{table:spweakfa}. We used $115000$ realizations of the null and alternative. The sample sizes illustrated were chosen to be comparable with applications of sparse mixture detection, such as the WMAP data in \cite{cayon} which has  $n \approx 7 \times 10^4$. Thus, our simulations provide evidence for both larger and smaller sample sizes than used in practice. We see there is a large gap in performance between the likelihood ratio test \eqref{eq:lrt} and the adaptive tests, but the Higher Criticism test performs significantly better than the Max or ACW tests.

\begin{table}[!t]
\footnotesize
	\begin{tabular}{|c|c|c|}
\hline
	\multicolumn{3}{|c|}{LRT}\\
\hline
$n$ & $\pfa(n)$ & $\pmd(n)$  \\
\hline
$10$   & 1.62e-1 & 2.75e-1 \\
$10^2$ & 6.31e-2 & 1.12e-1 \\
$10^3$ & 7.63e-3 & 1.36e-2 \\
$10^4$ & 5.38e-5 & 8.83e-5 \\
\hline
\end{tabular}
	\caption{Error probabilities for $\mu_n=\sqrt{2(0.66)\log n}$, $\epsilon_n=n^{-0.6}$ for the LRT given by \eqref{eq:lrt}. }
\label{table:spstrong}

\vskip0.5cm
\footnotesize
\begin{tabular}{|c|c|c|c|c||c|c|c|c|}
\cline{2-9}
\multicolumn{1}{c|}{~}&\multicolumn{4}{c||}{$\Pro_{\text{FA}}=0.05$}&\multicolumn{4}{c|}{$\Pro_{\text{FA}}=0.10$}\\
\hline
$n$ & LRT & Max & HC & ACW & LRT & Max & HC & ACW \\
\hline
	$10$   & 4.66e-1  & 5.66e-1 & 7.18e-1   & 5.88e-1 & 3.59e-1 & 4.36e-1 & 3.38e-1 & 5.88e-1 \\
	$10^2$ & 1.28e-1  & 2.56e-1 &  6.24e-1  & 4.80e-1 & 8.45e-2 & 1.61e-1 & 1.07e-1 & 4.80e-1   \\
	$10^3$ & 3.69e-3 & 4.40e-2 &  2.48e-2 & 1.33e-1 & 1.89e-3 & 1.80e-2 & 4.20e-3 & 1.33e-1  \\
	$10^4$ & 2.12e-7 & 8.08e-4 &  $\leq$ 1e-5       & 4.43e-3 & 7.10e-8 & 1.32e-4 & $\leq$ 1e-5 & 1.25e-3  \\
\hline
\end{tabular}
\caption{Miss Detection probabilities for $\mu_n=\sqrt{2(0.66)\log n}$, $\epsilon_n=n^{-0.6}$ for False Alarm probability 0.05 and 0.10. }
\label{table:spstrongfa}
\end{table}
For the case of strong signals, we calibrate as $\mu_n = \sqrt{2 (0.66) \log n}$ for $\epsilon_n = n^{-0.6}$. This corresponds to the rates given by Thm. \ref{ubthm}. The performance of test \eqref{eq:lrt} is summarized in Table \ref{table:spstrong} with a comparison of adaptive tests in the moderately sparse example from the previous section is given in Table~\ref{table:spstrongfa}. Here we used $180000$ realizations of the null and alternative. As even the max test has error probabilities sufficiently small for many applications in this regime at moderate sample sizes (which are still on the order used in applications \cite{cayon}), we only consider sample sizes up to $n=10^4$. We see that in the strong signal case, the likelihood ratio test performs better than the adaptive tests, but all tests produce sufficiently small error probabilities for most applications.

\section{Conclusions and Future Work}

In this paper, we have presented an rate characterization for the error probability decay with sample size in a general mixture detection problem for the likelihood ratio test. In the Gaussian location model, we explicitly showed that the rate characterization holds for most of the detectable region. A partial rate characterization (an upper bound on the rate under both hypotheses and universal lower bound on the rate under $\Hyp_{1,n}$) was provided for the remainder of the detectable region. In contrast to usual large deviations results \cite{cover,dz} for the decay of error probabilities, our results show that the log-probability of error decays sublinearly with sample size. 

There are several possible extensions of this work. One is to provide corresponding lower bounds for the rate in cases not covered by Thm \ref{mthm_main}. Another is to provide a general analysis of the behavior that is not covered by Thm \ref{mthm_main} and \ref{ubthm}, present in Thm \ref{gausub_main} in the Gaussian location model. As noted in \cite{caijengjin}, in some applications it is natural to require $\pfa(n) \leq \alpha$ for some fixed $\alpha>0$, rather than requiring $\pfa(n) \to 0$. While Thm~\ref{dc} shows the detectable region is not enlarged under in the Gaussian location model (and similarly for some general models \cite{caiwu}), it is conceivable that the oracle optimal test which fixes $\pfa(n)$ (i.e. one which compares $\text{LLR}(n)$ to a non-zero threshold) can achieve a better rate for $\pmd(n)$. It is expected that the techniques developed in this paper extend to the case where $\pfa(n)$ is constrained to a level $\alpha$. In the Gaussian location model, the analysis of \eqref{eq:lrt} constrained to level $\alpha$ problem has been studied in \cite{ingster} via contiguity arguments. 

Finally, it is important to develop tests that are amenable to a rate analysis and are computationally simple to implement over $0<\beta<1$. In the case of weak signals in the Gaussian location model, we see that the error probabilities for the likelihood ratio test, which establish the fundamental limit on error probabilities, decay quite slowly even with large sample sizes. In this case, closing the gap between the likelihood ratio test and adaptive tests is important for applications where it is desirable to have high power tests. In the case of strong signals, we see the miss detection probability for even the simplest adaptive test, the max test, are very small for moderate sample sizes at standard false alarm levels so the rate of decay is not as important as the weak signal case for applications.

\begin{supplement}[id=suppA]
\stitle{Supplemental Material for ``Detecting Sparse Mixtures: Rate of Decay of Error Probability''}
\slink[doi]{COMPLETED BY THE TYPESETTER}
\sdatatype{.pdf}
\sdescription{We provide details of proofs of the main theorems.}
\end{supplement}

\printaddresses

\begin{frontmatter}
\thispagestyle{plain}
\title{Supplemental Material for:\\ ``Detecting Sparse Mixtures: Rate of Decay of Error Probability''}
\runtitle{Detecting Sparse Mixtures''}

\begin{aug}
\author{\fnms{Jonathan G.} \snm{Ligo}\thanksref{m1}\ead[label=e1]{ligo2@illinois.edu}},
\author{\fnms{George V.} \snm{Moustakides}\thanksref{m2}\ead[label=e2]{moustaki@upatras.gr}}
\and
\author{\fnms{Venugopal V.} \snm{Veeravalli}\thanksref{m1}
\ead[label=e3]{vvv@illinois.edu}
}

\thankstext{T1}{Supported by the US National Science Foundation under grants CIF\,1514245 and CIF\,1513373.}
\runauthor{J. G. Ligo et al.}

\affiliation{University of Illinois at Urbana-Champaign\thanksmark{m1}, University of Patras\thanksmark{m2} and Rutgers University\thanksmark{m2}}

\address{Coordinated Science Laboratory\\ and\\
Department of Electrical and\\ 
Computer Engineering\\
University of Illinois at\\
Urbana-Champaign\\
Urbana, IL 61801, USA\\
\printead{e1}\\
\phantom{E-mail:\ }\printead*{e3}}

\address{Department of Electrical and\\
Computer Engineering\\
University of Patras \\
26500 Rio, Greece\\
and\\
Department of Computer Science\\
Rutgers University \\
New Brunswick, NJ 08854, USA\\
\printead{e2}
}

\end{aug}

\end{frontmatter}

\setcounter{page}{1}

\section{Weak Signals: Supporting Lemmas}
In this section, we provide the proofs of the lemmas that are necessary for establishing the validity of Theorem \ref{mthm_main}.

\begin{lemma} \label{suplem1} 
Under the assumptions of Theorem \ref{mthm_main}, there exist positive constants $C_1,C_2$ such that for sufficiently large $n$ we have
$$
C_1 \epsilon_n^2 D_n^2 \geq \sigma_n^2 \geq C_2 \epsilon_n^2 D_n^2,
$$ 
where $\sigma_n^2$ is defined in \eqref{eq:sigma2}.
\end{lemma}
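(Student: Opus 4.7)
The plan is to write $\sigma_n^2 = \Lambda_n(s_n)^{-1}\,\Exp_0[(\log(1+Y))^2(1+Y)^{s_n}]$ with $Y = \epsilon_n(\LR_n-1)\ge -\epsilon_n$, and then to show (i) $\Lambda_n(s_n)$ is sandwiched between positive constants (in fact $\Lambda_n(s_n)\to 1$), and (ii) the expectation on the right is of order $\epsilon_n^2 D_n^2 = \Exp_0[Y^2]$.

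First I would bound $\Lambda_n(s_n)$. The inequality $\Lambda_n(s_n)\le \Lambda_n(0)=1$ is immediate from the definition of $s_n$. For the lower bound, Jensen's inequality applied to $e^x$ gives $\Lambda_n(s_n)\ge \exp\{s_n\Exp_0[\log(1+Y)]\}$. Since $Y\ge-\epsilon_n$, on $\{Y<0\}$ one has $\log(1+Y)\ge \log(1-\epsilon_n)\ge -2\epsilon_n$ (for $\epsilon_n\le 1/2$), while $\log(1+Y)\ge 0$ on $\{Y\ge 0\}$. Hence $\Exp_0[\log(1+Y)]\ge -2\epsilon_n$, so $\Lambda_n(s_n)\ge e^{-2\epsilon_n s_n}\to 1$; in particular $\Lambda_n(s_n)\ge 1/2$ eventually.

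For the upper bound on $\sigma_n^2$, I would establish the pointwise inequality $(\log(1+Y))^2(1+Y)^{s_n}\le 4Y^2$ valid for all $Y\ge-\epsilon_n$ and $s_n\in[0,1]$. For $Y\ge 0$, applying $\log x\le x-1$ at $x=\sqrt{1+Y}$ yields the sharpened bound $\log(1+Y)\le 2(\sqrt{1+Y}-1) = 2Y/(1+\sqrt{1+Y})$, hence $(\log(1+Y))^2\le 4Y^2/(1+\sqrt{1+Y})^2$; combined with $(1+Y)^{s_n}\le 1+Y$ and the elementary $(1+\sqrt{1+Y})^2 = 2+Y+2\sqrt{1+Y}\ge 1+Y$, the claim follows. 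For $-\epsilon_n\le Y<0$ the bounds $|\log(1+Y)|\le 2|Y|$ and $(1+Y)^{s_n}\le 1$ suffice. Integrating gives $\Exp_0[(\log(1+Y))^2(1+Y)^{s_n}]\le 4\epsilon_n^2 D_n^2$, and combined with $\Lambda_n(s_n)^{-1}\le 2$ this furnishes $\sigma_n^2\le 8\epsilon_n^2 D_n^2$, so $C_1=8$ works for large $n$.

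For the matching lower bound, I would fix $\gamma\in(0,\gamma_0\wedge 1)$ and restrict the expectation to $A_n=\{\LR_n\le 1+\gamma/\epsilon_n\}=\{|Y|\le\gamma\}$. On $A_n$ the integral estimate $|\log(1+Y)|\ge |Y|/(1+\gamma)$ (from $1/(1+t)\ge 1/(1+\gamma)$ for $t\in[-\gamma,\gamma]$) and the bound $(1+Y)^{s_n}\ge (1-\gamma)^{s_n}\ge 1-\gamma$ yield the pointwise lower bound $(1-\gamma)Y^2/(1+\gamma)^2$ on the integrand. Hypothesis \eqref{eq:c1} forces $\Exp_0[(\LR_n-1)^2\ind{A_n}]=D_n^2(1-o(1))$, so
$$\Exp_0[(\log(1+Y))^2(1+Y)^{s_n}]\ge \frac{1-\gamma}{(1+\gamma)^2}\,\epsilon_n^2 D_n^2(1-o(1)),$$
and using $\Lambda_n(s_n)\le 1$ gives $\sigma_n^2\ge C_2\epsilon_n^2 D_n^2$ with, say, $C_2=(1-\gamma)/[2(1+\gamma)^2]$ for $n$ large. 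The only slightly subtle point is the uniform-in-$s_n$ pointwise upper bound on the heavy right tail of $Y$; the denominator $1+\sqrt{1+Y}$ supplied by the sharp $\log$ inequality is precisely what absorbs the worst case $(1+Y)^{s_n}\le 1+Y$, so condition \eqref{eq:c1} is not actually needed for the upper direction and is invoked only to realize the full $D_n^2$ in the lower direction.
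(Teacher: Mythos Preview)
Your proof is correct and follows essentially the same structure as the paper's: reduce $\sigma_n^2$ to $\Lambda_n(s_n)^{-1}\Exp_0[(\log(1+Y))^2(1+Y)^{s_n}]$, sandwich $\Lambda_n(s_n)$ between positive constants, and then bound the expectation above and below by constant multiples of $\epsilon_n^2 D_n^2$, invoking condition \eqref{eq:c1} only for the lower direction. The one noteworthy difference is in the upper bound: the paper splits the integral at $\epsilon_n(\LR_n-1)=1$ and uses separate elementary inequalities on the two pieces (yielding constant $12$), whereas you obtain a single uniform pointwise bound $(\log(1+Y))^2(1+Y)^s\le 4Y^2$ for all $Y\ge -\epsilon_n$ and $s\in[0,1]$ via the sharpened log inequality $\log(1+Y)\le 2Y/(1+\sqrt{1+Y})$, which neatly absorbs the factor $(1+Y)^s\le 1+Y$ without any case split. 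Your approach is slightly cleaner and gives a better constant, but the underlying idea is the same.
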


\begin{proof}
We first show that for sufficiently large $n$,
\begin{equation}
C_1 \geq \frac{\sigma_n^2 \Lambda_n(s_n) }{\epsilon_n^2 D_n^2}. \label{lemub}
\end{equation}

Note that
\begin{equation}
\left(\log\left(1+x\right)\right)^2 \left(1+x\right)^s \leq 2 x^2\text{ for }s \in \left(0,1\right),x\geq 1. \label{eq:ubfact}
\end{equation}
This follows from $0 \leq \log\left(1+x\right) \leq \sqrt{x}$ for $x\geq 0$ and $1 \leq \left(1+x\right)^s \leq 2 x$ for $x \geq 1$ and $s \in (0,1)$. Also, note $\Lambda_n(0)=\Lambda_n(1)=1$ implying $s_n \in (0,1)$ by convexity of $\Lambda_n$ (Lemma 2.2.5, \cite{dz}). 

For shorthand, we will write $\LR_n = \LR_n\left(X_1\right)$. Then, 
\begin{align*}
\Lambda_n(s_n) \sigma_n^2 &= \Exp_0 \left[ \left(\log \left(1+ \epsilon_n \left( \LR_n -1 \right)\right)\right)^2 {\left(1+\epsilon_n\left(\LR_n-1\right)\right)^{s_n}} \right] \\
&= \Exp_0\left[ \left(\log \left(1+ \epsilon_n \left( \LR_n -1 \right)\right)\right)^2 {\left(1+\epsilon_n\left(\LR_n-1\right)\right)^{s_n}} \ind{  \epsilon_n \left(\LR_n-1\right) >1 }\right] \\
&+ \Exp_0\left[ \left(\log \left(1+ \epsilon_n \left( \LR_n -1 \right)\right)\right)^2 {\left(1+\epsilon_n\left(\LR_n-1\right)\right)^{s_n}} \ind{  \epsilon_n \left(\LR_n-1\right) \leq  1} \right] \numberthis \label{eq:uba}
\end{align*}
We first consider $ \Exp_0\big[ \left(\log \left(1+ \epsilon_n \left( \LR_n -1 \right)\right)\right)^2 {\left(1+\epsilon_n\left(\LR_n-1\right)\right)^{s_n}} \ind{ \epsilon_n \left(\LR_n-1\right) > 1 } \big] $. By \eqref{eq:ubfact}, we have on the event $\left\{ \epsilon_n \left(\LR_n-1\right) > 1 \right\}$ that 
$$ 
\big(\log \left(1+ \epsilon_n \left(\LR_n-1\right) \right)\big)^2{ \big(1+\epsilon_n\left(\LR_n-1\right)\big)^{s_n}}\leq 2 \big(\epsilon_n \left(\LR_n-1\right)\big)^2.
$$
Thus,
\begin{multline}
\Exp_0\big[ \left(\log \left(1+ \epsilon_n \left( \LR_n -1 \right)\right)\right)^2 {\left(1+\epsilon_n\left(\LR_n-1\right)\right)^{s_n}}\ind{ \epsilon_n \left(\LR_n-1\right) > 1 }\big]\\  
\leq \Exp_0\big[ 2 \left(\epsilon_n\left(\LR_n-1\right)\right)^2 \ind{ \epsilon_n \left(\LR_n-1\right) > 1 }\big] 
\leq 2 \epsilon_n^2 \Exp_0\big[\left(\LR_n-1\right)^2\big] 
= 2 \epsilon_n^2 D_n^2.\label{eq:ub1}
\end{multline}

We now consider 
$$ 
\Exp_0\left[ \left(\log \left(1+ \epsilon_n \left( \LR_n -1 \right)\right)\right)^2 {\left(1+\epsilon_n\left(\LR_n-1\right)\right)^{s_n}} \ind{  \epsilon_n \left(\LR_n-1\right) \leq 1 }\right].
$$ 
A simple calculus argument shows that $\left(\log\left(1+x\right)\right)^2 \leq 5 x^2$ for $x \geq - \frac{1}{2}$. Note that since $\LR_n \geq 0$, $-\epsilon_n \leq \epsilon_n \left(\LR_n-1\right)$. Because $\epsilon_n \to 0$, for sufficiently large $n$ we have that $\epsilon_n < \frac{1}{2}$ and $\left(\log \left(1+ \epsilon_n \left( \LR_n -1 \right)\right)\right)^2 \leq 5 \left(\epsilon_n \left(\LR_n-1\right)\right)^2$ holds. Also, $\left(1+\epsilon_n\left(\LR_n-1\right)\right)^{s_n} \leq 2^{s_n} \leq 2$ on the event  $\left\{ \epsilon_n \left(\LR_n-1\right) \leq 1 \right\}$. Thus, 
\begin{multline}
\Exp_0\left[ \left(\log \left(1+ \epsilon_n \left( \LR_n -1 \right)\right)\right)^2 {\left(1+\epsilon_n\left(\LR_n-1\right)\right)^{s_n}} \ind{  \epsilon_n \left(\LR_n-1\right) \leq 1} \right] \\
\leq \Exp_0\left[10\left(\epsilon_n \left(\LR_n-1\right)\right)^2 \ind{ \epsilon_n \left(\LR_n-1\right) \leq 1} \right]
\leq  10\,\Exp_0\left[\left(\epsilon_n \left(\LR_n-1\right)\right)^2\right]
= 10\,\epsilon_n^2 D_n^2.\label{eq:ub3}
\end{multline}
Using \eqref{eq:ub1},\eqref{eq:ub3}  in \eqref{eq:uba}, we see for sufficiently large $n$ that 
\begin{equation*}
\Lambda_n(s_n) \sigma_n^2 \leq 12 \epsilon_n^2 D_n^2
\end{equation*}
establishing \eqref{lemub}.

We now show that
\begin{equation}
C_2 \leq \frac{\sigma_n^2 \Lambda_n(s_n) }{\epsilon_n^2 D_n^2}. \label{lemlb}
\end{equation}
Taking any $\gamma < \frac{1}{2}$, from Equation \eqref{eq:c1} of Theorem \ref{mthm_main}, 
\begin{align*}
\Lambda_n(s_n) \sigma_n^2 &= \Exp_0 \left[ \left(\log \left(1+ \epsilon_n \left( \LR_n -1 \right)\right)\right)^2 {\left(1+\epsilon_n\left(\LR_n-1\right)\right)^{s_n}} \right] \\
&\geq  \Exp_0 \left[ \left(\log \left(1+ \epsilon_n \left( \LR_n -1 \right)\right)\right)^2 {\left(1+\epsilon_n\left(\LR_n-1\right)\right)^{s_n}} \ind{ \epsilon_n \left(\LR_n -1\right) \leq \gamma }\right]\\
&\geq \Exp_0 \left[ \left(\log \left(1+ \epsilon_n \left( \LR_n -1 \right)\right)\right)^2 \left(\frac{1}{2}\right) \ind{ \epsilon_n \left(\LR_n -1\right) \leq \gamma }\right] \numberthis \label{eq:lbst}\\
&\geq \frac{1}{4} \Exp_0 \left[ \left(\epsilon_n \left(\LR_n -1\right)\right)^2 \ind{ \epsilon_n \left(\LR_n -1\right) \leq \gamma }\right] \numberthis \label{eq:lbst1}\\ 
&= \frac{D_n^2}{4} \Exp_0 \left[ \frac{\left(\epsilon_n \left(\LR_n -1\right)\right)^2}{D_n^2} \ind{ \epsilon_n \left(\LR_n -1\right) \leq \gamma }\right]\\
&= \frac{D_n^2}{4} \Exp_0 \left[ \frac{\left(\epsilon_n \left(\LR_n -1\right)\right)^2}{D_n^2} \ind{ \LR_n \leq 1 + \frac{\gamma}{\epsilon_n} }\right]\\
&=  \frac{D_n^2 \epsilon_n^2}{4} \left(1 - \Exp_0 \left[ \frac{\left(\LR_n -1\right)^2}{D_n^2} \ind{  \LR_n \geq 1 + \frac{\gamma}{\epsilon_n} }\right]\right) \numberthis \label{eq:lbst2}
\end{align*} 

Where \eqref{eq:lbst} follows from $(1+\epsilon_n(\LR_n-1))^{s_n} \geq (1-\epsilon_n)^{s_n} \geq1-\epsilon_n\geq \frac{1}{2}$  for sufficiently large $n$, as $s_n \in (0,1)$ and $ \epsilon_n \to 0$. A simple calculus argument shows that $\frac{1}{2} x^2 \leq \left( \log\left(1+x\right)\right)^2$ for $x \in [-\frac{1}{2},\frac{1}{2}]$. This, along the fact that with $ -\frac{1}{2} <  - \epsilon_n \leq \epsilon_n (\LR_n -1) \leq \gamma < \frac{1}{2} $ on the event $\{\epsilon_n (\LR_n -1) \leq \gamma\}$ for sufficiently large $n$ establishes \eqref{eq:lbst1}. The definition of $D_n^2$ furnishes \eqref{eq:lbst2}. Noting that $\Exp_0 [ \frac{(\LR_n -1)^2}{D_n^2} \ind{  \LR_n \geq 1 + \frac{\gamma}{\epsilon_n} }] \to 0$ by the assumptions of Thm 2.1 in the main text, \eqref{lemlb} is established. 

In order to remove the $\Lambda_n(s_n)$ factor from the bounds, note that $\Lambda_n(s_n) \leq \Lambda_n\left(1\right) \leq 1$ and that $\Lambda_n\left(s\right) \geq \left(1-\epsilon_n\right)^s \geq \frac{1}{2}$ for sufficiently large $n$. This along with \eqref{lemub} and \eqref{lemlb} establishes the lemma.

This lemma is established identically under $\Hyp_{1,n}$ by applying a change of measure to $\Pro_{0,n}$ (which replaces $s_n$ with $1-s_n$ in the argument above). 
\end{proof}

\begin{lemma} \label{suplem2}
Under the assumptions of Theorem \ref{mthm_main}, if we use the tilted measure, we have
\begin{equation}
\tilde{\Pro} \big[\LLR\left(n\right) \geq 0\big] \to \frac{1}{2}
\end{equation} 
as $n \to \infty$. 
\end{lemma}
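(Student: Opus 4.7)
The plan is to apply the Lindeberg--Feller central limit theorem for triangular arrays to $\LLR(n)=\sum_{i=1}^n Y_{n,i}$, where $Y_{n,i}:=\log(1-\epsilon_n+\epsilon_n \LR_n(X_i))$. Under the tilted measure $\tilde{\Pro}$, the $Y_{n,1},\dots,Y_{n,n}$ are i.i.d., have mean zero (by the tilting identity already recorded just before \eqref{eq:sigma2} in the main text), and have variance $\sigma_n^2$. If the Lindeberg condition can be verified, then $\LLR(n)/(\sqrt{n}\,\sigma_n)$ converges weakly to $N(0,1)$ under $\tilde{\Pro}$, and by continuity of the standard normal distribution function at $0$ the conclusion $\tilde{\Pro}[\LLR(n)\geq 0]\to \tfrac{1}{2}$ follows immediately.

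The heart of the argument is to verify, for every $\eta>0$,
$$
\frac{1}{\sigma_n^2}\,\tilde{\Exp}\!\left[Y_{n,1}^2\,\ind{|Y_{n,1}|>\eta\sqrt{n}\,\sigma_n}\right]\longrightarrow 0.
$$
First I would localize the relevant event. By Lemma \ref{suplem1}, $\sigma_n^2$ is of the same order as $\epsilon_n^2 D_n^2$, and condition \eqref{eq:c3} then gives $\sqrt{n}\,\sigma_n\to\infty$. Because $\LR_n\geq 0$ and $\epsilon_n\to 0$, one has $Y_{n,1}\geq \log(1-\epsilon_n)\to 0$, so for large $n$ the event $A_n:=\{|Y_{n,1}|>\eta\sqrt{n}\,\sigma_n\}$ is effectively one-sided and forces $\epsilon_n(\LR_n-1)$ to be arbitrarily large. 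In particular, for every fixed $\gamma\in(0,\gamma_0)$, eventually $A_n\subseteq\{\LR_n\geq 1+\gamma/\epsilon_n\}$ and $\epsilon_n(\LR_n-1)\geq 1$ on $A_n$.

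Next I would undo the tilt via \eqref{eq:tilt}, writing
$$
\tilde{\Exp}[Y_{n,1}^2\,\ind{A_n}]=\frac{1}{\Lambda_n(s_n)}\,\Exp_0\!\left[Y_{n,1}^2\,(1+\epsilon_n(\LR_n-1))^{s_n}\,\ind{A_n}\right],
$$
and apply the elementary inequality \eqref{eq:ubfact} (already proved in the course of Lemma \ref{suplem1}) on $A_n$ to bound the integrand by $2\epsilon_n^2(\LR_n-1)^2$. Combined with $\Lambda_n(s_n)\geq\tfrac{1}{2}$ and $\sigma_n^2\geq C_2\,\epsilon_n^2 D_n^2$ from Lemma \ref{suplem1}, this yields
$$
\frac{\tilde{\Exp}[Y_{n,1}^2\,\ind{A_n}]}{\sigma_n^2}\leq \frac{4}{C_2}\cdot\frac{1}{D_n^2}\,\Exp_0\!\left[(\LR_n-1)^2\,\ind{\LR_n\geq 1+\gamma/\epsilon_n}\right],
$$
whose right-hand side vanishes by condition \eqref{eq:c1} of Theorem \ref{mthm_main}. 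This establishes the Lindeberg condition, and the lemma follows.

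The main obstacle is precisely this tail-truncated second-moment bound: one must control a tilted quantity whose tilt factor $(1+\epsilon_n(\LR_n-1))^{s_n}$ and whose logarithmic integrand both interact with the heavy upper tail of $\LR_n$. The inequality \eqref{eq:ubfact} is exactly what converts this interaction into a clean $(\LR_n-1)^2$ bound, after which hypothesis \eqref{eq:c1} closes the argument; everything else is bookkeeping and a direct invocation of the Lindeberg--Feller CLT (e.g.\ Theorem 3.4.5 of \cite{durrett}), as the main text already signals.
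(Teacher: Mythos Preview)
Your proposal is correct and follows essentially the same approach as the paper: apply the Lindeberg--Feller CLT to the normalized $\LLR(n)$ under $\tilde{\Pro}$, pass back to $\Pro_0$ via the tilt, control the tail-truncated second moment with inequality \eqref{eq:ubfact}, and close with hypothesis \eqref{eq:c1}. Your version is in fact slightly cleaner than the paper's: by first observing that $Y_{n,1}\geq\log(1-\epsilon_n)\to 0$ while $\sqrt{n}\,\sigma_n\to\infty$, you localize the Lindeberg event entirely inside $\{\epsilon_n(\LR_n-1)\geq 1\}$, which lets you invoke \eqref{eq:ubfact} in one stroke; the paper instead splits into the cases $\epsilon_n(\LR_n-1)>1$ and $\epsilon_n(\LR_n-1)\leq 1$ and bounds both separately, even though the second case is eventually vacuous on the Lindeberg event.
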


\begin{proof}
For the proof, we will need the Lindeberg-Feller Central Limit Theorem whose validity is demonstrated in Theorem 3.4.5, \cite{durrett}:

\begin{theorem} For each $n$, let $Z_{n,i}$, $1 \leq i \leq n$, be independent zero-mean random variables. 
Suppose
\begin{equation}
\lim_{n \to \infty} \sum_{i=1}^n \Exp\left[Z_{n,i}^2\right] = \sigma^2 >0 \label{eq:condlin}
\end{equation}
and for all $\gamma>0$, 
\begin{equation}
\lim_{n \to \infty} \sum_{i=1}^n \Exp\left[|Z_{n,i}|^2 \ind{ |Z_{n,i}| > \gamma} \right] =0 \label{eq:condl2}
\end{equation}

Then, $S_n = Z_{n,1} + \ldots + Z_{n,n}$ converges in distribution to the normal distribution with mean zero and variance $\sigma^2$ as $n \to \infty$.
\end{theorem}

Let us now continue with the proof of Lemma \ref{suplem2}. We draw i.i.d. $\{X_i\}_{i=1}^n$ from $\Hyp_{0,n}$. Define for $1\leq m\leq n$
\begin{equation}
\xi_{n,i} = \log \left(1 + \epsilon_n \big( \LR_n(X_i) -1\big)\right),~~Z_{n,i}=\frac{\xi_{n,i}}{\sqrt{n} \sigma_n}.
\end{equation}
Note that
\begin{equation}
\sum_{i=1}^n Z_{n,i} = \frac{\LLR(n)}{\sqrt{n} \sigma_n}.
\end{equation}
We show $\sum_{i=1}^n Z_{n,i}$ converges to a standard normal distribution under the tilted measure. As stated in the main text, $\tilde{\Exp}\left[Z_{n,i}\right] = 0$ and $\tilde{\Exp}[Z_{n,i}^2] = \frac{1}{n}$. Thus, \eqref{eq:condlin} is satisfied with $\sigma^2=1$. 

It remains to check \eqref{eq:condl2}. Since for fixed $n$, the $Z_{n,i}$ are i.i.d, it suffices to verify that
$$
\tilde{\Exp}\left[n Z_{n,1}^2\ind{ |Z_{n,1}| > \gamma} \right]
= \tilde{\Exp}\left[\frac{\xi_{n,1}^2}{\sigma_n^2} \ind{ \frac{\xi_{n,1}^2}{ n \sigma_n^2} > \gamma^2 }\right]\to0,
$$
$n \to \infty$. To simplify notation, let $\LR_n= \LR_n\left(X_{1}\right)$. 
By Lemma \ref{suplem1}, it suffices to show that
$$
\tilde{\Exp}\left[\frac{\xi_{n,1}^2}{\epsilon_n^2 D_n^2} \ind{ \frac{\xi_{n,1}^2}{ C_2\epsilon_n^2 D_n^2 } > n\gamma^2 }\right] \to 0
$$
which changing to the $\Pro_0$ measure is equivalent to showing that for $0<\gamma < \gamma_0$
\begin{equation}
\Exp_0\left[\frac{\xi_{n,1}^2}{\epsilon_n^2 D_n^2} \left(1+\epsilon_n \left(\LR_n-1\right)\right)^{s_n} \ind{ \frac{\xi_{n,1}^2}{ C_2\epsilon_n^2 D_n^2 } > n\gamma^2 }\right] \to 0 \label{eq:suffclt}
\end{equation}
since $\Lambda_n(s_n) \in [\frac{1}{2},1]$ for sufficiently large $n$. 

We decompose \eqref{eq:suffclt} into 
\begin{multline}
\Exp_0\left[\frac{\xi_{n,1}^2}{\epsilon_n^2 D_n^2} \left(1+\epsilon_n \left(\LR_n-1\right)\right)^{s_n} \ind{ \frac{\xi_{n,1}^2}{ C_2\epsilon_n^2 D_n^2 } > n\gamma^2 }\right]\\
=\Exp_0\left[\frac{\xi_{n,1}^2}{\epsilon_n^2 D_n^2} \left(1+\epsilon_n \left(\LR_n-1\right)\right)^{s_n} \ind{ \frac{\xi_{n,1}^2}{ C_2\epsilon_n^2 D_n^2 } > n\gamma^2,\epsilon_n (\LR_n -1) > 1  }\right]\\
+\Exp_0\left[\frac{\xi_{n,1}^2}{\epsilon_n^2 D_n^2} \left(1+\epsilon_n \left(\LR_n-1\right)\right)^{s_n} \ind{ \frac{\xi_{n,1}^2}{ C_2\epsilon_n^2 D_n^2 } > n\gamma^2,\epsilon_n (\LR_n -1) \leq 1  }\right]
\label{eq:decomp}
\end{multline}
and show that both parts in \eqref{eq:decomp} tend to zero. For the first part applying \eqref{eq:ubfact} and $\big(\log(1+x)\big)^2 \leq x$ for $x>0$, 
\begin{align*}
&\Exp_0\left[\frac{\xi_{n,1}^2}{\epsilon_n^2 D_n^2} \left(1+\epsilon_n \left(\LR_n-1\right)\right)^{s_n} \ind{ \frac{\xi_{n,1}^2}{ C_2\epsilon_n^2 D_n^2 } > n\gamma^2,\epsilon_n (\LR_n -1) > 1  }\right]\\
&\leq \Exp_0\left[\frac{2 \epsilon_n^2 \left(\LR_n-1\right)^2}{\epsilon_n^2 D_n^2}\ind{ \frac{\epsilon^2_n ( \LR_n -1)^2}{ C_2\epsilon_n^2 D_n^2 } > n\gamma^2, \epsilon_n (\LR_n -1) > 1 }\right] \\
&= 2 \Exp_0\left[\frac{(\LR_n-1)^2}{D_n^2} \ind{ \frac{\LR_n -1}{ D_n\sqrt{C_2} } > \gamma\sqrt{n}, \epsilon_n \left(\LR_n -1\right) > 1 }\right] \\
&\leq 2 \Exp_0\left[\frac{(\LR_n-1)^2}{D_n^2} \ind{  \LR_n> 1+\sqrt{C_2}\sqrt{n} D_n \epsilon_n \frac{\gamma}{\epsilon_n} }\right] \\
&\leq 2 \Exp_0\left[\frac{(\LR_n-1)^2}{D_n^2} \ind{  \LR_n> 1+\frac{\gamma}{\epsilon_n} } \right] \to 0
\end{align*}
where the last inequality holds because from $\sqrt{n} D_n \epsilon_n \to \infty$ we can conclude that for sufficiently large $n$ we have $\sqrt{C_2}\sqrt{n} D_n \epsilon_n\geq1$.

We now show that the second part in \eqref{eq:decomp} tends to zero as well. We observe that 
since $\LR_n \geq 0$ we have $-\epsilon_n \leq \epsilon_n \left(\LR_n -1\right) $. Using $( \log(1+x))^2 \leq 5 x^2$ for $x \geq -\frac{1}{2}$, and that $ (1+\epsilon_n (\LR_n-1))^{s_n} \leq 2^{s_n} \leq 2$ on the event $\{\epsilon_n(\LR_n -1) \leq 1 \}$, we see that for $n$ sufficiently large such that $\epsilon_n < \frac{1}{2}$, 
\begin{align*}
\Exp_0\left[\frac{\xi_{n,1}^2}{\epsilon_n^2 D_n^2} \big(1+\epsilon_n (\LR_n-1)\big)^{s_n} \ind{ \frac{\xi_{n,1}^2}{ C_2\epsilon_n^2 D_n^2 } > n\gamma^2,\epsilon_n (\LR_n -1) \leq 1  }\right]
\hskip-6cm\\
&\leq 10\Exp_0\left[\frac{\epsilon_n^2(\LR_n-1)^2 }{\epsilon_n^2D_n^2}  \ind{ \frac{ 5\epsilon_n^2(\LR_n-1)^2}{C_2\epsilon_n^2 D_n^2} > n \gamma^2, \epsilon_n (\LR_n -1) \leq 1 } \right] \\
&\leq 10 \Exp_0 \left[ \frac{(\LR_n-1)^2}{D_n^2} \ind{  |\LR_n-1| > \sqrt{\frac{C_2}{5}}\sqrt{n}D_n \gamma } \right]\\
&=10 \Exp_0 \left[ \frac{(\LR_n-1)^2}{D_n^2} \ind{ \LR_n>1+ \sqrt{\frac{C_2}{5}}\sqrt{n}D_n \gamma } \right]\\
&\leq 10 \Exp_0 \left[ \frac{(\LR_n-1)^2}{D_n^2} \ind{ \LR_n>1+ \frac{\gamma}{\epsilon_n} } \right]
\end{align*}
The last equality follows from the fact that $\sqrt{n}\epsilon_nD_n\to\infty$, since this implies that 
$\sqrt{n}D_n\to\infty$, which suggests that for large enough $n$ we cannot have $1-\LR_n > \sqrt{\frac{C_2}{5}}\sqrt{n}D_n \gamma$ but only $\LR_n-1 > \sqrt{\frac{C_2}{5}}\sqrt{n}D_n \gamma$. Finally the last inequality is true for large enough $n$ such that $\sqrt{\frac{C_2}{5}}\sqrt{n}D_n\epsilon_n\geq1$, which is always possible since this quantity tends to infinity because of our assumption in \eqref{eq:c3}.
Thus, \eqref{eq:suffclt} holds and the Lindeberg-Feller CLT shows that $\frac{\LLR(n)}{\sqrt{n} \sigma_n} $ converges to a standard normal distribution under the tilted measure.
Therefore, 
\begin{equation}
\tilde{\Pro}\left[\LLR(n) \geq 0\right] = \tilde{\Pro}\left[\frac{\LLR(n)}{\sqrt{n} \sigma_n} \geq 0\right] \to \frac{1}{2}
\end{equation}
as $n\to\infty$ establishing the lemma.

Verifying the Lindeberg-Feller CLT conditions for analyzing $\pmd$ is done by changing from the $\Pro_{1}$ to the $\Pro_{0}$ measure. 
\end{proof}

\begin{lemma} \label{suplem3}
Under the assumptions of Theorem \ref{mthm_main} we have
\begin{equation}
\liminf_{n \to \infty} \frac{\log \Lambda_n(s_n)}{\epsilon_n^2 D_n^2} \geq - \frac{1}{8}.
\end{equation}
\end{lemma}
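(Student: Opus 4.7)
The plan is to obtain a pointwise lower bound on $\Lambda_n(s)$ of the form $1 - \tfrac{s(1-s)}{2}\epsilon_n^2 D_n^2(1+o(1))$, uniformly in $s\in[0,1]$, and then take the infimum over $s$ to get the constant $\tfrac18$ since $\max_{s\in[0,1]} s(1-s)=\tfrac14$. The key quantitative input is the truncated $\chi^2$ condition \eqref{eq:c1}, which allows the contribution of $\{\LR_n > 1+\gamma/\epsilon_n\}$ to be absorbed into $o(\epsilon_n^2D_n^2)$; everything else is a careful second-order Taylor expansion.

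First, I would split the integrand using $B_n=\{\LR_n\le 1+\gamma/\epsilon_n\}$, writing $\Lambda_n(s)=I_1(s)+I_2(s)$ with
\[
I_1(s)=\Exp_0\!\left[(1+\epsilon_n(\LR_n-1))^s\ind{B_n}\right],\qquad I_2(s)=\Exp_0\!\left[(1+\epsilon_n(\LR_n-1))^s\ind{B_n^c}\right].
\]
Since $\LR_n\ge 0$ we have $I_2(s)\ge 0$ trivially, so all the work is in lower-bounding $I_1(s)$. On $B_n$ the argument $y=\epsilon_n(\LR_n-1)$ lies in $[-\epsilon_n,\gamma]$; Taylor with Lagrange remainder gives $(1+y)^s = 1+sy+\tfrac{1}{2}s(s-1)(1+\xi)^{s-2}y^2$ for some $\xi$ between $0$ and $y$. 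For $\xi\in[-\epsilon_n,\gamma]$ and $s\in[0,1]$ one has $(1+\xi)^{s-2}\le(1-\epsilon_n)^{s-2}=1+O(\epsilon_n)$, and since $s(s-1)\le 0$ this yields the bound
\[
(1+y)^s \ge 1+sy - \tfrac{s(1-s)}{2}(1+O(\epsilon_n))y^2 \quad\text{on }B_n.
\]

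Taking $\Exp_0$ term by term, the constant contributes $\Pro_0[B_n]=1-o(\epsilon_n^2D_n^2)$ (Markov applied to $\ind{B_n^c}\le \tfrac{\epsilon_n^2}{\gamma^2}(\LR_n-1)^2\ind{B_n^c}$ combined with \eqref{eq:c1}). The quadratic contributes at most $\tfrac{s(1-s)}{2}(1+O(\epsilon_n))\epsilon_n^2 D_n^2$ in magnitude since $\Exp_0[(\LR_n-1)^2\ind{B_n}]\le D_n^2$. The cross-term is the delicate one: using $\Exp_0[\LR_n-1]=0$, it equals $-s\epsilon_n\Exp_0[(\LR_n-1)\ind{B_n^c}]$, and on $B_n^c$ one has $\LR_n-1\le \tfrac{\epsilon_n}{\gamma}(\LR_n-1)^2$, so by \eqref{eq:c1}
\[
\bigl|s\epsilon_n\Exp_0[(\LR_n-1)\ind{B_n})]\bigr|\le \tfrac{s}{\gamma}\epsilon_n^2\cdot o(D_n^2)=o(\epsilon_n^2 D_n^2).
\]
Putting these pieces together gives $\Lambda_n(s)\ge 1-\tfrac{s(1-s)}{2}\epsilon_n^2 D_n^2(1+o(1))$ uniformly in $s\in[0,1]$.

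Finally, minimizing the right-hand side in $s$ and using $s_n$ is the minimizer of $\Lambda_n$,
\[
\Lambda_n(s_n)\ge 1-\tfrac{1}{8}\epsilon_n^2 D_n^2(1+o(1)).
\]
Since $\epsilon_n^2 D_n^2\to 0$ by \eqref{eq:c2}, taking $\log$ and using $\log(1-x)=-x+O(x^2)$ yields $\log\Lambda_n(s_n)\ge -\tfrac{1}{8}\epsilon_n^2 D_n^2(1+o(1))$; dividing by $\epsilon_n^2 D_n^2$ and passing to $\liminf$ gives the claim. The main obstacle is the cross-term $-s\epsilon_n\Exp_0[(\LR_n-1)\ind{B_n^c}]$: a naive Cauchy--Schwarz bound only yields $o(\epsilon_nD_n)$, which is \emph{larger} than $\epsilon_n^2 D_n^2$; the trick of exchanging the factor $\LR_n-1$ on the tail for $\tfrac{\epsilon_n}{\gamma}(\LR_n-1)^2$ is essential to bring it back down to $o(\epsilon_n^2 D_n^2)$.
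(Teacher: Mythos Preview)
Your proof is correct and follows essentially the same architecture as the paper's: split on the event $B_n=\{\epsilon_n(\LR_n-1)\le\gamma\}$, Taylor-expand on $B_n$, drop $I_2(s)\ge0$, and control the linear tail term $s\epsilon_n\Exp_0[(\LR_n-1)\ind{B_n^c}]$ by trading $(\LR_n-1)$ for $\tfrac{\epsilon_n}{\gamma}(\LR_n-1)^2$ so that condition \eqref{eq:c1} makes it $o(\epsilon_n^2D_n^2)$. That last trick is exactly what the paper does as well.

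The one genuine difference is in how the Taylor step is executed. The paper goes to third order and obtains $(1+x)^s\ge 1+sx-\omega(\gamma)x^2$ with $\omega(\gamma)=\tfrac18+\tfrac{\gamma}{3(1-\gamma)^3}$, a bound whose constant depends on $\gamma$ but not on $n$; consequently they must send $\gamma\to0$ at the end to recover $\tfrac18$. You instead stop at the second-order Lagrange remainder and bound $(1+\xi)^{s-2}\le(1-\epsilon_n)^{-2}=1+O(\epsilon_n)$, so your multiplicative correction vanishes with $n$ for any fixed $\gamma\in(0,\gamma_0)$ and no final limit in $\gamma$ is needed. This is a small but clean simplification; the trade-off is that the paper's inequality $(1+x)^s\ge1+sx-\omega(\gamma)x^2$ is a self-contained elementary bound, whereas your version carries an $n$-dependent factor through the computation. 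Either way the constant $\tfrac18$ emerges from $\max_{s\in[0,1]}s(1-s)=\tfrac14$.
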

\begin{proof}
Consider the function $\left(1+x\right)^s$ for $s\in(0,1)$ and $x\in[-\gamma,\gamma]$ where $0<\gamma < 1$. Then
\begin{align}
\left(1+x\right)^s &= 1 + s x + \frac{1}{2} s \left(s-1\right) x^2 + \frac{1}{6} \frac{\left(1-s\right)\left(2-s\right)}{\left(1+\xi\right)^{3-s}} x^3\nonumber \\
&\geq 1+ s x - \frac{1}{8} x^2 -  \frac{1}{3} \frac{\gamma}{\left(1-\gamma\right)^{3}} x^2
=1+ s x - \omega(\gamma) x^2,
\label{eq:taylor}
\end{align}
where we define $\omega(\gamma)=\frac{1}{8} +  \frac{1}{3} \frac{\gamma}{\left(1-\gamma\right)^{3}}$.
The first equality holds for some $\xi \in [-\gamma,\gamma] $ by the mean value form of Taylor's theorem. The inequality is obtained by minimizing the coefficient of $x^2$ while for the last term we observe that since $x\geq -\gamma$ we have $x^3\geq-\gamma x^2$; furthermore $(1+\xi)^{3-s}\geq(1-\gamma)^3$ and $(1-s)(2-s)\leq2$. When we substitute the previous inequalities we obtain the lower bound in \eqref{eq:taylor}.

Using this, we can lower bound $\Lambda_n(s)$ for all $s \in \left(0,1\right)$. Fix $0 < \gamma < 1$. As before, we will use the shorthand $\LR_n=\LR_n\left(X_1\right)$. Then, for sufficiently large $n$ we have $\epsilon_n < \gamma$ suggesting that $-\gamma\leq\epsilon_n(\LR_n-1)$. Therefore using \eqref{eq:taylor} and assuming $n$ sufficiently large we can write
\begin{align*}
\Lambda_n(s) &= \Exp_0\left[\left(1+\epsilon_n \left(\LR_n-1\right)\right)^s\right]\\
&\geq \Exp_0\left[\left(1+\epsilon_n (\LR_n-1)\right)^s \ind{ \epsilon_n \left(\LR_n -1\right) \leq \gamma }\right]\\
&\geq \Exp_0\left[ \left( 1+ s \epsilon_n (\LR_n-1) \right) \ind{ \epsilon_n (\LR_n-1) \leq \gamma }\right] - \omega(\gamma)\epsilon_n^2 D_n^2\\
&= 1 - \omega(\gamma) \epsilon_n^2 D_n^2  - \Exp_0\left[ \left( 1+ s \epsilon_n \left(\LR_n-1\right) \right) \ind{ \epsilon_n \left(\LR_n-1\right) \geq \gamma }\right]\\
&\geq  1 - \omega(\gamma) \epsilon_n^2 D_n^2  - \Exp_0\left[ \left( 1+ \epsilon_n \left(\LR_n-1\right) \right) \ind{ \epsilon_n \left(\LR_n-1\right) \geq \gamma }\right]\\
&\geq 1 - \omega(\gamma) \epsilon_n^2 D_n^2  - \Exp_0\left[ \left(\frac{1}{\gamma^2} + \frac{1}{\gamma} \right) \epsilon_n^2 (\LR_n-1)^2 \ind{ \epsilon_n \left(\LR_n-1\right) \geq \gamma }\right]\\
&\geq 1 - \left(\omega(\gamma) + \frac{2}{\gamma^2}\Exp_0\left[\frac{(\LR_n-1)^2}{D_n^2} \ind{ \LR_n \geq 1 + \frac{\gamma}{\epsilon_n} } \right] \right) \epsilon_n^2 D_n^2\\
&\geq 1 - \left(\omega(\gamma) + \frac{2}{\gamma^2}\gamma^3\right) \epsilon_n^2 D_n^2,
\end{align*}
where in the second equality we used the fact that $\Exp_0[\LR_n -1] =0$ and in the third inequality we replaced the maximum values of $s=1$. In the fourth inequality we used the property that on the set $\{\LR_n \geq 1 + \frac{\gamma}{\epsilon_n}\}$ we can write $\frac{1}{\gamma^2}\epsilon_n^2 (\LR_n-1)^2\geq1$ and
$\frac{1}{\gamma}\epsilon_n^2 (\LR_n-1)^2\geq\epsilon_n (\LR_n-1)$. Finally in the last inequality using the condition of Theorem \ref{mthm_main} and assuming $n$ sufficiently large the expectation becomes smaller than $\gamma^3$.

Using the previous result we obtain
$$
\liminf_{n \to \infty} \frac{\log \Lambda_n(s_n)}{\epsilon_n^2 D_n^2}\geq
\liminf_{n \to \infty}\frac{\log \left( 1 - (\omega(\gamma) + 2\gamma) \epsilon_n^2 D_n^2 \right) }{\epsilon_n^2 D_n^2}
= -(\omega(\gamma) + 2\gamma),
$$
where for the equality we used the limit $\frac{\log(1-x)}{x} \to -1$ as $x \to 0$ and the assumption that $\epsilon_nD_n\to0$. Letting $\gamma \to 0$ establishes the lemma since $\omega(0)=\frac{1}{8}$. 

The proof is identical under $\Hyp_{1,n}$, where the analogue of the lemma is $\liminf_{n \to \infty} \frac{\Lambda_n(1-s_n)}{\epsilon_n^2 D_n^2} \geq -\frac{1}{8}$.
\end{proof}

\section{Moderate Signals: Gaussian Location Model}\label{sec:gaussianub}

Assume $\frac{3}{2} > \frac{\beta}{2r} > \frac{1}{2}$. 
Recall from the proof of Thm \ref{mthm_main}
\begin{equation}
\Pro_{\rm FA}(n) \leq \left(\Exp_0 \left[ \sqrt{1- \epsilon_n + \epsilon_n \LR_n(X_1) }\right] \right)^n \label{eq:gausublo}
\end{equation}
and 
\begin{equation}
\Exp_0 \left[ \sqrt{1- \epsilon_n + \epsilon_n \LR_n(X_1) }\right] = 1 - \frac{1}{2} \Exp_0\left[ \frac{\epsilon_n^2 (\LR_n(X_1) - 1)^2}{\big(1+\sqrt{1+\epsilon_n (\LR_n(X_1) -1)}\big)^2}\right].
\end{equation}

We write the observations as a multiple of $\mu_n$, $X= \alpha \mu_n$. Then, taking $\mu_n = \sqrt{2 r \log n}$, we have 
\begin{equation}
\LR_n (x) = e^{ - \mu_n^2/2 + \mu_n x}= n^{r (2 \alpha -1)}.
\label{eq:rescale}
\end{equation} 
In view of \eqref{eq:rescale},
\begin{equation}
\epsilon_n (\LR_n -1) = n^{r (2 \alpha -1) - \beta} - n^{-\beta}.
\end{equation}
Thus, if $r (2 \alpha -1) - \beta >0$ we have $\epsilon_n (\LR_n -1) \to \infty$ and if $r (2 \alpha -1) - \beta <0$ we have $\epsilon_n (\LR_n -1) \to 0$ as $n \to \infty$. 

Let $\kappa = \frac{\beta}{2r} + \frac{1}{2}$. 
Note $\frac{x^2}{(1+\sqrt{1+x})^2} \geq \frac{x^2}{4} - \frac{x^3}{8}$ for $x \geq -1$. Thus, on the event $\{X_1 < \kappa \mu_n\}$, 
\begin{align*}
\frac{\epsilon_n^2 (\LR_n(X_1) - 1)^2}{\big(1+\sqrt{1+\epsilon_n (\LR_n(X_1) -1)}\big)^2} &\geq \frac{\epsilon_n^2 (\LR_n(X_1) - 1)^2}{4} - \frac{\epsilon_n^3 (\LR_n(X_1) - 1)^3}{8} \\
&\geq \epsilon_n^2 (\LR_n(X_1) - 1)^2 \left( \frac{1}{4} - \frac{(1-n^{-\beta})}{8} \right) \numberthis \label{eq:rsm1}\\
&\geq \frac{ \epsilon_n^2 (\LR_n(X_1) - 1)^2}{8} \numberthis \label{eq:rs0}
\end{align*}
where \eqref{eq:rsm1} follows from $-1 \leq - \epsilon_n \leq \epsilon_n (\LR_n(X_1) - 1) \leq 1-n^{-\beta}$ on $\{X_1 < \kappa \mu_n\}$, and \eqref{eq:rs0} follows from non-negativity of the terms involved. 
Then, 
\begin{align*}
\Exp_0\left[ \frac{\epsilon_n^2 (\LR_n(X_1) - 1)^2}{\big(1+\sqrt{1+\epsilon_n (\LR_n(X_1) -1)}\big)^2}\right]\hskip-4cm&\\ 
&\geq \Exp_0\left[ \frac{\epsilon_n^2 (\LR_n(X_1) - 1)^2}{\big(1+\sqrt{1+\epsilon_n (\LR_n(X_1) -1)}\big)^2} \ind{ X_1 < \kappa \mu_n} \right] \numberthis \label{eq:rs1} \\
&\geq \frac{\epsilon_n^2}{8} \Exp_0 [  (\LR_n(X_1) - 1)^2 \ind{ X_1 < \kappa \mu_n} ] \numberthis \label{eq:rs2} \\
&= \frac{\epsilon_n^2}{8} \left( e^{\mu_n^2} \Phi( (\kappa-2) \mu_n) - 2 \Phi ( (\kappa - 1) \mu) + \Phi (\kappa \mu) \right)\numberthis \label{eq:identdom}
\end{align*}
where \eqref{eq:rs1} follows from non-negativity, \eqref{eq:rs2} follows from \eqref{eq:rs0} and $\Phi$ denotes the standard normal cumulative distribution function.

By the standard approximation $Q(x)\approx\frac{1}{x \sqrt{2 \pi}} e^{-x^2/2}$, we see that the dominant term in \eqref{eq:identdom} is $\epsilon_n^2 e^{\mu_n^2} \Phi( (\kappa-2) \mu_n)/8$ and is of the order of $\frac{n^{-2 \beta +2r - r (1.5-\beta/2r)^2}}{\sqrt{2 r \log n}}$ which tends to zero by our assumption on $\frac{\beta}{2r}$. Thus, as in the proof of Thm \ref{mthm_main}, by \eqref{eq:gausublo}
$$
\frac{ \log \Pro_{\rm FA}(n)}{n}  \leq \log \left( 1 - \frac{\epsilon_n^2}{16}  \left( e^{\mu_n^2} \Phi( (\kappa-2) \mu_n) - 2 \Phi ( (\kappa - 1) \mu) + \Phi (\kappa \mu) \right) \right).
$$
Dividing both sides by $\epsilon_n^2 e^{\mu_n^2}  \Phi( (\kappa-2) \mu_n)$ and taking a $\limsup$ yields
$$
\limsup_{n \to \infty} \frac{ \log \Pro_{\rm FA}(n)}{n\epsilon_n^2 e^{\mu_n^2}  \Phi( (\kappa-2) \mu_n)} \leq -\frac{1}{16}. 
$$

For consistency, it suffices to require $n\epsilon_n^2 e^{\mu_n^2}  \Phi( (\kappa-2) \mu_n) \to \infty$. Thus, it suffices to require $1-2 \beta +2r - r \left(\frac{3}{2}-\frac{\beta}{2r}\right)^2>0$, since $\sqrt{\log n}$ is negligible with respect to any positive power of $n$. Combining the constraints $-2 \beta +2r - r \left(\frac{3}{2}-\frac{\beta}{2r}\right)^2<0, 1-2 \beta +2r - r \left(\frac{3}{2}-\frac{\beta}{2r}\right)^2>0, \frac{3}{2} > \frac{\beta}{2r} > \frac{1}{2}$ gives the desired rate characterization.

The proof for $\Pro_{\rm MD}$ is identical. Note that this bound is likely not tight (even if it has the right order), since we neglected the event $\{X_1 \geq \kappa \mu_n\}$ to form the bound.


\begin{thebibliography}{24}

\bibitem[\protect\citeauthoryear{Arias-Castro and Wang}{2013}]{castro}
\begin{barticle}[author]
\bauthor{\bsnm{Arias-Castro},~\bfnm{E.}\binits{E.}} \AND
  \bauthor{\bsnm{Wang},~\bfnm{M.}\binits{M.}}
(\byear{2013}).
\btitle{Distribution-free tests for sparse heterogeneous mixtures}.
\bjournal{arXiv preprint arXiv:1308.0346 [math.ST]}.
\end{barticle}
\endbibitem

\bibitem[\protect\citeauthoryear{Arias-Castro and Wang}{2015}]{castropoisson}
\begin{barticle}[author]
\bauthor{\bsnm{Arias-Castro},~\bfnm{Ery}\binits{E.}} \AND
  \bauthor{\bsnm{Wang},~\bfnm{Meng}\binits{M.}}
(\byear{2015}).
\btitle{The sparse Poisson means model}.
\bjournal{Electron. J. Statist.}
\bvolume{9}
\bpages{2170--2201}.
\bdoi{10.1214/15-EJS1066}
\MR{3406276}
\end{barticle}
\endbibitem

\bibitem[\protect\citeauthoryear{Cai, Jeng and Jin}{2011}]{caijengjin}
\begin{barticle}[author]
\bauthor{\bsnm{Cai},~\bfnm{T.~T.}\binits{T.~T.}},
  \bauthor{\bsnm{Jeng},~\bfnm{X.~J.}\binits{X.~J.}} \AND
  \bauthor{\bsnm{Jin},~\bfnm{J.}\binits{J.}}
(\byear{2011}).
\btitle{Optimal detection of heterogeneous and heteroscedastic mixtures}.
\bjournal{Journal of the Royal Statistical Society: Series B (Statistical
  Methodology)}
\bvolume{73}
\bpages{629--662}.
\MR{2867452}
\end{barticle}
\endbibitem

\bibitem[\protect\citeauthoryear{Cai and Wu}{2014}]{caiwu}
\begin{barticle}[author]
\bauthor{\bsnm{Cai},~\bfnm{T.~T.}\binits{T.~T.}} \AND
  \bauthor{\bsnm{Wu},~\bfnm{Y.}\binits{Y.}}
(\byear{2014}).
\btitle{Optimal Detection of Sparse Mixtures Against a Given Null
  Distribution}.
\bjournal{IEEE Trans. Info. Theory}
\bvolume{60}
\bpages{2217--2232}.
\MR{3181520}
\end{barticle}
\endbibitem

\bibitem[\protect\citeauthoryear{Cayon, Jin and Treaster}{2005}]{cayon}
\begin{barticle}[author]
\bauthor{\bsnm{Cayon},~\bfnm{L.}\binits{L.}},
  \bauthor{\bsnm{Jin},~\bfnm{J.}\binits{J.}} \AND
  \bauthor{\bsnm{Treaster},~\bfnm{A.}\binits{A.}}
(\byear{2005}).
\btitle{Higher Criticism statistic: detecting and identifying non-Gaussianity
  in the WMAP first-year data}.
\bjournal{Monthly Notices of the Royal Astronomical Society}
\bvolume{362}
\bpages{826--832}.
\end{barticle}
\endbibitem

\bibitem[\protect\citeauthoryear{Cover and Thomas}{2006}]{cover}
\begin{bbook}[author]
\bauthor{\bsnm{Cover},~\bfnm{T.~M.}\binits{T.~M.}} \AND
  \bauthor{\bsnm{Thomas},~\bfnm{J.~A.}\binits{J.~A.}}
(\byear{2006}).
\btitle{Elements of Information Theory}.
\bpublisher{NY: John Wiley and Sons, Inc.}
\MR{2239987}
\end{bbook}
\endbibitem

\bibitem[\protect\citeauthoryear{Darling and Erd{\"o}s}{1956}]{darlingerdos}
\begin{barticle}[author]
\bauthor{\bsnm{Darling},~\bfnm{D.~A.}\binits{D.~A.}} \AND
  \bauthor{\bsnm{Erd{\"o}s},~\bfnm{P.}\binits{P.}}
(\byear{1956}).
\btitle{A limit theorem for the maximum of normalized sums of independent
  random variables}.
\bjournal{Duke Math. J.}
\bvolume{23}
\bpages{143--155}.
\MR{0074712}
\end{barticle}
\endbibitem

\bibitem[\protect\citeauthoryear{Dembo and Zeitouni}{2009}]{dz}
\begin{bbook}[author]
\bauthor{\bsnm{Dembo},~\bfnm{A.}\binits{A.}} \AND
  \bauthor{\bsnm{Zeitouni},~\bfnm{O.}\binits{O.}}
(\byear{2010}).
\btitle{Large deviations techniques and applications}
\bvolume{38}.
\bpublisher{Springer Science \& Business Media}.
\MR{2571413}
\end{bbook}
\endbibitem

\bibitem[\protect\citeauthoryear{den Hollander}{2008}]{denhollander}
\begin{bbook}[author]
\bauthor{\bparticle{den} \bsnm{Hollander},~\bfnm{F.}\binits{F.}}
(\byear{2008}).
\btitle{Large deviations}
\bvolume{14}.
\bpublisher{American Mathematical Soc.}
\MR{1739680}
\end{bbook}
\endbibitem

\bibitem[\protect\citeauthoryear{Dobrushin}{1958}]{dobrushin}
\begin{barticle}[author]
\bauthor{\bsnm{Dobrushin},~\bfnm{R.~L.}\binits{R.~L.}}
(\byear{1958}).
\btitle{A statistical problem arising in the theory of detection of signals in
  the presence of noise in a multi-channel system and leading to stable
  distribution laws}.
\bjournal{Theory of Probability \& Its Applications}
\bvolume{3}
\bpages{161--173}.
\end{barticle}
\endbibitem

\bibitem[\protect\citeauthoryear{Donoho and Jin}{2004}]{donoho}
\begin{barticle}[author]
\bauthor{\bsnm{Donoho},~\bfnm{D.}\binits{D.}} \AND
  \bauthor{\bsnm{Jin},~\bfnm{J.}\binits{J.}}
(\byear{2004}).
\btitle{Higher criticism for detecting sparse heterogeneous mixtures}.
\bjournal{Ann. Statist.}
\bvolume{32}
\bpages{962--994}.
\MR{2065195}
\end{barticle}
\endbibitem

\bibitem[\protect\citeauthoryear{Donoho and Jin}{2008}]{feature}
\begin{barticle}[author]
\bauthor{\bsnm{Donoho},~\bfnm{D.}\binits{D.}} \AND
  \bauthor{\bsnm{Jin},~\bfnm{J.}\binits{J.}}
(\byear{2008}).
\btitle{Higher criticism thresholding: Optimal feature selection when useful
  features are rare and weak}.
\bjournal{Proceedings of the National Academy of Sciences}
\bvolume{105}
\bpages{14790--14795}.
\end{barticle}
\endbibitem

\bibitem[\protect\citeauthoryear{Durrett}{2004}]{durrett}
\begin{bbook}[author]
\bauthor{\bsnm{Durrett},~\bfnm{R.}\binits{R.}}
(\byear{2010}).
\btitle{{Probability: Theory and Examples}},
\bedition{4} ed.
\bpublisher{Cambridge University Press}.
\MR{2722836}
\end{bbook}
\endbibitem

\bibitem[\protect\citeauthoryear{Gibbs and Su}{2002}]{gibbssu}
\begin{barticle}[author]
\bauthor{\bsnm{Gibbs},~\bfnm{A.~L.}\binits{A.~L.}} \AND
  \bauthor{\bsnm{Su},~\bfnm{F.~E.}\binits{F.~E.}}
(\byear{2002}).
\btitle{On choosing and bounding probability metrics}.
\bjournal{International statistical review}
\bvolume{70}
\bpages{419--435}.
\end{barticle}
\endbibitem

\bibitem[\protect\citeauthoryear{Goeman and B{\"u}hlmann}{2007}]{goeman}
\begin{barticle}[author]
\bauthor{\bsnm{Goeman},~\bfnm{J.~J.}\binits{J.~J.}} \AND
  \bauthor{\bsnm{B{\"u}hlmann},~\bfnm{P.}\binits{P.}}
(\byear{2007}).
\btitle{Analyzing gene expression data in terms of gene sets: methodological
  issues}.
\bjournal{Bioinformatics}
\bvolume{23}
\bpages{980--987}.
\end{barticle}
\endbibitem

\bibitem[\protect\citeauthoryear{Ingster and Suslina}{2003}]{ingster}
\begin{bbook}[author]
\bauthor{\bsnm{Ingster},~\bfnm{Y.}\binits{Y.}} \AND
  \bauthor{\bsnm{Suslina},~\bfnm{I.~A.}\binits{I.~A.}}
(\byear{2003}).
\btitle{Nonparametric goodness-of-fit testing under Gaussian models}
\bvolume{169}.
\bpublisher{Springer Science \& Business Media}.
\MR{1991446}
\end{bbook}
\endbibitem

\bibitem[\protect\citeauthoryear{Jager and Wellner}{2007}]{jager}
\begin{barticle}[author]
\bauthor{\bsnm{Jager},~\bfnm{L.}\binits{L.}} \AND
  \bauthor{\bsnm{Wellner},~\bfnm{J.~A.}\binits{J.~A.}}
(\byear{2007}).
\btitle{Goodness-of-fit tests via phi-divergences}.
\bjournal{Ann. Statist.}
\bvolume{35}
\bpages{2018--2053}.
\MR{2363962}
\end{barticle}
\endbibitem

\bibitem[\protect\citeauthoryear{Lehmann and Romano}{2006}]{lehmann}
\begin{bbook}[author]
\bauthor{\bsnm{Lehmann},~\bfnm{E.~L.}\binits{E.~L.}} \AND
  \bauthor{\bsnm{Romano},~\bfnm{J.~P.}\binits{J.~P.}}
(\byear{2006}).
\btitle{Testing statistical hypotheses}.
\bpublisher{Springer Science \& Business Media}.
\MR{2135927}
\end{bbook}
\endbibitem

\bibitem[\protect\citeauthoryear{Ligo, Moustakides and
  Veeravalli}{2016}]{icassp}
\begin{binproceedings}[author]
\bauthor{\bsnm{Ligo},~\bfnm{J.~G.}\binits{J.~G.}},
  \bauthor{\bsnm{Moustakides},~\bfnm{G.~V.}\binits{G.~V.}} \AND
  \bauthor{\bsnm{Veeravalli},~\bfnm{V.~V.}\binits{V.~V.}}
(\byear{2016}).
\btitle{Rate analysis for detection of sparse mixtures}.
In \bbooktitle{2016 IEEE International Conference on Acoustics, Speech and
  Signal Processing (ICASSP)}
\bpages{4244-4248}.
\end{binproceedings}
\endbibitem

\bibitem[\protect\citeauthoryear{Mossel and Roch}{2015}]{mosselroch}
\begin{bincollection}[author]
\bauthor{\bsnm{Mossel},~\bfnm{E.}\binits{E.}} \AND
  \bauthor{\bsnm{Roch},~\bfnm{S.}\binits{S.}}
(\byear{2015}).
\btitle{Distance-based species tree estimation: information-theoretic trade-off
  between number of loci and sequence length under the coalescent}. In
\bbooktitle{Approximation, randomization, and combinatorial optimization.
              {A}lgorithms and techniques}. 
	\bseries{LIPIcs. Leibniz Int. Proc. Inform}.
	\bvolume{40}
	\bpages{931--942}.
	\bpublisher{Schloss Dagstuhl. Leibniz-Zent. Inform.},\baddress{Wadern,DE}.
\MR{3442006}
\end{bincollection}
\endbibitem

\bibitem[\protect\citeauthoryear{Poor}{1994}]{poor}
\begin{bbook}[author]
\bauthor{\bsnm{Poor},~\bfnm{H.~V.}\binits{H.~V.}}
(\byear{1994}).
\btitle{An Introduction to Signal Detection and Estimation},
\bedition{2} ed.
\bpublisher{Springer}, \baddress{New York, NY}.
\MR{1270019}
\end{bbook}
\endbibitem

\bibitem[\protect\citeauthoryear{Saligrama and Zhao}{2012}]{saligrama}
\begin{binproceedings}[author]
\bauthor{\bsnm{Saligrama},~\bfnm{Venkatesh}\binits{V.}} \AND
  \bauthor{\bsnm{Zhao},~\bfnm{Manqi}\binits{M.}}
(\byear{2012}).
\btitle{Local Anomaly Detection.}
In \bbooktitle{ 2012 International Conference on 
	Artificial Intelligence and Statistics (AISTATS)}
\bpages{969--983}.
\end{binproceedings}
\endbibitem

\bibitem[\protect\citeauthoryear{Walther}{2013}]{walther}
\begin{bincollection}[author]
\bauthor{\bsnm{Walther},~\bfnm{G.}\binits{G.}}
(\byear{2013}).
\btitle{The average likelihood ratio for large-scale multiple testing and detecting sparse mixtures}. In
\bbooktitle{From probability to statistics and back: high-dimensional models and processes}. 
	\bseries{Inst. Math. Stat. (IMS) Collect}.
	\bvolume{9}
	\bpages{317--362}.
	\bpublisher{Inst. Math. Statist.}, \baddress{Beachwood, OH}.
\MR{3202643}
\end{bincollection}
\endbibitem

\end{thebibliography}
\end{document}